\newcommand{\revision}[1]{#1}
\newcommand{\pc}{\mathcal{P}}
\newcommand{\bc}{\mathcal{B}}
\newcommand{\gc}{\mathcal{G}}
\newcommand{\scc}{\mathcal{S}}
\newcommand{\tsc}{\tilde{\mathcal{S}}}
\newcommand{\tby}{\tilde{\mathbf{y}}}
\newcommand{\ty}{\tilde{y}}
\newcommand{\by}{\mathbf{y}}
\newcommand{\Bm}{\mathbf{m}}
\newcommand{\bphi}{\mathbf{\Phi}}
\newcommand{\lc}{\mathcal{L}}
\newcommand{\uc}{\mathcal{U}}
\newcommand{\tr}{\text{Tr}}
\newcommand{\kp}{k^\prime}
\newcommand{\lp}{l^\prime}
\newcommand{\Ap}{A^\prime}
\newcommand{\Kket}[1]{\ket{#1}\rangle}
\newcommand{\Bbra}[1]{\langle \bra{#1}}
\newcommand{\BK}[1]{\langle \braket{#1} \rangle}
\newcommand{\mnorm}[1]{|| #1 ||_{\textrm{mix}}}
\newcommand{\unorm}[1]{|| #1 ||_{\textrm{uni}}}
\newtheorem{theorem}{Theorem}
\newtheorem{corollary}{Corollary}
\newtheorem{lemma}{Lemma}
\newtheorem{definition}{Definition}
\newtheorem{condition}{Condition}
\newtheorem{fact}{Fact}
\long\def\ca#1\cb{} %Use for commenting out: \ca...\cb
\begin{document}

\title{Universal framework for simultaneous tomography of quantum states and SPAM noise}

\author{Abhijith Jayakumar}
\email{abhijithj@lanl.gov}
\affiliation{Theoretical Division, Los Alamos National Laboratory, 87545, NM, USA}
\author{Stefano Chessa}
\email{schessa@illinois.edu}
\affiliation{Theoretical Division, Los Alamos National Laboratory, 87545, NM, USA}
\affiliation{NEST, Scuola Normale Superiore and Istituto Nanoscienze-CNR, I-56126, Pisa, Italy}
\affiliation{Electrical and Computer Engineering, University of Illinois Urbana-Champaign, Urbana, 61801, IL, USA}
\author{Carleton Coffrin}
%\email{}
\affiliation{Los Alamos National Laboratory, Los Alamos, 87545, NM, USA}
\author{Andrey Y. Lokhov}
% \email{lokhov@lanl.gov}
\affiliation{Theoretical Division, Los Alamos National Laboratory, 87545, NM, USA}
\author{Marc Vuffray}
%\email{}
\affiliation{Theoretical Division, Los Alamos National Laboratory, 87545, NM, USA}
\author{Sidhant Misra}
\email{sidhant@lanl.gov}
\affiliation{Theoretical Division, Los Alamos National Laboratory, 87545, NM, USA}

%\date{\today}

\begin{abstract}
We present a general denoising algorithm for performing \emph{simultaneous tomography} of quantum states and measurement noise. This algorithm allows us to fully characterize state preparation and measurement (SPAM) errors present in any quantum system. Our method is based on the analysis of the properties of the linear operator space induced by unitary operations. Given any quantum system with a noisy measurement apparatus, our method can output the quantum state and the noise matrix of the detector up to a single gauge degree of freedom. We show that this gauge freedom is unavoidable in the general case, but this degeneracy can be generally broken using prior knowledge on the state or noise properties, thus fixing the gauge for several types of state-noise combinations with no assumptions about noise strength. Such combinations include pure quantum states with arbitrarily correlated errors, and arbitrary states with block independent errors. This framework can further use available prior information about the setting to systematically reduce the number of observations and measurements required for state and noise detection. Our method effectively generalizes existing approaches to the problem, and includes as special cases common settings considered in the literature requiring an uncorrelated or invertible noise matrix, or specific probe states.

\end{abstract}

\maketitle

\section{Introduction}
Quantum computing promises to have the potential to solve complex problems that are beyond the reach of classical computers \cite{Harrow2017, aaronson_chen2016, PhysRevA.88.022316, Yung2018}, but realizing this full potential requires overcoming the various challenges posed by noise \cite{Preskill_2018, 9719705, chen2022}. %Because of the fundamental laws of quantum mechanics errors affect quantum devices \cite{PhysRevA.51.992, RevModPhys.82.1155, gardiner2004quantum} and that must be overcome in order to perform accurately the tasks that such devices are built for. 
These errors can arise from a number of sources, including noise in the specific hardware architectures \cite{Ladd2010, Buluta_2011, Saffman_2016, Bruzewicz2019, Slussarenko2019, Henriet_2020, Kinos2021, Chaurasiya2022}, inaccuracies in control and limitations in the operations that can actually be performed on such systems \cite{Almudever2017, Vandersypen2017, Reilly2019, Corcoles2020, Leon2021}.  

To tame these errors, researchers have developed various and still growing number of approaches and strategies that can be included in the macro-categories of quantum error correction \cite{Shor1996, Gottesman1997, Cory1998, Preskill1998, Knill2000, Chiaverini2004, lidar_brun_2013, Campbell2017, Roffe2019}, quantum error mitigation \cite{Temme2017, Endo2018, Kandala2019, Takagi2022, Cai2022}, and noise learning, which includes specific techniques such as, among others, quantum process tomography \cite{Poyatos1997, Chuang1997, Mohseni2008, Merkel_QProcTom_2013}, gate set tomography \cite{Greenbaum2015, Nielsen_2021} and randomized benchmarking \cite{Knill2008, Magesan2011, Magesan2012, Helsen2022}. 

Among these sources of noise, state preparation and measurement (SPAM) errors can prove to be particularly significant. As an example for the current best superconducting qubit-based devices, they can be in the range 1-3\%, see e.g. \cite{Arute2019, Elder2020, Opremcak2021}. These errors occur when the initial state of a quantum system and/or the measurement of its final state are not precisely known or controlled. SPAM errors can result in systematic biases that can greatly impact the accuracy of quantum information processing in noisy devices both in quantum error correction and in the so-called  ``noisy intermediate scale quantum'' (NISQ) tasks see e.g. \cite{Arute2019, Wright2019, Ryan2021}. SPAM errors are the focus of this paper, specifically, we address the issue of the simultaneous correct identification of the (possibly arbitrarily correlated and of arbitrary strength) noise affecting detectors after the preparation of a state $\rho$ \textit{and} the correct identification of $\rho$ itself. Despite these two tasks being some of the most fundamental operations one could imagine for quantum information processing, their simultaneous realization is hindered by the fact that state preparation and measurement noise matrix can be determined only up to a gauge transformation \cite{Gauge1, Gauge2, Blume2013}. This fact presents a severe limitation  for state tomography and noise characterization as the knowledge of the real underlying noise process is essential for diagnostics and the optimization of the device. To address this in recent years some attempts have been made to develop techniques that resolve these kinds of gauge degeneracy \cite{ Di_Matteo_2020, Lin_SPAM_2021, Laflamme2022, lin2019freedom}. 

Our work presents a significant contribution in this direction: we provide a general framework for identifying conditions under which noise models and prepared states can break the gauge freedom, which includes as special cases many previously proposed approaches. We achieve this by introducing a denoising algorithm that can simultaneously estimate both the state of the system and detector noise up to a single gauge parameter. The output of this algorithm gives a complete characterization of the SPAM errors in the system as it gives the maximum possible information about the true state prepared in the system and the stochastic matrix governing the measurement noise. 

The main contributions of this work are as follows: First, we completely characterize the gauge freedom in our problem, and prove that the simultaneous characterization of state and noise is only hindered by a single gauge parameter. Next, we give a general algorithm to simultaneously estimate a quantum state and any stochastic matrix characterizing SPAM errors in a quantum system, up to this unavoidable gauge parameter. We also outline methods using which this gauge can be fixed given many forms of prior information about the state or the noise matrix, including practically relevant cases, such as states with known purity, independent ancilla qubits, and known expectation values. To address more practical settings, we devise a randomized version of our algorithm that uses computational basis measurements that only involves the application of Clifford circuits. Finally, we also provide a sample complexity analysis of our algorithm and show that the number of samples required depends naturally on the distance of the state and the noise matrix from a maximally mixed case. 
The paper is structured as follows:

\begin{itemize}
    \item In Sec.~\ref{sec:problem_statement}: we state the problem of noise-state \textit{simultaneous tomography}, set the notation, and discuss the gauge freedom intrinsic in the problem. Here, we prove that the problem has only a single gauge degree of freedom.
    \item In Sec.~\ref{sec: Simultaneous tomography}: we outline the noise-state simultaneous tomography algorithm for any POVM. We also show the special case of the algorithm using computational basis measurements and derive the sample complexity of the randomized version. We support our analysis of the randomized algorithm using numerical results that show the tightness of our analysis.
    \item In Sec.~\ref{sec:fixing}: we show how prior knowledge about the system can be used to fix the gauge and also to improve the algorithm in terms of resource efficiency.
    \item In Sec.~\ref{sec: discussion}: we draw the conclusions and discuss the perspectives of this work.
\end{itemize}
The summary of this structure and our approach is provided in Fig.~\ref{fig:summary}.

\begin{figure*}[!htb]
    \includegraphics[width=\textwidth]{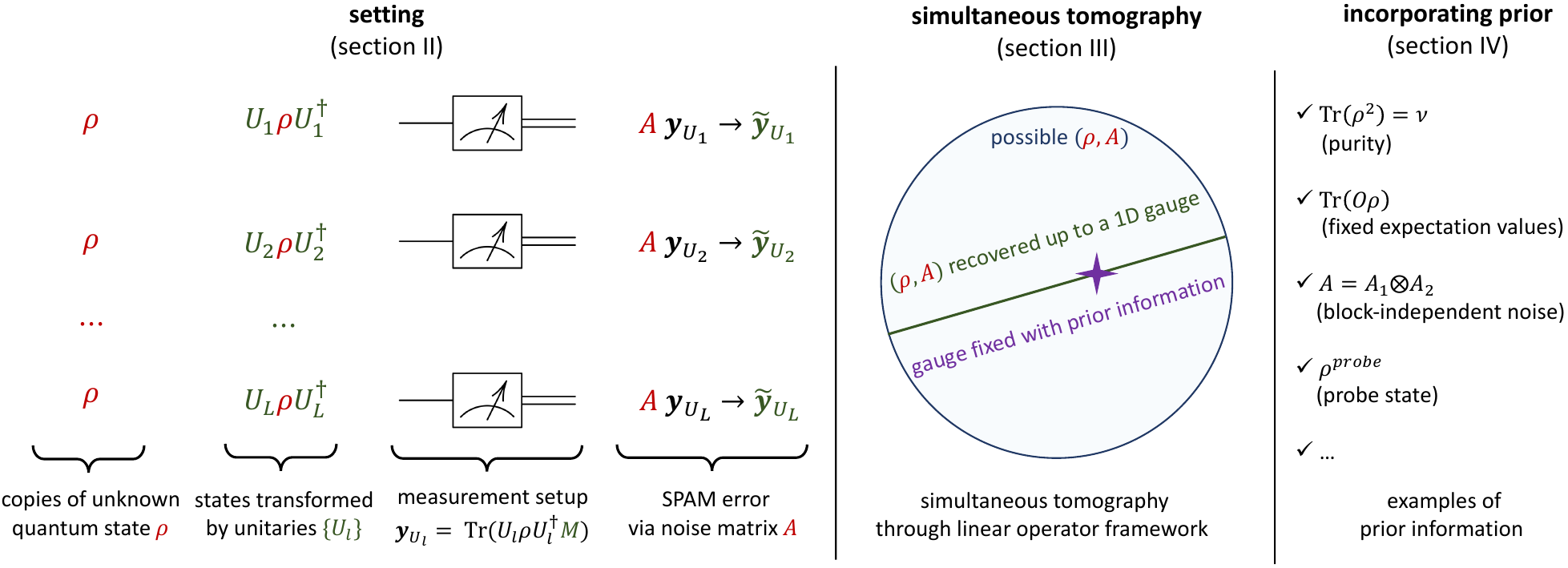}
    \caption{{\bf Summary of our approach to simultaneous tomography of quantum states and SPAM noise.} In section II of the paper, we state the problem of \emph{simultaneous tomography}: the process of estimating the quantum state ($\rho$) and the noise matrix associated with measurement errors ($A$) using a unified set of measurements. In section III, we introduce a universal algorithm for performing the simultaneous tomography in full generality up to the fundamental and unavoidable gauge ambiguity, prove that this degeneracy is one-dimensional, and discuss the sample-complexity of our algorithm. Finally, in section IV, we provide many examples of prior information about either the state or the noise that allows one to unambiguously recover the quantum state and the noise matrix. These examples include many settings considered in previous work.}
    \label{fig:summary}
\end{figure*}

\section{Problem statement, setting, and notation}  \label{sec:problem_statement}

\subsection{The problem of simultaneous tomography}
We consider the problem of fully characterizing persistent errors as well as recovering the underlying quantum state in a quantum system affected by imperfect state preparation protocols or measurement errors.
The quantum system consisting of $n$ qubits is prepared in a state $\rho$ and measured using a general \emph{Positive Operator Valued Measure} (POVM) \cite{nielsen2002quantum}. Given the $n-$qubit POVM, $\{M_k| \quad k \in [D],~ M_k  \succcurlyeq 0, ~\sum_k M_k = I \},$ we define the measurement probabilities obtained after applying a unitary transformation ($U$) to the quantum state  as follows,
\begin{align}
\label{eq:yu_def}
    y_k(U) :=  \tr (U \rho U ^\dagger  M_k), \quad k=1,\ldots,D.
\end{align}
Now we model the measurement noise in the quantum system as a general stochastic matrix, $A$, acting on the probability distributions defined in \eqref{eq:yu_def}.
\begin{equation}
\tilde{y}_k(U) :=  \sum_{k^\prime \in [D]} A_{k\, k^\prime}~ y_{k^\prime}(U).
\end{equation}
We will simplify the notation to  $y_k$ and $\tilde{y}_k$ in the case where $U$  is just the identity transformation. This transition matrix ($A$) model is quite universal as \revision{ it models a general measurement error one can have assuming that this error is independent of the operations performed on the computer before measurement. This is a commonly assumed simplification in the SPAM literature \cite{Lin_SPAM_2021, Laflamme2022, Gauge1}. We also assume that any unitary $U$ in the above equations can be applied without any errors.}

The main aim of this work will be to use these types of noisy measurements to fully characterize the system and the noise. We refer to this task as  \emph{simultaneous tomography}, which can be defined as the problem of designing a set of unitary  operators $U_1, \ldots, U_l$ that are efficiently implementable on a given quantum system, and a procedure that uses the noisy measurements $\tilde{y}_k(U_1), \ldots, \tilde{y}_k(U_l)$ along with prior information about the system  to estimate the state $\rho$ and the matrix of measurement noise $A$. 

Simultaneous tomography is directly related to SPAM error characterization. The recovered state $\rho$ can be compared with the state that was intended to be prepared and the state preparation error rates can be computed from their difference \cite{Lin_SPAM_2021}. While the noise matrix $A$ represents the errors in measurement device.\\

The number of gates, measurements, and classical processing required to perform simultaneous tomography is expected to be larger than those for noiseless state tomography. The exact overhead depends on the structure of the noise, the underlying state, and access to prior knowledge of the state and noise model. In practice, the best choice of gates $U_1, \ldots U_L$ will depend on which gates are native and least noisy for the specific quantum system in consideration. 

\subsection{Linear operator framework} \label{sec:linear_operator}
The process of simultaneous tomography consists of two steps: (i) implementing a set of chosen unitaries on the quantum system and obtaining the corresponding noisy measurements, and (ii) performing a set of classical post-processing computations on the measurements to obtain the estimates of the state and measurement noise. By considering only linear classical post-processing, the overall procedure can be viewed as a linear transformation on the underlying state which we describe below. 

In a quantum system, the action of any unitary on a state ($\rho \rightarrow U \rho U^\dagger$) can be represented by a linear \textit{superoperator}. To demarcate between operators and superoperators, we use the standard notation of $\Kket{\rho}$ for the $4^n$-dimensional vector representing $\rho$ in the space acted on by superoperators \cite{Nielsen_2021}. Other objects, such as the $2^n \times 2^n$ identity matrix $I$ or POVM operators, can be represented by a $4^n$-dimensional vector in a similar way. Naturally, \revision{for two operators $P$ and $Q$ the inner-product  $\BK{P|Q}$ is defined as $\BK{P|Q} = \tr{(P^\dagger Q)}$}. 

\revision{It is advantageous to isolate the action of a unitary on the traceless subspace of operators},
\begin{equation}
    U \rho U^\dagger  = \Phi(U) \Kket{\rho}  = \frac{\Kket{\hat{I}}}{2^{n/2}} + \phi(U) \Kket{\bar{\rho}} \; .
\end{equation}
Here, $\Phi(U)$ is the complete superoperator corresponding to the action of $U$, $\phi(U)$ is the traceless part of this superoperator. We also define $\hat{I} = I /2^{n/2}$,  the normalized identity matrix, as well as $\Kket{\bar{\rho}} = \Kket{\rho} - \Kket{\hat{I}}/2^{n/2}$ for the $4^n$-dimensional vector representing the traceless part of $\rho$.

In this notation, the noisy measurements take the form, 
\begin{equation}\label{eq:basis_free}
\tilde{y}_k(U) =  \sum_{k^\prime \in [D]} A_{k\,k^\prime}  \left( \frac{\BK{M_{\kp}|\hat{I}}}{2^{n/2}} + \Bbra{M_{\kp}}\phi(U) \Kket{\bar{\rho}} \right).
\end{equation}
In general, this expression can be expanded using any basis in the traceless subspace. Let $\mathcal{B}_L$ and $\mathcal{B}_R$ be two sets of traceless and hermitian operators. \revision{Further, assume that $\mathcal{B}_R$ is a normalized set of operators (i.e. $\BK{P|P} = 1$) whose linear span is the space of all traceless hermitian operators. Also, assume that the measurement operators lie in the linear span of $\mathcal{B}_L$}. Then we can always expand the state in one of the basis sets, and the measurement operators in the other as follows:
\begin{align}
    \Kket{\bar{\rho}} &= \sum_{P \in \bc_R} s_P \Kket{P} \; , \\ 
    m_{k\, I} &= \BK{M_{k}|\hat{I}}, \quad m_{k\,Q} = \BK{M_{k}|Q} \; , \ Q \in \bc_L \; .
\end{align}
Using these relations we can expand \eqref{eq:basis_free} in this specific basis as, 
\begin{align} %  \label{eq:y_linear_operator}
\ty_k(U) = \sum_{\kp  \in [D]} &\dfrac{A_{k\, \kp } m_{\kp I}}{2^{n/2}} \nonumber ~~+ \\ &\sum_{\substack{\kp \in [D], \\ P \in \bc_R, Q \in \bc_L} } \hspace{-0.3cm} s_P \, \phi(U)_{P\,Q} \, A_{k\, \kp}   m_{\kp\, Q} \;.   \label{eq:noisy_observations}
\end{align}  
\vspace{-0.05in}
 Here $\phi(U)\Kket{P} =  \sum_{Q \in \mathcal{B}_L} \phi(U)_{P\,Q} \Kket{Q}  + \Kket{b^\perp_L}$, where $\Kket{b^\perp_L}$ is an operator orthogonal to every operator in $\mathcal{B}_L$.

As an example of the setup described above, take the set of all $n$-qubit normalized Pauli strings, $\hat{\pc} \equiv \{\frac{I}{\sqrt2}, \frac{X}{\sqrt2},\frac{Y}{\sqrt2},\frac{Z}{ \sqrt2} \}^{\otimes n}$. We can take the basis sets to be the traceless operators in this set, $\bc_L = \bc_R = \hat{\pc} \setminus \{ \hat{I} \}.$ In this case, \revision{ the matrix $\phi(U) \in \mathbb{R}^{4^n-1 \times 4^n-1}$} will just be the well-known Pauli Transfer Matrix representation for $U.$ And $s_P$ would simply be expectation values of the state with the  Pauli strings \cite{Lin_SPAM_2021}.

\begin{centering}
\begin{tcolorbox}[enhanced jigsaw,
                  colback=gray!6,
                  colframe=black,
                  width=8.5cm,
                  arc=1mm, auto outer arc,
                  boxrule=0.5pt,
                 ]
{\bf Running example:}
To illustrate the ideas in this paper we will use an example of a noisy two-qubit system. The same system will be used throughout the paper at various points as a pedagogical tool.
\\

Consider a $2$-qubit system with $\rho = \ket{01}\!\!\bra{01} = \frac{(I+Z)\otimes(I-Z)}{4}$. If $\bc_R$ is the normalized Pauli basis, then the non-zero coefficients are $s_{I\otimes I},s_{Z\otimes I} = \frac12, ~~s_{I\otimes Z}, s_{Z\otimes Z} = -\frac{1}{2}.$ For measurements in the computational basis the noise matrix is a $4\times 4$ matrix which we take to be $A = (0.9I + 0.1 X)^{\otimes 2}.$
\end{tcolorbox}
\end{centering}

To perform simultaneous tomography, the noisy measurements $\ty_k(U)$ are passed through \emph{linear classical post-processing} where we compute linear combinations 
\begin{align}   \label{eq:defining_z}
    z_k = \sum_{l} c_l \ty_k(U_l) \; , \quad \mbox{where} \ \sum_l c_l = 1 \; . 
\end{align}
Using \eqref{eq:basis_free} the quantities $z_k$  \revision{can be expressed in a basis independent fashion as,}
\begin{align}
\label{eq:z_def}
z_k = \hspace{-0.06in} \sum_{\kp \in [D]}& \frac{A_{k \, \kp} m_{\kp \, I}}{2^n}  \nonumber \\ &+\hspace{-0.06in} \sum_{\kp \in [D]} \hspace{-0.06in} A_{k \, \kp}\BK{ M_{\kp}|\left(\sum_l c_l \phi(U_l)\right)| \bar{\rho}}.
\end{align}
Thus computing the quantities $z_k$ can be viewed as applying the \emph{effective} non-unitary linear transformation 
\begin{align}   \label{eq:linear_operators}
    \Phi = \sum_l c_l \Phi(U_l)
\end{align}
on the state $\rho$ and then obtaining noisy measurements. The affine constraint ($\sum_l c_l =1$) makes these transformations trace-preserving.

In the context of simultaneous tomography, the following two points are important regarding these linear operators. First, which of these linear transformations are sufficient for successfully performing simultaneous tomography? Second, what set of unitaries is required for efficiently implementing the linear transformations \eqref{eq:z_def}?

To this end, let $\mathcal{U}(2^n)$ be the unitary group on $n$ qubits. For a chosen subset, $\mathcal{S} = \{U_1, \ldots, U_l\} \subseteq \uc(2^n)$, the overall computational power of implementing them on the quantum system and performing classical linear post-processing can be summarized by the linear operator space defined by. 
\begin{align}   \label{eq:linear_operator_space}
    \lc(\mathcal{S}) = \left\{\sum_{l}c_l \Phi(U_l) \, \middle\vert \, U_l \in \mathcal{S},  \ \sum_{l} c_l = 1 \right\}.
\end{align}

Using the linear operator space allows us to view the requirements of a given simultaneous tomography task in a general way without reference to a chosen basis or a given set of gates. If it is determined that a such subset of unitaries is sufficient for a given simultaneous tomography task, then depending on the quantum system there may be multiple ways of realizing this subset.  \revision{Note that $\mathcal{L}$ does not correspond to any single quantum channel, rather it represents a set of quantum channels.}

We call a set of unitaries $S_g \subseteq \uc(2^n)$ the \emph{generator set} for a given subset $S \subseteq \uc(2^n)$ \revision{if $S_g$ has fewer elements than $S$}, and $\lc(S) \subseteq \lc(S_g)$. Notice that there might be multiple ways to choose $S_g$ given $S$. The appropriate choice  will depend for example on what set of gates are least noisy and natively available on a given quantum architecture and how much classical post-processing power is available. Working with the super-operator space $\lc(S)$ allows us to separate \emph{what} is needed to perform simultaneous tomography and \emph{how} to realize it with a given quantum device and classical processing resource.

For instance,  we will show that using a complete set of superoperators $\lc(\uc(2^n))$ is sufficient to perform simultaneous tomography. But for performing simultaneous tomography in the computational basis,  we can also use a smaller subset of the Clifford group as a generator set for this super-operator space.\revision{Using only $\text{CNOT}$ and arbitrary single qubit gates, this generator set can be implemented with circuits of linear depth \cite{maslov2018shorter}}. We will also discuss some cases of performing this task in the presence of prior information where a limited subset $\lc(S) \subset \lc(\uc(2^n))$ is sufficient.

\subsection{Identifiability for the simultaneous tomography problem}
Given noisy observations of the form in \eqref{eq:noisy_observations}, the pertinent question is whether simultaneous tomography of $\rho$ and $A$ is even possible without any additional information? Interestingly, the answer is ``no" in the most general case. To see this, consider a one-parameter family of transformations on the state and noise channel defined as follows,
\begin{multline}
\label{eq:gauge}
 \revision{  A_{k \,\kp} \rightarrow A^\prime_{k\,\kp}(\alpha) =  \alpha A_{k\,\kp} + (1 - \alpha) \frac{\sum_{j \in [D]}A_{k\, j} m_{j\,I}}{2^{n/2}} \;} , \\ 
   \rho \rightarrow \rho^\prime(\alpha) =   \frac{\rho}{\alpha} + \left(1 - \frac{1}{\alpha} \right) \frac{I}{2^n} \; ,~~\alpha \in \mathbb{R}\setminus\{0\} \; .
\end{multline}

By simple algebra, we can check that this simultaneous transformation of the state and noise will leave the noisy outputs in \eqref{eq:noisy_observations} invariant thus leaving us with no means to distinguish between them. In literature, this kind of invariance has been called \emph{gauge freedom} \cite{Gauge1, Gauge2, Cai2022}. The gauge freedom implies that any simultaneous tomography method will have at least a one-parameter ambiguity. \revision{These gauge transformations represent a one-parameter manifold in the $(\rho,A)$ space. While the transformations are mathematically well defined for any non-zero $\alpha$, the set of physically allowed $\alpha$ will be those such that $\rho'(\alpha), A'(\alpha)$ are respectively valid density and stochastic matrices. But even these physical constraints cannot unambiguously fix $\alpha$ in general. }

\revision{The gauge freedom can be viewed as the inability to separate whether the randomness in the observations comes from the random nature of quantum measurements or if it is a product of classical noise. An extreme example is as follows; suppose we are given a single qubit state and a noisy measurement apparatus. Suppose we also observe that when this qubit is measured in the computational basis after applying any $U$, both $1$ and $0$ are seen with equal probability. Given such a system there is no way to distinguish whether the state is maximally mixed or whether it is the measurement device that is completely noisy. However, if we have prior information (confidence about the state preparation itself) that the state is pure, we can ascertain that the randomness came from the measurement device.  The gauge freedom in the simultaneous tomography problem generalizes this inherent ambiguity in the problem.}

The question remains whether there are other transformations that also leave \eqref{eq:noisy_observations} invariant. The theorem below shows that the transformation in \eqref{eq:gauge} represents the only possible ambiguity in the problem.

\begin{theorem}{\bf Gauge freedom is the only ambiguity.}\\
\label{thm:degeneracy_char}
Let $\tby_{A,\rho}(U)$ be the noisy measurement distribution produced by the quantum state $U\rho U^\dagger,$ with the noise characterized by $A$, as in \eqref{eq:noisy_observations}. If for another system in a state $\rho^\prime$ with noisy measurements characterized by $A^\prime,$ it is given that $\tby_{A,\rho}(U) = \tby_{A^\prime,\rho^\prime}(U),~ \forall~ U \in \mathcal{U}(2^n)$,  then there must exist a gauge parameter $\alpha \in \mathbb{R} \setminus \{ 0 \}$ such that \eqref{eq:gauge} holds.
\end{theorem}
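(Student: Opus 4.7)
The plan is to reduce the problem to one about matrix coefficients of an irreducible representation of $\uc(2^n)$ and then invoke Schur orthogonality. First I would absorb the noise matrix into the POVM by defining the effective measurement operators $\tilde{M}_k := \sum_{\kp} A_{k\,\kp} M_{\kp}$, so that \eqref{eq:basis_free} reads simply $\ty_k(U) = \tr(\tilde{M}_k\, U \rho U^\dagger)$. The hypothesis $\tby_{A,\rho}(U) = \tby_{A^\prime,\rho^\prime}(U)$ for all $U$ thus becomes
\begin{equation*}
\tr\!\left(\tilde{M}_k U \rho U^\dagger\right) \;=\; \tr\!\left(\tilde{M}^\prime_k U \rho^\prime U^\dagger\right), \quad \forall\, U \in \uc(2^n),\ \forall k.
\end{equation*}
Decomposing each effective POVM element as $\tilde{M}_k = a_k I + \tilde{m}_k$ with $\tilde{m}_k$ traceless, and splitting $\rho = I/2^n + \bar\rho$, I would average both sides over the Haar measure on $\uc(2^n)$. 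Since the adjoint action has no invariant vectors on the traceless subspace, the bilinear piece vanishes under the average, which forces $a_k = a^\prime_k$ for every $k$.

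What remains is the identity
\begin{equation*}
\tr\!\left(\tilde{m}_k\, U \bar\rho\, U^\dagger\right) \;=\; \tr\!\left(\tilde{m}^\prime_k\, U \bar\rho^\prime\, U^\dagger\right) \quad \forall\, U,k.
\end{equation*}
The key step is to recognize each side as a matrix coefficient $\BK{\tilde{m}_k|\phi(U)|\bar\rho}$ of the adjoint representation $\phi(U): X \mapsto U X U^\dagger$ on the space of traceless Hermitian operators. This representation is irreducible (it is the standard adjoint representation of $SU(2^n)$), so by Schur orthogonality its distinct matrix entries are linearly independent functions on $\uc(2^n)$; two matrix coefficients $\BK{v_1|\phi(U)|w_1}$ and $\BK{v_2|\phi(U)|w_2}$ can therefore agree for all $U$ only when the associated rank-one operators $\Kket{w_1}\!\Bbra{v_1}$ and $\Kket{w_2}\!\Bbra{v_2}$ coincide. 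This yields, for each $k$ with $\tilde{m}_k \neq 0$ and $\bar\rho \neq 0$, a nonzero scalar $\alpha_k$ such that $\tilde{m}^\prime_k = \alpha_k \tilde{m}_k$ and $\bar\rho^\prime = \bar\rho/\alpha_k$. Since $\bar\rho^\prime$ does not depend on $k$, all nontrivial $\alpha_k$ must collapse to a common value $\alpha$; the degenerate cases ($\bar\rho = 0$, i.e.\ $\rho$ maximally mixed, or every $\tilde{m}_k = 0$) would be handled separately by the same irreducibility argument and are easily seen to be consistent with the gauge statement for any admissible $\alpha$.

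Finally, I would translate these relations back to the pair $(\rho, A)$. The identity $\bar\rho^\prime = \bar\rho/\alpha$ is immediately equivalent to the transformation $\rho^\prime(\alpha) = \rho/\alpha + (1 - 1/\alpha) I/2^n$ of \eqref{eq:gauge}. Combining $a^\prime_k = a_k$ with $\tilde{m}^\prime_k = \alpha \tilde{m}_k$ gives $\tilde{M}^\prime_k = \alpha \tilde{M}_k + (1-\alpha) a_k I$; expanding $I = \sum_j M_j$ and using $a_k = \sum_j A_{k\,j} m_{j\,I}/2^{n/2}$, together with the (implicit) linear independence of the POVM elements $\{M_j\}$, recovers exactly the expression for $A^\prime_{k\,\kp}(\alpha)$ in \eqref{eq:gauge}. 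The hard part will be the Schur-theoretic step: establishing that equality of these bilinear traces for \emph{all} $U \in \uc(2^n)$ leaves precisely one scalar degree of freedom linking the pairs $(\tilde{m}_k, \bar\rho)$ and $(\tilde{m}^\prime_k, \bar\rho^\prime)$, uniformly in $k$, with the edge cases requiring a separate but analogous treatment.
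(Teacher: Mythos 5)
Your proposal is correct and reaches the gauge relation by a genuinely different route from the paper. The paper's proof (Appendix B) is constructive: it first establishes a completeness theorem for the affine span $\lc(\uc(2^n))$ of the superoperators $\Phi(U)$ by explicitly building eliminators $E_P$ as Pauli twirls and permutation operators $U_{PQ}$ as Clifford circuits, then uses the resulting canonical operators $E^{\alpha}_{PQ}$ to project the hypothesis $\tby_{A,\rho}(U)=\tby_{A',\rho'}(U)$ onto individual matrix entries, and finishes with a left-null-space computation for the POVM coefficient matrix $\Bm_{\setminus I}$ (Lemma~\ref{lem:left_null_space}). You instead absorb $A$ into effective measurement operators, extract the trace parts by Haar averaging to get $a_k=a'_k$, and invoke irreducibility of the conjugation action on the traceless subspace together with Schur orthogonality to conclude that equality of the matrix coefficients forces equality of the rank-one operators $\Kket{\bar{\rho}}\Bbra{\tilde{m}_k}=\Kket{\bar{\rho}'}\Bbra{\tilde{m}'_k}$; the single scalar $\alpha$ then comes out uniformly in $k$ because $\bar{\rho}'$ is $k$-independent, which is a cleaner way of obtaining the consistency across $P$ that the paper leaves implicit. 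Both arguments rest on the same structural fact---conjugation by $\uc(2^n)$ decomposes the operator space into exactly two invariant pieces, the span of $I$ and its orthogonal complement---but yours is shorter and non-constructive, whereas the paper's explicit eliminators are precisely what power Algorithm~\ref{alg:general_decoding} and the randomized measurement scheme. Your final translation back to $A'$ needs the linear independence of the $\{M_j\}$ (Condition~\ref{cond3}), exactly as the paper needs it in Lemma~\ref{lem:left_null_space}, so you should state that hypothesis rather than leave it implicit.

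Two small caveats, both shared with the paper's own proof. First, the step producing a single nonzero $\alpha$ tacitly assumes $\bar{\rho}\neq 0$ and that $A$ is not the erasure channel; if $\rho$ is maximally mixed, the observations constrain $(\rho',A')$ only through $a'_k=a_k$ and no single $\alpha$ satisfying \eqref{eq:gauge} need exist, so your remark that the degenerate cases are ``consistent with the gauge statement'' should be replaced by an explicit appeal to Conditions~\ref{cond1} and~\ref{cond2}. Second, when invoking Schur orthogonality you should note that you are working with the complexified adjoint representation of $SU(2^n)$ (irreducible because $\mathfrak{su}(2^n)$ is simple), since the orthogonality relations are standardly stated for complex irreducibles.
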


The proof of this theorem rests on the fact that when we have access to all possible unitary gates in $\uc(2^n)$, the induced linear operator space $\lc(\uc(2^n))$ defined in \eqref{eq:linear_operator_space} is \emph{complete}. The precise statement of this completeness result is given in Appendix \ref{app:PTM}. The full proof of the theorem can be found in Appendix \ref{app:gauge}.

This gauge ambiguity can be overcome if we have some prior information about the system that uniquely identifies the correct $\rho$ and $A$ from the one-parameter family in \eqref{eq:gauge}. In Sec. \ref{sec:fixing}, we show multiple, physically and operationally relevant cases of prior information that can fix this gauge.

\section{Simultaneous tomography: conditions and algorithm}\label{sec: Simultaneous tomography}

In this section, we will demonstrate how noisy measurements generated according to \eqref{eq:noisy_observations}, can be used to reconstruct both $\rho$ and $A$ up to a single gauge parameter. As in the case of noiseless tomography, simultaneous tomography can also be performed by using measurement outcomes produced by observing the state after rotating it using a set of pre-defined unitary operations.

\subsection{Sufficient conditions for simultaneous tomography}
Beyond the gauge degree of freedom, few edge cases can make simultaneous tomography impossible.  For instance, if the $A$ matrix always outputs the uniform distribution in  $D$ dimensions, then we can never recover the exact state $\rho$ from the noisy measurement outcomes. To avoid these types of pathological cases we assume that the output of $A$ always has some correlation with the input:
\begin{condition}
\label{cond1}
$\exists k,i,j \in [D], ~\text{such that}~ A_{ki} \neq A_{kj}.$
\end{condition}
% \begin{equation}\label{eq:cond1}
%    \textbf{(Condition 1)}~~\exists k,i,j \in [D], ~\text{such that},~ A_{ki} \neq A_{kj} \; .
% \end{equation}
If this condition does not hold, then the probability of observing a certain output conditioned on an input, $Pr(k|\kp) = A_{k \, \kp},$ would be independent of the input. We call such an $A$ the \emph{erasure channel}.

Similarly, simultaneous tomography is impossible if $\rho$ is a \emph{maximally mixed state} ($\rho \propto I$). This would imply that $U \rho U^\dagger = \rho$, and full information about $A$ would not be recoverable from noisy measurements defined in \eqref{eq:noisy_observations}. To avoid this case we must assume that the state has some non-zero overlap with the space of traceless operators  i.e. at least one of the $s_P$ coefficients is non-zero
\begin{condition}
\label{cond2}
$\exists P \in \bc_R, ~\text{such that},~s_P \neq 0.$
\end{condition}
% \begin{equation}\label{eq:cond2}
%    \textbf{(Condition 2)}~~ \exists P \in \bc_R, ~\text{such that},~s_P \neq 0.
% \end{equation}
Additionally, we also require the set of measurement operators to be linearly independent:
\begin{condition}
$\{M_i|i \in  [D] \}$ are linearly independent.
\label{cond3}
\end{condition}
% \begin{equation}\label{eq:cond3}
%    \textbf{(Condition 3)}~~\{M_i | i \in  [D]  \}~\text{ is linearly independent}.
% \end{equation}
If this condition is not satisfied, then the definition of the measurement operators are non-unique and it is impossible to reconstruct $A$. But given a linearly dependent POVM, we can always construct a reduced set from it such that this new POVM is linearly independent (see Appendix \ref{app:new_POVM}).

To describe our algorithm, in  \ref{subsec:no_shoterror}, we assume that the noisy measurement probabilities are directly available to us, i.e., for any $U$ the distribution $\tby(U)$ is fully specified. We will discuss the more practical variant of our algorithm with finite measurement shots and randomized measurements in \ref{subsec:shoterror}.
\subsection{Simultaneous tomography algorithm}
\label{subsec:no_shoterror}
The algorithm relies on the \emph{completeness} of the linear operator space used in the proof of Theorem~\ref{thm:degeneracy_char}. The proof is a constructive one and naturally leads to the algorithm described in this section. 

While simultaneous tomography can be performed on any basis in the operator space, we find that the presentation of the algorithm simplifies considerably if we fix $\bc_L$ to be the traceless POVM operators,
\begin{equation}
    \bc_L = \{\bar{M}_i| i \in [D] \},
\end{equation}
where, $\bar{M}_i = M_i - \BK{M_i|I} \frac{I}{D}.$

If $\bc_L$ does not span the space of traceless operators, there will be a space orthogonal to it which is unobservable by the POVM. We denote this orthogonal space by $\bc^{\perp}_L.$ As an example, if $\bc_L$ is given by the traceless computational basis measurement operators, then $\bc^{\perp}_L$ will span the space of all off-diagonal operators:
\begin{equation}
    \bc_L^\perp = \{Q|\tr(Q) = 0,~ \BK{Q|Q^\prime} = 0~\forall Q^\prime \in \bc_L,~ Q = Q^\dagger\}.
\end{equation}
Notice that while $\bc_L$ is a basis set, $\bc^\perp_L$ is a vector space.

\begin{centering}
\begin{tcolorbox}[enhanced jigsaw,
                  colback=gray!6,
                  colframe=black,
                  width=8.5cm,
                  arc=1mm, auto outer arc,
                  boxrule=0.5pt,
                 ]
{\bf  Running example:}
For the two qubit system measured in the computational basis $\bc_L = \{\ket{00}\!\!\bra{00} - \frac{I}{4}, \; \ket{01}\!\!\bra{01} - \frac{I}{4},\; \ket{10}\!\!\bra{10} - \frac{I}{4}, \; \ket{11}\!\!\bra{11} - \frac{I}{4}\}.$ Since this spans all tracelss  diagonal operators, $\bc^{\perp}_L$ is the set of  all off-diagonal $2$-qubit operators.
\end{tcolorbox}
\end{centering}

A key step in the algorithm is the construction of a set of  canonical super-operators. The first one is $E_{I}$, which is a trace-preserving superoperator that effectively eliminates all operators in $\bc_R$
\begin{align}
\label{eq:EI}
&E_{I} \Kket{I} = \Kket{I} , ~E_{I} \Kket{P^\prime} \in \bc_L^{\perp}~~~~ \forall  P^\prime \in \bc_R \; . 
\end{align}

Then we define a set of trace-preserving canonical super-operators that effectively maps a specific operator in $\bc_R$ to a specific operator in $\bc_L$. For any $P \in \bc_R$ and $\Bar{M}_i \in \bc_L$
\begin{align}
\label{eq:EPi}
&E_{P,i} \Kket{I} = \Kket{I} ,~E_{P,i} \Kket{P} -  \Kket{\Bar{M}_i} \in \bc_L^\perp , \nonumber \\
&E_{P,i} \Kket{P^\prime} \in \bc_L^{\perp}~~~~ \forall  P^\prime \in \bc_R \setminus \{ P\} \; .
\end{align}

These mappings are ``effective", as they always have some component in $\bc^\perp_L$, which we have left uncharacterized in the above definitions. But this is inconsequential as these components are not observed by the POVM. In what follows, we refer to $E_{I}$ and $E_{P,i}$ as to the \emph{eliminator operators}, or \emph{eliminators}.

\begin{centering}
\begin{tcolorbox}[enhanced jigsaw,
                  colback=gray!6,
                  colframe=black,
                  width=8.5cm,
                  arc=1mm, auto outer arc,
                  boxrule=0.5pt,
                 ]
{\bf  Running example:}
Since the unobservable part is left unspecified, the definition of the eliminators are not unique. For the two-qubit example we take  $\bc_R$ as all normalized, traceless Pauli strings. In that case we can always take,
$E_I = \frac14( \Kket{I}\Bbra{I} + \Kket{X}\Bbra{X})^{\otimes 2}.$

From this definition,
\begin{align}\label{eq:EI_example}
E_I&\Kket{I} = \frac{1}{4} (\tr(I) ~ I + \tr(X \otimes I ) ~X \otimes I \nonumber \\&+ \tr(I \otimes X) ~I \otimes X + \tr(X \otimes X) X \otimes X) = I  
\end{align}
Similarly we can check that this eliminates all the diagonal Pauli strings owing to the anti-commutation relation between $Z$ and $X$. This will not eliminate Pauli strings with only $X$ for instance. But these are off-diagonal operators which lie in $\bc_L^\perp.$ Similarly other eliminators can also be constructed for this case.  The general formula for these constructions in the computational basis is given in Section \ref{subsec:shoterror}.

\end{tcolorbox}
\end{centering}

Now to perform simultaneous tomography, we need  to   apply these canonical operators to the state by aggregating measurement outcomes as described in \eqref{eq:z_def}. To do this, it is sufficient to have a set of unitary operators such that these canonical operators lie in their span. We call such a set of unitaries \emph{tomographically complete} and use noisy measurement outcomes generated by these  unitaries, as in \eqref{eq:noisy_observations}, to perform simultaneous tomography.

\begin{definition}[\bf Tomographically  complete set]
We call a set of unitary operators $\uc_{tom} = \{U_1 \ldots U_L\} $  \textbf{tomographically complete} if for all $P \in \bc_R,~i \in [D]$, we have $E_{P,i} \in  \lc(\uc_{tom})$ and $E_{I} \in \lc(\uc_{tom})$.
\end{definition}

This implies that if the set  $\uc_{tom}$ is tomographically complete, then there exists coefficients $c_l^{P,i}$ such that $\sum_{l} c_l^{P,i} = 1$ and
\begin{align}
    \sum_{l=1}^L c_l^{P,i} \Phi(U_l) = E_{P,i}.
\end{align}
Further, there exist coefficients, $c^{I}_l$, such that $\sum_l c_l^{I} = 1$ and,
\vspace{-0.2in}
\begin{align}
    \sum_{l=1}^L c_l^{I} \Phi(U_l) = E_{I}.
\end{align}
The definition of this set of unitaries and the corresponding coefficients for constructing the eliminators will obviously depend on the basis sets $\bc_R$ and $\bc_L.$  For the special case of computational basis measurements with $\bc_R$ taken to be the Pauli operator basis, we can show that this set is a subset of the Clifford group on $n-$qubits (see \eqref{eq:EPi_def}). 

Now using these coefficients in \eqref{eq:defining_z} we can aggregate the noisy measurement outcomes to effectively apply the canonical operators to the state,
\begin{align}
&z^{I}_k := \sum_{l=1}^L c_l^{I} \ty(U_l)_k \;, \\
&\begin{aligned}
z^{P,i}_{k}  := \sum_{l = 1}^{L} c_l^{P,i} \ty(U_l)_k \; ,~~~ \forall  P\in \bc_R,~ i,k \in [D] \;.
\end{aligned}
\end{align}
Now from the definition of the eliminators  and \eqref{eq:z_def} we can connect the $z$ values to $\rho$ and $A.$ 
\begin{align} \label{eq: noisy measurements z}
&z^{I}_k = \sum_{\kp} \frac{A_{k \,\kp} m_{\kp \, I}}{2^{n/2}} \; , \\
&\begin{aligned}
\label{eq:zPQ_def}
z^{P,i}_{k} = z^{I}_k + s_P \sum_{ \kp} A_{k \, \kp} C_{\kp \, i} \; ,
\end{aligned}
\end{align}
where $C$ is the \emph{covariance matrix} associated with the POVM,
\vspace{-0.152in}
\begin{equation}
    C_{ij} := \BK{M_i|\bar{M}_j} = \tr(M_i M_j) - 
    \frac{\tr(M_i)\tr(M_j)}{D} \; .
\end{equation}
To emphasize, the $z-$values are obtainable from measurements. Our aim is to invert the relations in \eqref{eq: noisy measurements z} and $\eqref{eq:zPQ_def}$ to find $\rho$ and $A$ up to the unknown gauge.

Given the ability to obtain these $z$ values, the simultaneous tomography algorithm can be broken down into three steps; finding the positions of the non-zero coefficients of $\rho$ in the $\mathcal{B}_R$ basis, computing $A$ up to a gauge, and computing the other state coefficients of $\rho$  up to gauge. Below we will give a brief description of each of these steps. The full algorithm is given in \algorithmcfname~ \ref{alg:general_decoding}. Full technical details of the algorithm can be found in Appendix  \ref{app:Algo_details}.

\begin{centering}
\begin{tcolorbox}[enhanced jigsaw,
                  colback=gray!6,
                  colframe=black,
                  width=8.5cm,
                  arc=1mm, auto outer arc,
                  boxrule=0.5pt,
                 ]
{\bf  Running example:}

We see from \eqref{eq:EI_example} that $E_I = \frac{1}{4} \sum_{U \in \{I,X\}^{\otimes 2}} \Phi(U)$. So from this explicit construction, we get the $c^I_l$ values and we can compute $z^I_k$ from this.

We can group these $z-$values by the $k$ and $i$ indices into $4$ dimensional vectors and matrices.
For the $2$-qubit example we compute these values to be
$$\mathbf{z}^{I} = [0.25 ,0.25, 0.25, 0.25 ] \; ,$$
$$\mathbf{z}^{I\otimes Z}= \mathbf{z}^{Z\otimes Z} = 
\begin{pmatrix}
-0.03 &  ~0.33 &  ~0.33  & ~0.37 \\
 ~0.33 & -0.03 &  ~0.37 &  ~0.33 \\
 ~0.33 &  ~0.37 & -0.03 &  ~0.33 \\
 ~0.37 &  ~0.33 &  ~0.33 & -0.03 
\end{pmatrix} \; ,
   $$

$$ \mathbf{z}^{Z\otimes I} =
\begin{pmatrix}
 0.53& 0.17& 0.17& 0.13 \\
 0.17& 0.53& 0.13& 0.17 \\
 0.17& 0.13& 0.53& 0.17 \\
 0.13& 0.17& 0.17& 0.53 \\
\end{pmatrix} \; .
$$

We get the uniform  stochastic matrix for the other cases where $s_P = 0.$
\end{tcolorbox}
\end{centering}

\paragraph*{Step 1: Finding non-zero coefficients}\vphantom{}

\noindent To find a non-zero state coefficient, we first have to isolate all the rows of $A$  that are not all zeros.
From  \eqref{eq: noisy measurements z}, this can be clearly done by finding all $l \in [D]$ such that $z^{I}_l \neq 0.$ 
Now for one such $l$ and for $j\in [D],$ if $z_l^{P,j} - z^{I}_l \neq 0$, then $s_P$ must be  non-zero.
On the other hand if for all $j \in [D]$ if $z_l^{P,j} - z^{I}_l = 0$, then $s_P$ must be zero.

We can repeat this step for the same $l$ for each $P \in \bc_R$ to find every non-zero $s_P.$
We also store one particular $(j,l)$, obtained from  $z_l^{P,j} - z^{I}_l \neq 0$ for any $P$, to use in the final step.

\begin{centering}
\begin{tcolorbox}[enhanced jigsaw,
                  colback=gray!6,
                  colframe=black,
                  width=8.5cm,
                  arc=1mm, auto outer arc,
                  boxrule=0.5pt,
                 ]
{\bf  Running example:}
For $s_P =  0$, we will get $\mathbf{z}^P$ to be the matrix of all $0.25.$
Only $\mathbf{z}^{I\otimes Z}, \mathbf{z}^{Z\otimes I},$ and $\mathbf{z}^{Z\otimes Z}$ will differ from this and we can identify these with the non-zero coefficients of the state. 
\end{tcolorbox}
\end{centering}
\paragraph*{Step 2: Finding noise matrix up to gauge }\vphantom{}
\noindent
At this step, we choose $R \in \bc_R$ such that $s_R \neq  0$. To work around the gauge problem we have to choose one noise matrix from the one-parameter family described by \eqref{eq:gauge}.
We make this choice by taking $\alpha  = s_R$. This makes the explicit $s_R$  dependence vanish from \eqref{eq:zPQ_def}. In terms of the gauge transformed noise matrix, \eqref{eq: noisy measurements z} and \eqref{eq:zPQ_def} can be expressed as follows.
\begin{align}
&z^{I}_k  = \sum_{\kp} \frac{A'(s_R)_{k \, \kp} m_{\kp \, I}}{2^{n/2}} \; , \\
&\begin{aligned}
z^{R,i}_{k} = z^{I}_k + \sum_{ \kp} A'(s_R)_{k\,\kp} C_{\kp \, i} \; , ~~ \forall ~ i,k \in [D] \; .
\end{aligned}
\end{align}
Once we obtain the $z-$values by aggregating the measurements; we can invert the system of linear equations to find $A^\prime(s_R).$ This inversion step is always possible if the POVM is linearly independent, i.e., if the Condition \ref{cond3}
% in \eqref{eq:cond3}
holds.

\begin{centering}
\begin{tcolorbox}[enhanced jigsaw,
                  colback=gray!6,
                  colframe=black,
                  width=8.5cm,
                  arc=1mm, auto outer arc,
                  boxrule=0.5pt,
                 ]
{\bf  Running example:}
Choose the gauge to be $s_{Z\otimes I}$. Covariance matrix for computational measurements is $C_{ij} = \delta_{ij} - 0.25$. Now by plugging  $z-$values in \eqref{eq:zPQ_def} we can compute $$A'(s_{Z\otimes I}) =
\begin{pmatrix}
 0.53& 0.17& 0.17& 0.13 \\
 0.17& 0.53& 0.13& 0.17 \\
 0.17& 0.13& 0.53& 0.17 \\
 0.13& 0.17& 0.17& 0.53 \\
\end{pmatrix} \; .
$$
We can check that this matrix is indeed equal to $s_{Z\otimes I} A + (1 - s_{Z\otimes I})0.25$\;.
\end{tcolorbox}
\end{centering}

\paragraph*{Step 3: Finding state up to gauge}\vphantom{}

\noindent
In this step, we exploit the gauge transformation to find the ratio of every non-zero state coefficient with $s_R.$ From \eqref{eq: noisy measurements z} and \eqref{eq:zPQ_def} the following relation holds,
\begin{equation}
 \frac{s_P}{s_R} = \frac{ \sum_{\kp} \Ap_{k\,\kp}(s_P) C_{\kp \, i}}{ \sum_{\kp} \Ap_{k\,\kp}(s_R) C_{\kp\, i}} = \frac{ z_l^{P,j} - z^{I}_l}{ z_l^{R,j} - z^{I}_l } \; .
\end{equation}

Our choice of $(j,l)$ in \emph{Step 1} ensures that the denominator in this expression is always non-zero.

\begin{centering}
\begin{tcolorbox}[enhanced jigsaw,
                  colback=gray!6,
                  colframe=black,
                  width=8.5cm,
                  arc=1mm, auto outer arc,
                  boxrule=0.5pt,
                 ]
{\bf  Running example:}
Choose $j,l = 1$. This choice is made so that $z_j^{Z\otimes I,l} \neq z_l^{I \otimes I}.$ From the expression given above,
\vspace{-0.1in}
$$\frac{s_{I \otimes Z}}{s_{Z \otimes I}} = \frac{\mathbf{z}_{1,1}^{I\otimes Z} - \mathbf{z}_{1}^{I \otimes I}}{ \mathbf{z}_{1,1}^{Z\otimes I} - \mathbf{z}_{1}^{I \otimes I}} = \frac{-0.03 -0.25}{ 0.53 -0.25}  = -1 \; ,$$
Similarly we get,$ \frac{s_{Z \otimes Z}}{s_{Z \otimes I}} = -1.$

\end{tcolorbox}
\end{centering}

After these steps, we will know $\rho$ and $A$ up to the unknown parameter $s_R$. This unknown has to be fixed from prior information, and we will describe various ways of fixing this gauge in Section \ref{sec:fixing}.
In the next subsection, we will specialize to the case of computational basis measurements, and analyze the number of measurement shots required to implement this algorithm.

%As alluded by the definition, the set of noisy measurements obtained following the unitary operations in $\uc_{tom}$ are sufficient to perform simultaneous tomography up to the one-dimensional gauge degree of freedom. The algorithm below demonstrates this procedure.

\begin{algorithm}[!ht]
 \tcp{ \textcolor{purple}{Step 1. Find non-zero coefficients of the state}}
Compute $z_k^{I}~~ \forall ~k \in [D]$ using \eqref{eq: noisy measurements z} \\
$\mathcal{K} \leftarrow \{k \in [D]~|~z^{I}_k \neq 0\}$ \\
 $\mathcal{C} \leftarrow \{\}$ \tcp{Empty set}
 $\mathcal{S} \leftarrow [D] \times \mathcal{K}$ \tcp{Search space of index tuples}
\SetAlgoLined
\For{$P \in \bc_R$}{
    $s_P \leftarrow 0$\\
    \For{$(i,k) \in \mathcal{S}$}{
        Compute $z_k^{P,i}$ using \eqref{eq:zPQ_def} \\
        \If{$z^{P,i}_k \neq z^{I}_k$}{
            $\mathcal{S} \leftarrow \{(i,k)\}$ \\  \tcp {Replace index set with a single tuple}
            $\mathcal{C} \leftarrow \mathcal{C} \bigcup \{ P \}$ \\
            \textbf{continue}  \tcp{To the next  $P$}
        }
    }
}
 \tcp{\textcolor{purple}{Step 2. Find $A$ up to gauge symmetry}}
 \textbf{choose} $R \in \mathcal{C}$ \\
 Compute $z_k^{R,i}$ for all $k \in \mathcal{K},i \in [D]$ \\
 \For{$k \in  [D]$}{
 \If{$k  \in \mathcal{K}$}{
 Solve for $\Ap_{k\kp}$ in \\
 $\sum_{\kp}\Ap_{k\kp}C_{\kp i} = z_k^{R,i} - z_k^{I} ~~  \forall i \in [D]$ \\
 $\sum_{\kp}A^\prime_{k \kp} m_{\kp I}  =  2^{n/2} z^{I}_{k}$\\
 \Else{ $\Ap_{k\kp} \leftarrow  0 ~~ \forall \kp \in [D]$} 
 
 }}

 \tcp{\textcolor{purple}{Step 3. Find other state coefficients up to a multiplicative constant}}

$\{(j,l)\} \leftarrow \mathcal{S}$ \\
\For{$P \in \mathcal{C}$}{
$\frac{s_P}{s_R} \leftarrow \frac{z^{P,j}_l -z^{I}_l}{ z^{R,j}_l -z^{I}_l}$
}

\textbf{return} $\{ (P,\frac{s_P}{s_R}) | P \in  \mathcal{C}\},~~\Ap $
\caption{Simultaneous tomography up to gauge degree of freedom}
\label{alg:general_decoding}
\end{algorithm}

\subsection{Simultaneous tomography with randomized measurements and shot error}
\label{subsec:shoterror}
The simultaneous tomography algorithm, as described, does not consider the fact that every $\ty_k(U)$ has to be estimated using a finite number of measurement outcomes. In this section, we will specialize the algorithm to the case where measurements are made in the computational basis and analyze the number of measurement shots required to estimate the state and noise in the system up to gauge. Additionally, we will also use a randomized measurement procedure to estimate the $z$ values required for tomography. The sample complexity bounds in Theorem \ref{thm:samples} specify the number of such randomized measurements required for simultaneous tomography. This randomized measurement method can significantly reduce the overhead of simultaneous tomography as the number of operators in the tomographically complete set can be exponentially large in the system size.

We will present our results exclusively for the case of computational basis measurements, as this is the most pertinent case for practical applications.

\subsubsection{Simultaneous tomography in the computational basis}

For computational measurements the POVM is simply $\{\ket{k}\!\!\bra{k} | k \in [2^n]\}$, and the covariance operator takes a simple form, 
\begin{equation}
    C_{ik} = \delta_{ik} - \frac{1}{2^n} \; .
\end{equation}

For these types of measurements, the natural choice for the right basis set $\bc_R$ is the set of all normalized Pauli strings,
\begin{equation}
    \bc_R = \left\{\frac{I}{\sqrt2},  \frac{X}{\sqrt2}, \frac{Y}{\sqrt2}, \frac{Z}{\sqrt2} \right\} ^{\otimes n} \setminus \left\{\frac{I}{2^{n/2}}\right\} \; .
\end{equation}

%This significantly simplifies the linear system inversion required in the second step of the algorithm to find $A^\prime$

Remarkably for this choice of $\bc_R$, the effective elimination operators,  defined in \eqref{eq:EI} and \eqref{eq:EPi},  can be constructed using only Clifford operations.

 Let $\pc_X$ (or $\pc_Z$) be the set of Pauli strings composed of only $X$ (or $Z$) and $I$. Now define $H_{lQ} = \braket{l|Q|l}/2^{n/2},~\forall Q \in \pc_Z.$ Then we can show that,
\begin{align}
    E_{I} &= \frac{1}{2^n} \sum_{P \in \pc_X} \Phi(P) \label{eq:EI_pauli_def} \; ,\\
    E_{Pi} &=(1 - \sum_{Q\neq I} H_{iQ})E_{I} +  \frac{2}{2^n} \sum_{Q \neq I} H_{iQ} \sum_{\substack{Q^\prime \in \pc_X \\ [Q^\prime, Q] = 0}} \Phi(Q^\prime U_{PQ}) \; .  \label{eq:EPi_def}
\end{align}

Here $U_{PQ}$ is a member of the $n-$qubit Clifford group that maps $P$ to $Q$. Proof of this construction is given in  Appendix \ref{app:Elim_for_Pauli}.
 The overhead of applying these  eliminators  can be decreased significantly by using a randomized measurement scheme (see Appendix  \ref{app:Sample}).

 Due to the simplified nature of the covariance matrix and the POVM, the $z$ values in this setting take the following simple forms,
 \begin{align}
     z^I_k &= \frac{\sum_{k'} A_{kk'}}{2^n} \; ,\\
     z^{P,i}_k &= z^I_k + s_p(A_{ik} -z^I_k) \; .
 \end{align}

 This means that if the gauge is fixed to $s_R$, we get the following simple relation between the noise matrix and gauge $s_R$,
 \begin{equation}
 \label{eq:Comp_A_gauge}
     A'_{ik}(s_R) =  z^{Ri}_k \; .
 \end{equation}

 So for the case of computational basis measurements, the linear inversion in the second step of Algorithm \ref{alg:general_decoding} is unnecessary.

\subsubsection{Sample complexity}

The number of measurements required to estimate $\rho$ and $A$ up to a certain error depends on how far they are from violating the sufficient conditions \ref{cond1} and \ref{cond2}.
% \eqref{eq:cond1} and \eqref{eq:cond2}.
The third condition is automatically satisfied as computational basis measurements form a POVM that is linearly independent. 
More measurements are required  for simultaneous tomography the closer $\rho$ is to a maximally mixed state and the closer $A$ is to erasure channel. To measure the distance from these pathological cases, we define the following metrics for $\rho$ and $A$.
\begin{align}
\label{eq:mnorm}
    \mnorm{\rho} &\equiv \max_{P \in \bc_R \setminus I} |s_P| \; , \\
    \label{eq:unorm}
    \unorm{A} &\equiv  \max_{k,\kp \in [2^n]}   \left| A_{k\kp} - \frac{\sum_i A_{ki}}{2^n} \right| \; .
\end{align}

The sufficient conditions
% in \eqref{eq:cond1} and \eqref{eq:cond2}
\ref{cond1} and \ref{cond2}
are just non-zero lower bounds on these metrics.  These metrics allow us to state the sample complexity for the simultaneous tomography algorithm:
\begin{theorem} {\textbf{Complexity in computational basis}} \label{thm:samples} Given an $n-$qubit quantum system such that $\mnorm{\rho} > 0,~ \unorm{A} >0$. Choose a threshold parameter $0 < \beta < \mnorm{\rho}/2$. Then the three main steps of the simultaneous tomography algorithm can be implemented using randomized measurements in the computational basis with the following complexities,

\begin{enumerate}
    \item Using $O(8^n \frac{cn + log(1/\delta)}{\beta^2 \unorm{A}^2})$ randomized measurements, we can identify a non-empty subset  $\mathcal{C} \subset  \bc_R$ such that with probability $1-\delta$ the following implications hold, 
    \vspace{-0.1in}
    \begin{align*}
    P \in \mathcal{C}  &\implies |s_P| \geq  \beta \;, \\
     |s_P| \geq 1.01\beta &\implies P \in \mathcal{C} \; .
     \end{align*}
     
    \item Given $R \in \mathcal{C}$, using $O(2^n \frac{cn + log(1/\delta)}{\epsilon^2})$ randomized measurements, we can give an estimate  $\hat{A}^\prime(s_R)$ for the noise matrix up to gauge such that,
    \begin{equation*}
       Pr\left( \max_{i,j \in [D] } |\hat{A}'_{i,j}(s_R) -  A'_{i,j}(s_R) | > \epsilon \right)   \leq \delta\;.
    \end{equation*}
    
    \item  Let $\epsilon < \beta/2$. Then for a fixed $R \in \mathcal{C}$ and for every $P \in \mathcal{C}$, we can  compute an estimate  $\widehat{\dfrac{s_P}{s_R}}$  using a total of $O(2^n |\mathcal{C}|\frac{cn + \log(1/\delta)}{ \epsilon^2 \beta^2 \unorm{A}^2})$ such that,
\begin{equation*}
   Pr\left( \max_{P \in \mathcal{C}}\;\dfrac{ \left| \widehat{\frac{s_P}{s_R}} - \frac{s_P}{s_R} \right|}{ \left| \frac{s_P}{s_R} \right|}  > \epsilon  \right) \leq \delta \;.
\end{equation*}

\end{enumerate}

\end{theorem}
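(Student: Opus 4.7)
The strategy is to (i) design unbiased single-shot estimators for each of the quantities $z^{I}_k$ and $z^{P,i}_k$ using the randomized-measurement scheme that samples unitaries from the decompositions in \eqref{eq:EI_pauli_def}--\eqref{eq:EPi_def}, (ii) apply Hoeffding's inequality together with a union bound to control these estimators uniformly, and (iii) propagate the resulting additive errors through the three decision/estimation rules in Algorithm~\ref{alg:general_decoding}. Throughout, I will use the simplified identities $z^{I}_k = 2^{-n}\sum_{k'}A_{kk'}$ and $z^{P,i}_k - z^{I}_k = s_P\bigl(A_{ik} - z^{I}_k\bigr)$ specific to the computational basis, which make every downstream quantity an affine function of one or two estimated $z$-values.

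The key single-shot estimator is built as follows. Since $E_I$ is a uniform average of $2^n$ Clifford (in fact Pauli-$X$) channels, one can sample $P \in \pc_X$ uniformly at random, apply $P$, measure in the computational basis, and output the indicator of outcome $k$; this is unbiased for $2^n z^{I}_k$ after appropriate rescaling, and its range is $O(1)$. For $E_{P,i}$, the structure in \eqref{eq:EPi_def} is a convex-type combination of $O(2^n)$ Clifford channels indexed by $Q \in \pc_Z$ and $Q'\in \pc_X$, with signed coefficients whose $\ell_1$ norm is bounded by a geometric factor arising from $H_{iQ}$; sampling $Q$ with probability proportional to $|H_{iQ}|$, then $Q'$ uniformly, gives an unbiased estimator of $z^{P,i}_k$ whose range scales like $2^{n/2}$. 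Applying Hoeffding's inequality, $O(2^n(\log(1/\delta'))/\eta^2)$ shots suffice to estimate any single $z^{P,i}_k$ or $z^I_k$ to additive error $\eta$ with failure probability $\delta'$. Union-bounding over the at most $4^n|\mathcal{C}|\cdot 2^n \cdot 2^n$ index combinations appearing in each step introduces a $\log(4^{n}\cdot 4^n) = O(n)$ factor, which produces the $cn$ term in the bounds.

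With this uniform concentration in hand, the three parts follow by elementary error-propagation. For Step~1, the test $z^{P,i}_k \neq z^{I}_k$ becomes a noisy threshold test on $s_P(A_{ik}-z^I_k)$; choosing $\eta = \Theta(\beta\,\unorm{A})$ guarantees that (a) every $P$ with $|s_P|\geq 1.01\beta$ produces a pair $(i,k)$ for which $|s_P(A_{ik}-z^I_k)| \geq 1.01\beta\,\unorm{A}$ and hence exceeds the noise margin, while (b) no $P$ with $|s_P|<\beta$ does so. This yields the $O\bigl(8^n(cn+\log(1/\delta))/(\beta^2\unorm{A}^2)\bigr)$ count after substituting $\eta$ into the per-estimator cost. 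For Step~2, \eqref{eq:Comp_A_gauge} directly identifies $A'_{ik}(s_R)$ with $z^{R,i}_k$, so a uniform $\epsilon$ bound on the $z$ estimates is a uniform $\epsilon$ bound on $\hat A'(s_R)$, giving $O(2^n(cn+\log(1/\delta))/\epsilon^2)$. For Step~3, the ratio $(z^{P,j}_l - z^I_l)/(z^{R,j}_l - z^I_l)$ has denominator bounded below in magnitude by $\Omega(\beta\,\unorm{A})$ by the choice of $(j,l)$ in Step~1 combined with $|s_R|\geq\beta$; a standard ratio-perturbation argument (using $\eta\ll \beta\,\unorm{A}$) shows that relative error $\epsilon$ requires absolute error $\eta = \Theta(\epsilon\,\beta\,\unorm{A})$ on numerator and denominator, and a union bound over $P\in\mathcal{C}$ supplies the $|\mathcal{C}|$ factor.

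The main technical obstacle is the variance/range analysis for the $E_{P,i}$ estimator: the coefficients in \eqref{eq:EPi_def} are signed and their magnitudes $|H_{iQ}|$ can be as small as $2^{-n/2}$, so naively implementing the sampling yields an estimator whose range grows exponentially and whose Hoeffding bound is too weak. The careful argument, to be carried out in the appendix, is to split \eqref{eq:EPi_def} into the $E_I$ piece and a $\sum_Q H_{iQ}$-weighted piece, use importance sampling with sampling distribution proportional to $|H_{iQ}|$ (whose $\ell_1$-normalisation is exactly $2^{n/2}$ for computational-basis states $|i\rangle$), and absorb the resulting $2^{n/2}$ range into the $2^n$ prefactor of the sample complexity. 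Once this variance bound is in place, the remainder of the proof is a routine assembly of Hoeffding plus union bound plus ratio-perturbation.
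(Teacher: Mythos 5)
Your overall architecture matches the paper's: unbiased randomized estimators for $z^I_k$ and $z^{P,i}_k$ built from the Clifford decompositions of $E_I$ and $E_{P,i}$, concentration plus a union bound over the $O(4^n)$ index combinations (whence the $cn$ term), and then error propagation through a threshold test (Step 1), the identity $A'_{ik}(s_R)=z^{R,i}_k$ (Step 2), and a ratio-perturbation bound (Step 3). One technical choice differs: for $z^{P,i}_k$ you importance-sample $Q\in\pc_Z$ with probability proportional to $|H_{iQ}|$ (which, since $|H_{iQ}|=2^{-n/2}$ for all computational-basis $i$, is just the uniform distribution) and absorb the $\ell_1$ mass $\approx 2^{n/2}$ into the estimator's range; the paper instead estimates each $z^{PQ}_k$ separately with $N'$ shots for every $Q\in\pc_Z\setminus\{I\}$ and combines the resulting errors as a weighted sum of subgaussians with $\sum_Q H_{iQ}^2<1$, allocating $N''=2^nN'$ shots to the $E_I$ piece to tame the $(1-\sum_Q H_{iQ})^2\le 2^n$ coefficient. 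Both routes land on the same $O(2^n(cn+\log(1/\delta))/\eta^2)$ per-estimator cost, so this is a legitimate alternative provided you handle the $E_I$ piece with the same care you promise for the signed sum.

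There is, however, a genuine gap in your Step 3. You claim the denominator $z^{R,j}_l-z^I_l=s_R(A_{jl}-z^I_l)$ is $\Omega(\beta\,\unorm{A})$ ``by the choice of $(j,l)$ in Step 1 combined with $|s_R|\ge\beta$.'' But the pair retained in Step 1 is only guaranteed to satisfy $|s_{P_0}|\,|A_{jl}-z^I_l|\gtrsim\beta\,\unorm{A}$ for whichever $P_0$ triggered the threshold; since $|s_{P_0}|$ can be as large as $1$, this only yields $|A_{jl}-z^I_l|\gtrsim\beta\,\unorm{A}$, hence a denominator lower bound of order $\beta^2\unorm{A}$, not $\beta\,\unorm{A}$. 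Feeding that into your own ratio-perturbation argument forces the additive accuracy $\eta=\Theta(\epsilon\beta^2\unorm{A})$ and a sample complexity of $O(2^n|\mathcal{C}|(cn+\log(1/\delta))/(\epsilon^2\beta^4\unorm{A}^2))$ --- a factor $1/\beta^2$ worse than the theorem's claim. The paper closes this by a dedicated index-selection step: re-estimate $\{\hat z^{R,i}_l,\hat z^I_l\}$ to accuracy $\epsilon=\beta\,\unorm{A}/4$, set $(i',l')=\operatorname{argmax}_{i,l}|\hat z^{R,i}_l-\hat z^I_l|$, and use $\max_{i,l}|z^{R,i}_l-z^I_l|=|s_R|\,\unorm{A}$ to conclude $|A_{i'l'}-z^I_{l'}|\ge\unorm{A}-2\epsilon/|s_R|\ge\unorm{A}/2$ with high probability --- a bound on $|A_{i'l'}-z^I_{l'}|$ that is independent of $\beta$, so the denominator is $\ge\beta\,\unorm{A}/2$ as needed. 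You need this (or an equivalent) selection argument; the pair inherited from Step 1 does not suffice as stated.
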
 
    See Appendix \ref{app:Sample} for details on the randomized measurement framework used and the proof of this theorem

The three parts of this theorem correspond to the three steps of Algorithm \ref{alg:general_decoding}. In the first step of the algorithm our aim is to find the positions of the non zero coefficients of the state. The finite shot version of this step is the construction of the set $\mathcal{C}$ which is guaranteed (with high probability) to contain every operator in $\bc_R$ whose overlap with the state is greater than or equal to $1.01 \beta $ in absolute  value. Moreover it is also guaranteed with high probability that $\mathcal{C}$ will only have operators such that their overlap with the state is guaranteed to be greater than or equal to $\beta$  in absolute value. Now if we choose $\beta = \frac{1}{1.01} \min_{P:s_p \neq 0} |s_P|$, then $\mathcal{C}$ will exactly contain all the positions of the non-zero coefficients. On the other hand, if we are only concerned about recovering the noise matrix, we can take $\beta = \frac{\mnorm{\rho}}{2}$, which guarantees that $\mathcal{C}$ is non-empty. This will give us at least one non-zero coefficient to set the gauge for the problem.

The second part of the theorem concerns the estimation of the noise matrix up to gauge in the finite shot setting. This is straightforward in the computational basis as we do not have to perform any linear inversion. Notice that the sample complexity of this step is independent of $\mnorm{\rho}$  and $\unorm{A}$. This gives a considerable sample complexity advantage in the setting where we are only interested in recovering $A$. 

\begin{corollary}{\bf Estimating measurement noise (M error)}\\
Fixing $\beta = \mnorm{\rho}/2$ in Theorem \ref{thm:samples}, we can recover the noise matrix up to gauge with the error $\epsilon$, with high probability using $\tilde{O}\left(\frac{8^n}{\mnorm{\rho}^2 \unorm{A}^2} + \frac{2^n}{\epsilon^2}\right)$ randomized measurements.
\end{corollary}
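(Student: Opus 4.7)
The plan is to chain Parts 1 and 2 of Theorem~\ref{thm:samples} together with the specific choice $\beta = \mnorm{\rho}/2$, using independent randomness in the two stages and a union bound over the failure events.

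First I would apply Part 1 with this $\beta$ and target failure probability $\delta/2$. The required number of randomized measurements becomes $O\bigl(8^n(cn+\log(2/\delta))/(\mnorm{\rho}^2\unorm{A}^2)\bigr)$, which equals $\tilde{O}\bigl(8^n/(\mnorm{\rho}^2\unorm{A}^2)\bigr)$ after absorbing the polylogarithmic factors in $n$ and $\log(1/\delta)$ into the tilde notation. I then need to verify that the returned subset $\mathcal{C}$ is non-empty so that the second stage can even be launched. By the definition of $\mnorm{\rho}$, there is some $P^\star\in\bc_R$ with $|s_{P^\star}| = \mnorm{\rho}$. Since $1.01\beta = 0.505\,\mnorm{\rho} < \mnorm{\rho}$, the second implication of Part 1 forces $P^\star \in \mathcal{C}$ on the success event, so $\mathcal{C}\neq\emptyset$ with probability at least $1-\delta/2$; we can then select any $R \in \mathcal{C}$ to fix the gauge.

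With $R$ in hand, I would invoke Part 2 with failure probability $\delta/2$ and the target accuracy $\epsilon$, producing $\hat{A}^\prime(s_R)$ such that $\max_{i,j}|\hat{A}^\prime_{i,j}(s_R) - A^\prime_{i,j}(s_R)| \leq \epsilon$ using $O\bigl(2^n(cn+\log(2/\delta))/\epsilon^2\bigr) = \tilde{O}(2^n/\epsilon^2)$ randomized measurements. A union bound then guarantees that both events hold simultaneously with probability at least $1-\delta$, giving a valid estimate of $A$ up to the single unavoidable gauge parameter, and summing the two stage costs yields the claimed $\tilde{O}\bigl(8^n/(\mnorm{\rho}^2\unorm{A}^2) + 2^n/\epsilon^2\bigr)$ total sample complexity. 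The proof is essentially bookkeeping on top of Theorem~\ref{thm:samples}; the only mildly delicate point is that the slack constant $1.01$ in the inclusion guarantee of Part 1 is precisely what makes the choice $\beta = \mnorm{\rho}/2$ workable, since it ensures the maximizer $P^\star$ sits strictly above the threshold $1.01\beta$ needed to be captured in $\mathcal{C}$.
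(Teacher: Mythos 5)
Your proposal is correct and matches the paper's (largely implicit) reasoning: the paper justifies this corollary exactly by noting that $\beta = \mnorm{\rho}/2$ guarantees $\mathcal{C}$ is non-empty so a gauge-fixing $R$ exists, and then that Part 2 of Theorem~\ref{thm:samples} delivers $\hat{A}'(s_R)$ with sample cost independent of $\mnorm{\rho}$ and $\unorm{A}$. Your explicit non-emptiness check via the maximizer $P^\star$ and the $1.01\beta < \mnorm{\rho}$ slack, together with the union bound over the two stages, is precisely the bookkeeping the paper leaves to the reader.
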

Here we have used the $\tilde{O}$ notation to hide linear factors in $n$ and $\log(1/\delta)$ for the sake of readability.

The third part of the theorem concerns the estimation of state coefficients up to gauge. We only do this for $P\in \mathcal{C}$ and we set $s_P = 0$ for $P \notin \mathcal{C}$. Thus the threshold parameter ($\beta$) implicitly fixes the error in estimating these coefficients. The sample complexity of this step depends on $|\mathcal{C}|$  and hence can be considerably low if the state is sparse in the Pauli basis.

 Suppose our aim is to fully characterize the state preparation error. To estimate every element of the state with an \emph{additive} error of $\epsilon$ we show the following Corollary of Theorem \ref{thm:samples} in Appendix \ref{sec:corr_proof}.

\begin{corollary}{\bf Estimating prepared state (SP error)}\label{corr:SP}
Every coefficient of the state up to gauge can be  estimated with additive error of $\epsilon \leq \mnorm{\rho}/2$ with high probability
using a total of $\tilde{O} \left(\frac{8^n}{\epsilon^4 \unorm{A}^2} \right)$ randomized measurements
\end{corollary}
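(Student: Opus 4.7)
The plan is to invoke Theorem~\ref{thm:samples} with both the threshold $\beta$ of Step~1 and the relative-error parameter of Step~3 chosen proportional to the target additive error $\epsilon$, and then combine these two steps to bound the additive error on each coefficient of $\rho$. Step~2 is not actually needed for this corollary, because Step~3 of the algorithm produces the ratios $\widehat{s_P/s_R}$ directly from $z$-values without any explicit appeal to the recovered noise matrix.

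Concretely, I would fix $\beta = c_1 \epsilon$ and an internal relative-error parameter $\epsilon_{\mathrm{rel}} = c_2 \epsilon$ with small constants (for instance $c_1 = 0.99$ and $c_2 = 1/3$) chosen so that $\epsilon_{\mathrm{rel}} < \beta/2$, as required by part~3 of Theorem~\ref{thm:samples}, and so that $1.01\beta \leq \epsilon$. The hypothesis $\epsilon \leq \mnorm{\rho}/2$ then yields $\beta < \mnorm{\rho}/2$, which matches the assumption of Theorem~\ref{thm:samples}, and also forces $\mathcal{C}$ to be non-empty with high probability since the Pauli saturating $\mnorm{\rho}$ exceeds the detection threshold $1.01\beta$.

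The additive error bound then splits on whether $P \in \mathcal{C}$. For $P \notin \mathcal{C}$ I would report $\hat{s}_P = 0$; the contrapositive of the second implication in Step~1 gives $|s_P| < 1.01\beta \leq \epsilon$ with high probability. For $P \in \mathcal{C}$ I fix the gauge to match the true $s_R$, set $\hat{s}_P = s_R \, \widehat{s_P/s_R}$, and use the Step~3 bound to write
\begin{equation*}
    |\hat{s}_P - s_P| \;=\; |s_R|\,\bigl| \widehat{s_P/s_R} - s_P/s_R \bigr| \;\leq\; |s_R|\,\epsilon_{\mathrm{rel}}\,|s_P/s_R| \;=\; \epsilon_{\mathrm{rel}}\,|s_P| \;\leq\; \epsilon,
\end{equation*}
where the final inequality uses $|s_P| \leq 1$, itself a consequence of $\sum_P s_P^2 = \tr(\bar{\rho}^2) \leq \tr(\rho^2) \leq 1$ since $\bc_R$ is orthonormal.

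Finally I would add the sample complexities from parts~1 and 3 of Theorem~\ref{thm:samples}. Step~1 costs $\tilde{O}\bigl(8^n/(\beta^2\unorm{A}^2)\bigr) = \tilde{O}\bigl(8^n/(\epsilon^2\unorm{A}^2)\bigr)$, while Step~3 costs $\tilde{O}\bigl(2^n |\mathcal{C}|/(\epsilon_{\mathrm{rel}}^2 \beta^2 \unorm{A}^2)\bigr) = \tilde{O}\bigl(2^n |\mathcal{C}|/(\epsilon^4\unorm{A}^2)\bigr)$. Since $|\mathcal{C}| \leq |\bc_R| < 4^n$ and $\epsilon \leq 1$, the Step~3 term dominates and yields the claimed $\tilde{O}\bigl(8^n/(\epsilon^4 \unorm{A}^2)\bigr)$. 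The main point to handle carefully is keeping the two constraints $\epsilon_{\mathrm{rel}} < \beta/2$ and $1.01\beta \leq \epsilon$ simultaneously compatible while both parameters remain $\Theta(\epsilon)$; the constants chosen above verify this and ensure the sample complexity is not inflated beyond the stated order.
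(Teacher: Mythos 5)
Your proof is correct, but it departs from the paper's argument at the decisive step and ends up establishing a slightly different (and formally weaker) guarantee. The paper reads ``coefficient up to gauge'' as the gauge-fixed ratio $s_P/s_R$ and bounds $\left|\widehat{s_P/s_R} - s_P/s_R\right|$ additively; to make the multiplicative bound of part~3 of Theorem~\ref{thm:samples} yield this, it must ensure $|s_P/s_R| = O(1)$, which forces a careful choice of $R$ as the empirical argmax of $|\hat{z}^{P,i}_k - \hat{z}^I_k|$ over $\mathcal{C}$ and the chain of inequalities showing $|s_R| \geq \mnorm{\rho} - 0.01\epsilon$, hence $|s_P/s_R| \leq 1/0.995$. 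You instead take an arbitrary $R \in \mathcal{C}$, multiply the ratio estimate back by the true $s_R$, and use $\sum_P s_P^2 \leq 1$ to convert the relative error into an additive error on $s_P$ itself; this is cleaner, uses Theorem~\ref{thm:samples} as a black box, and your parameter bookkeeping ($\beta = 0.99\epsilon$, $\epsilon_{\mathrm{rel}} = \epsilon/3$, non-emptiness of $\mathcal{C}$, the $P \notin \mathcal{C}$ case, and the sample-complexity sum) is all sound. The trade-off to be aware of: your bound $|s_R\,\widehat{s_P/s_R} - s_P| \leq \epsilon$ follows from the paper's bound on the ratios (since $|s_R| \leq 1$) but not conversely --- for a generic $R \in \mathcal{C}$ with $|s_R| \approx \beta = \Theta(\epsilon)$ the ratio $s_P/s_R$ can be of order $1/\epsilon$, so your argument gives only an $O(1)$ additive error on the quantity the algorithm actually outputs. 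Whether that matters depends on reading the corollary as a statement about the physical coefficients $s_P$ modulo the unknown gauge factor (your reading, which the surrounding text about characterizing state-preparation error supports) versus a statement about the coefficients of the gauge-transformed state $\rho'(s_R)$ (the reading the paper's proof actually establishes); if the latter is intended, you would need to add the paper's argmax selection of $R$ to control $|s_P/s_R|$.
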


%Consolidating the sample complexities for the three steps in Theorem \ref{thm:samples}, in the worst case the total number of measurements required is seen to scale as  $O(8^n\frac{cn + \log(1/\delta)}{ \epsilon^2 \beta^2 \unorm{A}^2}).$

The exponential dependence on $n$ in these sample complexities is unavoidable because we are attempting to estimate an exponential number of independent, unknown quantities in the most general case. But this dependence can be possibly improved for special cases, like for unentangled states or binary symmetric noise channels.  We leave the analysis of such special cases for future work.

\subsection{ Numerical results}
\begin{figure*}[t]
\begin{subfigure}[t]{0.45\textwidth}
    \caption{}
    \includegraphics[width=\textwidth]{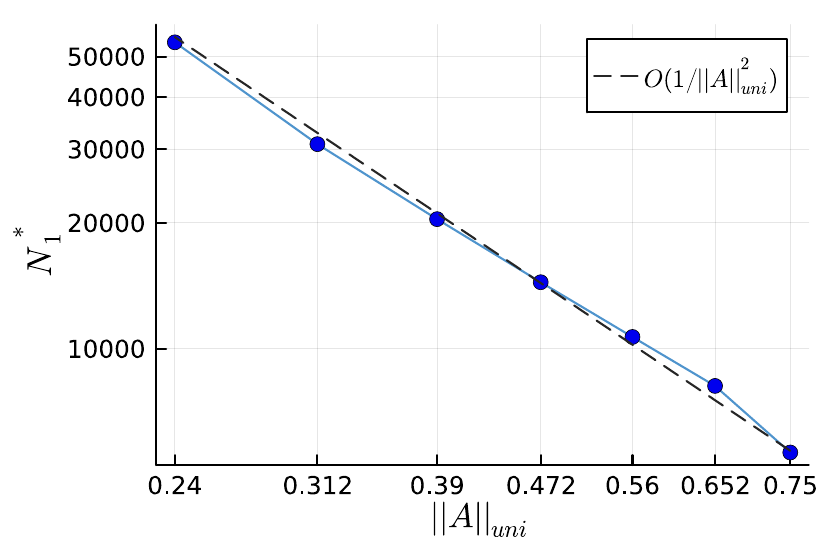}
\end{subfigure}
\begin{subfigure}[t]{0.45\textwidth}
    \caption{}
    \includegraphics[width=\textwidth]{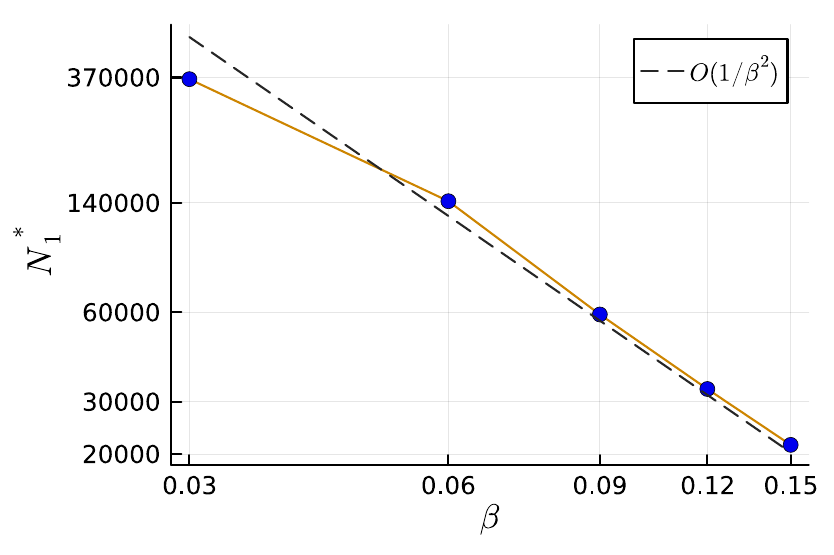}
\end{subfigure} \\
\begin{subfigure}[t]{0.45\textwidth}
    \caption{}
    \includegraphics[width=\textwidth]{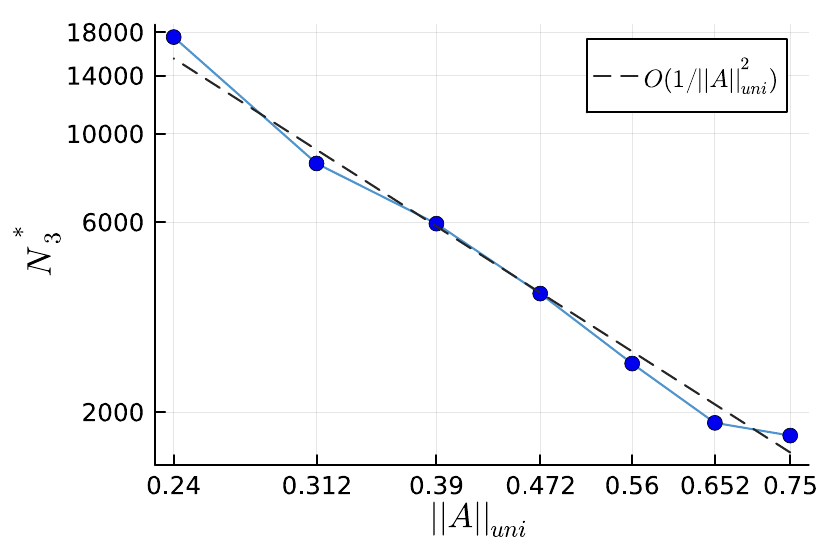}
\end{subfigure} 
\begin{subfigure}[t]{0.45\textwidth}
    \caption{}
    \includegraphics[width=\textwidth]{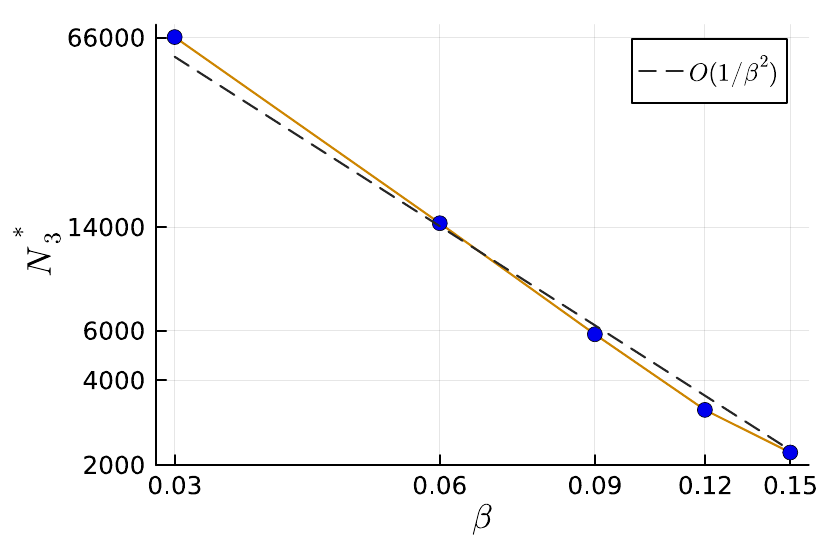}
\end{subfigure} 
\caption{{\bf Simultaneous tomography with randomized measurements in 2-qubit systems.}  {\bf (a)}, {\bf(b)} We study the scaling of Step 1 in Theorem \ref{thm:samples} for a two qubit system. $N_1^*$  here is the  number of measurements that was required to find the correct positions of the non-zero coefficients in the Pauli basis with a success rate of at least $90\%$. {\bf (c), (d)} We study the scaling of Step 3 in Theorem \ref{thm:samples}. Here $N^*_3$ is the number of measurement shots that were required to compute every state coefficient up to gauge with a multiplicative error of at most $\epsilon = 1/3$ . In { \bf (a)} and {\bf (c) }, the state is fixed to be $\rho = \ket{01}\bra{01}$ and  $A$ is chosen from the one-parameter family $A(\tau) = ( (1- \tau) I + \tau X)^{\otimes 2}$ to vary $\unorm{A}$. We fix $\beta = 1/4$. In {\bf(b)} and {\bf(d)}, the state is chosen from the one-parameter family $\rho(\tau) = \frac{I}{4} + \tau(Y\otimes I + Z \otimes Z)$ and $A$ is fixed to $A(0.05)$. $\beta$ is taken to be $\mnorm{\rho(\tau)}/2 = \tau$ and $\tau$ is varied to select $\beta.$ In all cases, $N_1^*$ and $N_3^*$ are found using logistic regression on the empirical success probability. For each value of $N$ the empirical success probability is estimated using $50$ random runs.}

\label{fig:numerical}
\end{figure*}

The tomography procedure outlined here uses independent state copies and does not use coherent measurement. In  this setting, recent results have shown a sample complexity of $\Theta(8^n)$ for noiseless tomography \cite{chen2022tight}. Hence we do not expect a substantial improvement in the $n$ dependence in Theorem \ref{thm:samples}.  More interesting is the dependence of the sample complexity w.r.t to $\unorm{A}$ and $\mnorm{\rho}$ (via the $\beta$ parameter). To check the tightness of our analysis w.r.t these quantities, we numerically study the sample complexity scaling for a few different $2-$qubit examples. The results are given in \figurename~\ref{fig:numerical}. In these experiments, we estimate the sample complexity of Steps 1 and 3 while varying $\beta$ and $\unorm{A}$ independently for a family of two-qubit states. In all the cases, we empirically observe that the number of measurement shots scale as the inverse square of these quantities, corroborating Theorem \ref{thm:samples}.

\section{Incorporating prior information: Gauge fixing and efficiency improvements}
\label{sec:fixing}
We have shown that the gauge freedom in \eqref{eq:gauge} is the only obstacle in performing simultaneous tomography. This gauge can be fixed if we have access to prior information about the state and measurement noise or access to additional measurements. Further, especially from a practical point of view, prior information can be used to significantly reduce the number of measurements and classical post-processing required to perform simultaneous tomography. The linear operator framework provides a natural way to incorporate several types of prior information. We also find that many of the example priors we use correspond to assumptions made in the error mitigation literature previously \cite{heinsoo2018rapid, nachman2021categorizing, Nation2021, vandenBerg_ModFreeReadout_2022,Lin_SPAM_2021}%, vandenBerg2023, Kim2023}. 

\subsection{Using prior information to fix the gauge}
We start with examples of prior information about $\rho$  and $A$ that are sufficient to fix the gauge. Each of these conditions imply that no two pairs $(\rho,A)$ and ($\rho',A'$) can satisfy the conditions enforced by the prior information and lie in the one-dimensional gauge manifold \eqref{eq:gauge}.\\

% and then present cases that lead to computational improvements.

\paragraph{Block independent noise:}\vphantom{}

\noindent Suppose that the POVM is described by $M_{kl} = M^1_k \otimes M^2_l$ where $k \in [D_1]$ and $l \in [D_2]$ with $D_1D_2 = D$. This can refer to a partitioning of a set of binary valued outcomes  into two parts. Suppose that the noise acts independently on the two parts such that  $A = A^1\otimes A^2$ where $A^1$ acts on $M^1$ and $A^2$ acts on $M^2$. We show that the gauge can be uniquely fixed with this information. The details of the proof of uniqueness and the algorithm to find the unique gauge are given in Appendix~\ref{app:independent_noise}. Similar uncorrelated noise models have been used as a simplifying assumption in the literature \cite{heinsoo2018rapid, nachman2021categorizing,Nation2021, vandenBerg_ModFreeReadout_2022,keith2018joint}

\paragraph{Information on purity of the state:}\vphantom{}

\noindent Let the state $\rho$ satisfy the following \emph{purity} conditions:
\begin{enumerate}
    \item     $\tr(\rho^2) = \nu $
    \item There exists $\ket{v}$ such that $\braket{v|\rho|v} > 2^{-(n-1)}$.
\end{enumerate}

If the state $\rho$ satisfies these purity conditions then the gauge can be fixed in any system of at least two qubits. In general  if the purity of $\rho$ is known, and if its min-entropy  \cite{WEHRL1991119, chehade2019quantum} is less than $n-1$, then the gauge can be fixed. 
As an important special case, we note that any \emph{pure state} satisfies the purity conditions with $\nu = 1$ and there exists an eigenvalue equal to one. 

We can use this purity information to find the gauge as follows. Algorithm~\ref{alg:general_decoding} returns a set of state coefficients $s'_P$ up to the gauge freedom such that the actual state coefficients $s_P$ are related to the ones computed by the algorithm by
 \begin{align}
     s_P = \alpha s'_P,  \quad \forall P \neq I.
 \end{align}
Using the first purity condition $\tr(\rho^2) = \nu$, we get

 \begin{equation}
    \alpha^2 = \frac{\nu - 2^{-n}}{\sum_{P, P' \neq I} {s'}_P {s'}_{P'} \BK{P|P'}} \; .
 \end{equation}
 This specifies the state up to a sign 
 \begin{align}
     \rho = \frac{I}{2^{n}} \pm \alpha \sum_{P \neq I} s'_P P \; .
 \end{align}
 Denote the two candidates by $\rho_+$ and $\rho_-$ with $\rho_+ + \rho_- = \frac{I}{2^{n-1}}$.
 Using the second purity condition, there exists a state $\ket{v}$ such that $\braket{v|\rho_+|v} > 2^{-(n-1)}$. This gives that  $\braket{v|\rho_-|v} < 0$ which in turn implies that $\rho_-$ is not a positive semi-definite matrix. Hence one of the two candidate states will not be a valid quantum state and can be used to pick the correct sign and fix the gauge.\\

\paragraph{Probe state:}\vphantom{} 

\noindent The ability to have a known state (for instance $\ket{0}^{\otimes n}$) prepared can be used to fix the gauge  as in this case $\alpha$ can be directly inferred from  \eqref{eq:gauge}.  This type of prior information is used in \cite{vandenBerg_ModFreeReadout_2022}, along with Pauli twirling for the purpose of error mitigation.

Suppose that $\rho^{pb}$ is a known probe state that is measured using the $M_k$ to give
\begin{align}
    y_k^{probe} = \sum_{\kp \in [D]} A_{k \kp} \tr(\rho^{pb} M_k).
\end{align}
Since Algorithm~\ref{alg:general_decoding} outputs a candidate $A'(\alpha)$ up to the gauge degeneracy, we have
\revision{
\begin{align}
  y_k^{probe} = \frac{1}{\alpha} \sum_{\kp} &A'_{k \kp} (\alpha) \tr(\rho^{pb} M_k) \nonumber + \\ &(1-\frac{1}{\alpha})\frac{\sum_j A'(\alpha) _{k \,j}m_{j \, I}}{2^{n/2}} \sum_{\kp} \tr( \rho^{pb} M_{\kp}).
\end{align}
}
As $\alpha$ is the only unknown quantity, it can be easily computed from the above equation.

\subsection{Using prior information for computational improvements}
In this section, we list a set of prior information that can both fix the gauge and can be used within the linear operator framework to obtain reductions in number of measurements and post-processing. \\

\paragraph{Linearly represented prior information:}\vphantom{}

\noindent We consider a set of linearly represented prior information available on the state and the noise matrix. For the state $\rho$ this refers to a set of known expectation values that may for example correspond to known physical properties of the unknown state. Using a basis we can represent this information as follows.
\begin{align}   \label{eq:linear_prior_state}
    \sum_{P \in \bc_R} s_P b_{S,P}^i = d_S^i, \quad i=1,\ldots, N_s
\end{align}
Further assume that $d_S^1 \neq 0$ which will allow us to fix the gauge.
For the noise matrix $A$, linear prior information refers to the action of $A$ on known vectors. This for example, can be used to represent knowledge about $A$ from previous experiments or from the use of multiple probe states. We denote these known quantities by
\begin{align}  \label{eq:linear_prior_noise}
    A b_{A}^i = d_A^i, \quad i=1, \ldots, N_A.
\end{align}
With access to the information in \eqref{eq:linear_prior_state} and \eqref{eq:linear_prior_noise}, we will need access to only a subspace of superoperators $\lc^{lin} \subset \lc$ to perform simultaneous tomography. Essentially we need to access information in the orthogonal subspace to those given in \eqref{eq:linear_prior_state} and \eqref{eq:linear_prior_noise} by using appropriate operators from the linear operator space. The details are given in Appendix~\ref{app:linear_information}. \\

\paragraph{Denoising and hierarchical tomography:}\label{sec: hierarchical dec} \vphantom{}

\noindent We consider a special case of linearly represented prior information on $\rho$ that provides backward compatibility with some previously designed noise-free tomography method. Suppose that a set of unitaries $\uc_{nf} = \{U_1, \ldots, U_L \}$ have been designed to perform tomography on an unknown state $\rho$ in the absence of measurement noise. The set $\uc_{nf}$ essentially encodes the prior information on $\rho$ in a linear way. This is because we assume that the state can be uniquely specified, in the noiseless setting, from the set of coefficients $\{ \tr(U\rho U^\dagger M_i)| U \in \uc_{nf}, i \in [D]  \}.$

Our goal is to utilize the set $\uc_{nf}$ and provide a method to perform simultaneous tomography in the presence of measurement noise.

In this setting, we can use $\bc_L = \bc_R = \bc = \{\Bar{M}_i \mid i \in [D]\}$ and our goal is to find the noise matrix $A$ and the coefficients of $U\rho U^\dagger$ in the basis $\bc$ for all $U \in \uc_{nf}$.
For this, we need access to a subset $\lc^{den} \subset \lc$ of linear operators given by
\begin{align}
    \lc^{den} = \{E_{ij}, \quad i,j \in [D]\} \; ,
\end{align}
where $E_{ij}$ is defined as
\begin{align} \label{eq:denoisers}
&E_{ij} \Kket{\Bar{M}_i} - \Kket{\Bar{M}_j} \in \bc_L^\perp \; ,
~~ E_{ij} \Kket{\Bar{M}_{i'}} \in \bc_L^\perp ~~~~ \forall i' \neq i \; , \nonumber \\
&E_{ij} \Kket Q \in \bc_L^\perp ~~~~ \forall Q \in \bc_L^\perp \; .
\end{align}

The set of operators in \eqref{eq:denoisers} are sufficient to \emph{denoise} each of the original measurements. If $\uc^{den}$ is a generator set for $\lc^{den}$ then the generator set for performing simultaneous tomography in the hierarchical setting is given by
\begin{align}
    \uc^{hier} = \{U_1 U_2 \mid U_1 \in \uc^{den}, U_2 \in \uc^{nf}\}.
\end{align}

Essentially, we are first using the previously designed tomography gate set $\uc^{nf}$ to prepare the states $U_1 \rho U_1^\dagger \ldots  U_L \rho U^\dagger_L,$ and then use a simultaneous tomography procedure to estimate the projection of these states into the subspace defined by the POVM.

\begin{centering}
\begin{tcolorbox}[enhanced jigsaw,
                  colback=gray!6,
                  colframe=black,
                  width=8.5cm,
                  arc=1mm, auto outer arc,
                  boxrule=0.5pt,
                 ]
{\bf  Running example:}

Suppose we have prior information that $\rho$ is diagonal in the computational basis. So if we use the computational basis measurements, $\uc_{nf} = \{I\}.$  As another example, if we know that the state, when expanded in the Pauli basis, has only $X$ terms (e.g.  $\rho \propto I + X \otimes X)$, then $\uc_{nf}$ would contain the global rotation operator that takes $X$ to $Z.$
\end{tcolorbox}
\end{centering}

\paragraph{Binary symmetric channel:}\vphantom{}

\noindent In this part, we consider a special case of hierarchical tomography where the measurements are along the computational basis. The binary symmetric channel refers to the case where each of the $n$ binary observables is flipped with a given probability independent of the rest.  
Let the bit flip probabilities be $p_1, \ldots, p_n$ where $p_i \neq 1/2$. Then the output noise matrix has the special form
\begin{align}
    A = \bigotimes_{i=1}^n A_i^{sym}, \quad \mbox{where } A_i^{sym} = \begin{bmatrix}
        1-p_i & p_i \\
        p_i & 1-p_i
    \end{bmatrix}
\end{align}
By Theorem~\ref{thm:independence_uniqueness}, the family of binary symmetric channel noise matrices allows us to fix the gauge degree of freedom. Although this is a special case of block independent noise matrtices, the extra structure can be exploited to obtain a simpler denoising algorithm. The required subset $L^{BSC} \subseteq \lc$ for performing this task is relatively small and the corresponding generator set can be realized by depth $n$ circuits consisting of $CNOT$ and $SWAP$ gates. The details of the algorithm is given in Appendix~\ref{app:BSC}. \\

\paragraph{Independent ancilla:}\vphantom{}

\noindent We consider the availability of a set of ancilla qubits where the state and measurement noise are independent of the rest. The presence of such an independent ancilla has been assumed in the work \cite{Lin_SPAM_2021}. We can view this as a special case of uncorrelated noise and therefore we can fix the gauge. The POVM is the set $\{ M^{anc}_i \otimes M_j \mid i \in [D^{a}], j \in [D^r]\}$.  The overall state and the corresponding suitable choice of basis for the traceless subspaces can be written as
\begin{align}
    \rho = \rho^{a} \otimes \rho^r, ~~ \bc_L =  \bc_L^a \otimes \bc_L^r, ~~ \bc_R =  \bc_R^a \otimes \bc_R^r,
\end{align}
where as before we choose  $\bc_L^a = \{M_i^a \mid i \in [D^a]\}$ and $\bc_L^r = \{M_i^r \mid i \in [D^r]\}$. By independence of noise on the ancilla qubits, we can decompose the noise matrix as
\begin{align}
    A = A^{a} \otimes A^r.
\end{align}
This special structure also allows us to significantly reduce the set of operators $\lc^{anc} \subseteq \lc$ that we need to perform tomography on the state $\rho$. The specification of these operators and the corresponding algorithm is given in Appendix~\ref{app:ancilla}. This serves as a partial generalization of the construction in \cite{Lin_SPAM_2021}.
~\\
We  defer a  complete analysis and optimization of all the priors discussed in this section, including sample complexities and construction of eliminators, for future work. 
\section{Discussion}\label{sec: discussion}
In this work, we have introduced a general framework for simultaneous tomography. We have completely characterized the gauge ambiguity inherent to this problem and have shown many different ways to get around this limitation. The various scenarios discussed in this context also subsume many assumptions made in prior literature to solve this problem.

There are several directions along which this work can be extended. Like any method attempting to perform full tomography on a quantum system, we find that our method also has exponential complexity in the number of qubits. Recent advances in classical shadows have given more practical methods that help in estimating accessible, but limited information from quantum states \cite{huang2020predicting}. Developing  a similar technique for simultaneous tomography would help us extract useful information from  $\rho$ and $A$. Recent works on classical shadows in the presence of noise \cite{koh2022classical} show promise in this direction. 

The ideas presented here can also be extended to the simultaneous characterization of $\rho$ along with more general forms of physical transformations acting on it. Gauge ambiguities also exist in such general cases and the same type of priors discussed here might not be able to fix the gauge. For example, the nature of the gauge transformations will change when trying to estimate  $\rho$ and a CPTP map $\Phi$ given the ability to measure states of the form of $\Phi(U\rho U^\dagger)$  \cite{Gauge2}. And in the general case, the gauge group can have a much more complicated structure than the one-parameter case discussed here. We anticipate that the present work can be possibly extended to study a much richer class of problems that naturally arise in the study of quantum systems.

\subsection*{Acknowledgements}

The authors acknowledge support from the Laboratory Directed Research and Development program of Los Alamos National Laboratory under Projects 20220545CR-NLS, 20210114ER, 20230338ER, and 20240032DR. SC was supported by the U.S. Department of Energy (DOE) through a quantum computing program sponsored by the Los Alamos National Laborator  Information Science and Technology Institute. AYL was partially supported by the U.S. DOE/SC Advanced Scientific Computing Research Program.
\bibliography{draft_bib}

\begin{thebibliography}{71}
\providecommand{\natexlab}[1]{#1}
\providecommand{\url}[1]{\texttt{#1}}
\expandafter\ifx\csname urlstyle\endcsname\relax
  \providecommand{\doi}[1]{doi: #1}\else
  \providecommand{\doi}{doi: \begingroup \urlstyle{rm}\Url}\fi

\bibitem[Aaronson and Chen(2017)]{aaronson_chen2016}
Scott Aaronson and Lijie Chen.
\newblock Complexity-theoretic foundations of quantum supremacy experiments.
\newblock In \emph{Proceedings of the 32nd Computational Complexity Conference}, CCC '17, Dagstuhl, DEU, 2017. Schloss Dagstuhl--Leibniz-Zentrum fuer Informatik.
\newblock ISBN 9783959770408.
\newblock \doi{10.48550/arXiv.1612.05903}.

\bibitem[Almudever et~al.(2017)Almudever, Lao, Fu, Khammassi, Ashraf, Iorga, Varsamopoulos, Eichler, Wallraff, Geck, Kruth, Knoch, Bluhm, and Bertels]{Almudever2017}
C.~G. Almudever, L.~Lao, X.~Fu, N.~Khammassi, I.~Ashraf, D.~Iorga, S.~Varsamopoulos, C.~Eichler, A.~Wallraff, L.~Geck, A.~Kruth, J.~Knoch, H.~Bluhm, and K~Bertels.
\newblock The engineering challenges in quantum computing.
\newblock In \emph{Design, Automation \& Test in Europe Conference \& Exhibition, 2017}, pages 836--845, 2017.
\newblock \doi{10.23919/DATE.2017.7927104}.

\bibitem[Arute et~al.(2019)Arute, Arya, Babbush, Bacon, Bardin, Barends, Biswas, Boixo, Brandao, Buell, Burkett, Chen, Chen, Chiaro, Collins, Courtney, Dunsworth, Farhi, Foxen, Fowler, Gidney, Giustina, Graff, Guerin, Habegger, Harrigan, Hartmann, Ho, Hoffmann, Huang, Humble, Isakov, Jeffrey, Jiang, Kafri, Kechedzhi, Kelly, Klimov, Knysh, Korotkov, Kostritsa, Landhuis, Lindmark, Lucero, Lyakh, Mandr{\`a}, McClean, McEwen, Megrant, Mi, Michielsen, Mohseni, Mutus, Naaman, Neeley, Neill, Niu, Ostby, Petukhov, Platt, Quintana, Rieffel, Roushan, Rubin, Sank, Satzinger, Smelyanskiy, Sung, Trevithick, Vainsencher, Villalonga, White, Yao, Yeh, Zalcman, Neven, and Martinis]{Arute2019}
Frank Arute, Kunal Arya, Ryan Babbush, Dave Bacon, Joseph~C. Bardin, Rami Barends, Rupak Biswas, Sergio Boixo, Fernando G. S.~L. Brandao, David~A. Buell, Brian Burkett, Yu~Chen, Zijun Chen, Ben Chiaro, Roberto Collins, William Courtney, Andrew Dunsworth, Edward Farhi, Brooks Foxen, Austin Fowler, Craig Gidney, Marissa Giustina, Rob Graff, Keith Guerin, Steve Habegger, Matthew~P. Harrigan, Michael~J. Hartmann, Alan Ho, Markus Hoffmann, Trent Huang, Travis~S. Humble, Sergei~V. Isakov, Evan Jeffrey, Zhang Jiang, Dvir Kafri, Kostyantyn Kechedzhi, Julian Kelly, Paul~V. Klimov, Sergey Knysh, Alexander Korotkov, Fedor Kostritsa, David Landhuis, Mike Lindmark, Erik Lucero, Dmitry Lyakh, Salvatore Mandr{\`a}, Jarrod~R. McClean, Matthew McEwen, Anthony Megrant, Xiao Mi, Kristel Michielsen, Masoud Mohseni, Josh Mutus, Ofer Naaman, Matthew Neeley, Charles Neill, Murphy~Yuezhen Niu, Eric Ostby, Andre Petukhov, John~C. Platt, Chris Quintana, Eleanor~G. Rieffel, Pedram Roushan, Nicholas~C. Rubin, Daniel Sank, Kevin~J.
  Satzinger, Vadim Smelyanskiy, Kevin~J. Sung, Matthew~D. Trevithick, Amit Vainsencher, Benjamin Villalonga, Theodore White, Z.~Jamie Yao, Ping Yeh, Adam Zalcman, Hartmut Neven, and John~M. Martinis.
\newblock Quantum supremacy using a programmable superconducting processor.
\newblock \emph{Nature}, 574\penalty0 (7779):\penalty0 505--510, Oct 2019.
\newblock ISSN 1476-4687.
\newblock \doi{10.1038/s41586-019-1666-5}.

\bibitem[Blume-Kohout et~al.(2013)Blume-Kohout, Gamble, Nielsen, Mizrahi, Sterk, and Maunz]{Blume2013}
Robin Blume-Kohout, John~King Gamble, Erik Nielsen, Jonathan Mizrahi, Jonathan~D. Sterk, and Peter Maunz.
\newblock Robust, self-consistent, closed-form tomography of quantum logic gates on a trapped ion qubit.
\newblock 2013.
\newblock \doi{10.48550/ARXIV.1310.4492}.

\bibitem[Bouland et~al.(2022)Bouland, Fefferman, Landau, and Liu]{9719705}
Adam Bouland, Bill Fefferman, Zeph Landau, and Yunchao Liu.
\newblock Noise and the frontier of quantum supremacy.
\newblock In \emph{2021 IEEE 62nd Annual Symposium on Foundations of Computer Science (FOCS)}, pages 1308--1317, 2022.
\newblock \doi{10.1109/FOCS52979.2021.00127}.

\bibitem[Bruzewicz et~al.(2019)Bruzewicz, Chiaverini, McConnell, and Sage]{Bruzewicz2019}
Colin~D. Bruzewicz, John Chiaverini, Robert McConnell, and Jeremy~M. Sage.
\newblock Trapped-ion quantum computing: Progress and challenges.
\newblock \emph{Applied Physics Reviews}, 6\penalty0 (2):\penalty0 021314, 2019.
\newblock \doi{10.1063/1.5088164}.

\bibitem[Buluta et~al.(2011)Buluta, Ashhab, and Nori]{Buluta_2011}
Iulia Buluta, Sahel Ashhab, and Franco Nori.
\newblock Natural and artificial atoms for quantum computation.
\newblock \emph{Reports on Progress in Physics}, 74\penalty0 (10):\penalty0 104401, sep 2011.
\newblock \doi{10.1088/0034-4885/74/10/104401}.

\bibitem[Cai et~al.(2023)Cai, Babbush, Benjamin, Endo, Huggins, Li, McClean, and O’Brien]{Cai2022}
Zhenyu Cai, Ryan Babbush, Simon~C Benjamin, Suguru Endo, William~J Huggins, Ying Li, Jarrod~R McClean, and Thomas~E O’Brien.
\newblock Quantum error mitigation.
\newblock \emph{Reviews of Modern Physics}, 95\penalty0 (4):\penalty0 045005, 2023.
\newblock \doi{10.1103/RevModPhys.95.045005}.

\bibitem[Campbell et~al.(2017)Campbell, Terhal, and Vuillot]{Campbell2017}
Earl~T. Campbell, Barbara~M. Terhal, and Christophe Vuillot.
\newblock Roads towards fault-tolerant universal quantum computation.
\newblock \emph{Nature}, 549\penalty0 (7671):\penalty0 172--179, Sep 2017.
\newblock ISSN 1476-4687.
\newblock \doi{10.1038/nature23460}.

\bibitem[Chaurasiya and Arora(2022)]{Chaurasiya2022}
Rohit Chaurasiya and Devanshi Arora.
\newblock \emph{Photonic Quantum Computing}, pages 127--156.
\newblock Springer International Publishing, Cham, 2022.
\newblock ISBN 978-3-031-04613-1.
\newblock \doi{10.1007/978-3-031-04613-1_4}.

\bibitem[Chehade and Vershynina(2019)]{chehade2019quantum}
S.~S Chehade and A.~Vershynina.
\newblock {Q}uantum entropies.
\newblock \emph{Scholarpedia}, 14\penalty0 (2):\penalty0 53131, 2019.
\newblock \doi{10.4249/scholarpedia.53131}.
\newblock revision \#197083.

\bibitem[Chen et~al.(2022)Chen, Li, Huang, and Liu]{chen2022tight}
Sitan Chen, Jerry Li, Brice Huang, and Allen Liu.
\newblock Tight bounds for quantum state certification with incoherent measurements.
\newblock In \emph{2022 IEEE 63rd Annual Symposium on Foundations of Computer Science (FOCS)}, pages 1205--1213. IEEE, 2022.
\newblock \doi{10.1109/FOCS54457.2022.00118}.

\bibitem[Chen et~al.(2023)Chen, Cotler, Huang, and Li]{chen2022}
Sitan Chen, Jordan Cotler, Hsin-Yuan Huang, and Jerry Li.
\newblock The complexity of nisq.
\newblock \emph{Nature Communications}, 14\penalty0 (1):\penalty0 6001, 2023.
\newblock \doi{10.1038/s41467-023-41217-6}.

\bibitem[Chiaverini et~al.(2004)Chiaverini, Leibfried, Schaetz, Barrett, Blakestad, Britton, Itano, Jost, Knill, Langer, Ozeri, and Wineland]{Chiaverini2004}
J.~Chiaverini, D.~Leibfried, T.~Schaetz, M.~D. Barrett, R.~B. Blakestad, J.~Britton, W.~M. Itano, J.~D. Jost, E.~Knill, C.~Langer, R.~Ozeri, and D.~J. Wineland.
\newblock Realization of quantum error correction.
\newblock \emph{Nature}, 432\penalty0 (7017):\penalty0 602--605, Dec 2004.
\newblock ISSN 1476-4687.
\newblock \doi{10.1038/nature03074}.

\bibitem[Chuang and Nielsen(1997)]{Chuang1997}
Isaac~L. Chuang and M.~A. Nielsen.
\newblock Prescription for experimental determination of the dynamics of a quantum black box.
\newblock \emph{Journal of Modern Optics}, 44\penalty0 (11-12):\penalty0 2455--2467, 1997.
\newblock \doi{10.1080/09500349708231894}.

\bibitem[Cory et~al.(1998)Cory, Price, Maas, Knill, Laflamme, Zurek, Havel, and Somaroo]{Cory1998}
D.~G. Cory, M.~D. Price, W.~Maas, E.~Knill, R.~Laflamme, W.~H. Zurek, T.~F. Havel, and S.~S. Somaroo.
\newblock Experimental quantum error correction.
\newblock \emph{Phys. Rev. Lett.}, 81:\penalty0 2152--2155, Sep 1998.
\newblock \doi{10.1103/PhysRevLett.81.2152}.

\bibitem[Córcoles et~al.(2020)Córcoles, Kandala, Javadi-Abhari, McClure, Cross, Temme, Nation, Steffen, and Gambetta]{Corcoles2020}
Antonio~D. Córcoles, Abhinav Kandala, Ali Javadi-Abhari, Douglas~T. McClure, Andrew~W. Cross, Kristan Temme, Paul~D. Nation, Matthias Steffen, and Jay~M. Gambetta.
\newblock Challenges and opportunities of near-term quantum computing systems.
\newblock \emph{Proceedings of the IEEE}, 108\penalty0 (8):\penalty0 1338--1352, 2020.
\newblock \doi{10.1109/JPROC.2019.2954005}.

\bibitem[Dankert et~al.(2009)Dankert, Cleve, Emerson, and Livine]{dankert2009exact}
Christoph Dankert, Richard Cleve, Joseph Emerson, and Etera Livine.
\newblock Exact and approximate unitary 2-designs and their application to fidelity estimation.
\newblock \emph{Phys. Rev. A}, 80:\penalty0 012304, Jul 2009.
\newblock \doi{10.1103/PhysRevA.80.012304}.

\bibitem[de~Leon et~al.(2021)de~Leon, Itoh, Kim, Mehta, Northup, Paik, Palmer, Samarth, Sangtawesin, and Steuerman]{Leon2021}
Nathalie~P. de~Leon, Kohei~M. Itoh, Dohun Kim, Karan~K. Mehta, Tracy~E. Northup, Hanhee Paik, B.~S. Palmer, N.~Samarth, Sorawis Sangtawesin, and D.~W. Steuerman.
\newblock Materials challenges and opportunities for quantum computing hardware.
\newblock \emph{Science}, 372\penalty0 (6539):\penalty0 eabb2823, 2021.
\newblock \doi{10.1126/science.abb2823}.

\bibitem[Elder et~al.(2020)Elder, Wang, Reinhold, Hann, Chou, Lester, Rosenblum, Frunzio, Jiang, and Schoelkopf]{Elder2020}
Salvatore~S. Elder, Christopher~S. Wang, Philip Reinhold, Connor~T. Hann, Kevin~S. Chou, Brian~J. Lester, Serge Rosenblum, Luigi Frunzio, Liang Jiang, and Robert~J. Schoelkopf.
\newblock High-fidelity measurement of qubits encoded in multilevel superconducting circuits.
\newblock \emph{Phys. Rev. X}, 10:\penalty0 011001, Jan 2020.
\newblock \doi{10.1103/PhysRevX.10.011001}.

\bibitem[Endo et~al.(2018)Endo, Benjamin, and Li]{Endo2018}
Suguru Endo, Simon~C. Benjamin, and Ying Li.
\newblock Practical quantum error mitigation for near-future applications.
\newblock \emph{Phys. Rev. X}, 8:\penalty0 031027, Jul 2018.
\newblock \doi{10.1103/PhysRevX.8.031027}.

\bibitem[Gottesman(1997)]{Gottesman1997}
Daniel Gottesman.
\newblock Stabilizer codes and quantum error correction.
\newblock 1997.
\newblock \doi{10.48550/ARXIV.QUANT-PH/9705052}.
\newblock URL \url{https://arxiv.org/abs/quant-ph/9705052}.

\bibitem[Gottesman(1998)]{gottesman1998theory}
Daniel Gottesman.
\newblock Theory of fault-tolerant quantum computation.
\newblock \emph{Phys. Rev. A}, 57:\penalty0 127--137, Jan 1998.
\newblock \doi{10.1103/PhysRevA.57.127}.

\bibitem[Greenbaum(2015)]{Greenbaum2015}
Daniel Greenbaum.
\newblock Introduction to quantum gate set tomography.
\newblock 2015.
\newblock \doi{10.48550/ARXIV.1509.02921}.
\newblock URL \url{https://arxiv.org/abs/1509.02921}.

\bibitem[Harrow and Montanaro(2017)]{Harrow2017}
Aram~W. Harrow and Ashley Montanaro.
\newblock Quantum computational supremacy.
\newblock \emph{Nature}, 549\penalty0 (7671):\penalty0 203--209, Sep 2017.
\newblock ISSN 1476-4687.
\newblock URL \url{https://doi.org/10.1038/nature23458}.

\bibitem[Heinsoo et~al.(2018)Heinsoo, Andersen, Remm, Krinner, Walter, Salath\'e, Gasparinetti, Besse, Poto\ifmmode~\check{c}\else \v{c}\fi{}nik, Wallraff, and Eichler]{heinsoo2018rapid}
Johannes Heinsoo, Christian~Kraglund Andersen, Ants Remm, Sebastian Krinner, Theodore Walter, Yves Salath\'e, Simone Gasparinetti, Jean-Claude Besse, Anton Poto\ifmmode~\check{c}\else \v{c}\fi{}nik, Andreas Wallraff, and Christopher Eichler.
\newblock Rapid high-fidelity multiplexed readout of superconducting qubits.
\newblock \emph{Phys. Rev. Appl.}, 10:\penalty0 034040, Sep 2018.
\newblock \doi{10.1103/PhysRevApplied.10.034040}.

\bibitem[Helsen et~al.(2022)Helsen, Roth, Onorati, Werner, and Eisert]{Helsen2022}
J.~Helsen, I.~Roth, E.~Onorati, A.H. Werner, and J.~Eisert.
\newblock General framework for randomized benchmarking.
\newblock \emph{PRX Quantum}, 3:\penalty0 020357, Jun 2022.
\newblock \doi{10.1103/PRXQuantum.3.020357}.

\bibitem[Henriet et~al.(2020)Henriet, Beguin, Signoles, Lahaye, Browaeys, Reymond, and Jurczak]{Henriet_2020}
Loïc Henriet, Lucas Beguin, Adrien Signoles, Thierry Lahaye, Antoine Browaeys, Georges-Olivier Reymond, and Christophe Jurczak.
\newblock Quantum computing with neutral atoms.
\newblock \emph{Quantum}, 4:\penalty0 327, sep 2020.
\newblock \doi{10.22331/q-2020-09-21-327}.

\bibitem[Huang et~al.(2020)Huang, Kueng, and Preskill]{huang2020predicting}
Hsin-Yuan Huang, Richard Kueng, and John Preskill.
\newblock Predicting many properties of a quantum system from very few measurements.
\newblock \emph{Nature Physics}, 16\penalty0 (10):\penalty0 1050--1057, Oct 2020.
\newblock ISSN 1745-2481.
\newblock \doi{10.1038/s41567-020-0932-7}.

\bibitem[Jackson and van Enk(2015)]{Gauge1}
Christopher Jackson and S.~J. van Enk.
\newblock Detecting correlated errors in state-preparation-and-measurement tomography.
\newblock \emph{Phys. Rev. A}, 92:\penalty0 042312, Oct 2015.
\newblock \doi{10.1103/PhysRevA.92.042312}.

\bibitem[Kandala et~al.(2019)Kandala, Temme, C{\'o}rcoles, Mezzacapo, Chow, and Gambetta]{Kandala2019}
Abhinav Kandala, Kristan Temme, Antonio~D. C{\'o}rcoles, Antonio Mezzacapo, Jerry~M. Chow, and Jay~M. Gambetta.
\newblock Error mitigation extends the computational reach of a noisy quantum processor.
\newblock \emph{Nature}, 567\penalty0 (7749):\penalty0 491--495, Mar 2019.
\newblock ISSN 1476-4687.
\newblock \doi{10.1038/s41586-019-1040-7}.

\bibitem[Keith et~al.(2018)Keith, Baldwin, Glancy, and Knill]{keith2018joint}
Adam~C Keith, Charles~H Baldwin, Scott Glancy, and Emanuel Knill.
\newblock Joint quantum-state and measurement tomography with incomplete measurements.
\newblock \emph{Physical Review A}, 98\penalty0 (4):\penalty0 042318, 2018.
\newblock \doi{10.1103/PhysRevA.98.042318}.

\bibitem[Kinos et~al.(2021)Kinos, Hunger, Kolesov, Mølmer, de~Riedmatten, Goldner, Tallaire, Morvan, Berger, Welinski, Karrai, Rippe, Kröll, and Walther]{Kinos2021}
Adam Kinos, David Hunger, Roman Kolesov, Klaus Mølmer, Hugues de~Riedmatten, Philippe Goldner, Alexandre Tallaire, Loic Morvan, Perrine Berger, Sacha Welinski, Khaled Karrai, Lars Rippe, Stefan Kröll, and Andreas Walther.
\newblock Roadmap for rare-earth quantum computing.
\newblock 2021.
\newblock \doi{10.48550/ARXIV.2103.15743}.
\newblock URL \url{https://arxiv.org/abs/2103.15743}.

\bibitem[Knill et~al.(2008)Knill, Leibfried, Reichle, Britton, Blakestad, Jost, Langer, Ozeri, Seidelin, and Wineland]{Knill2008}
E.~Knill, D.~Leibfried, R.~Reichle, J.~Britton, R.~B. Blakestad, J.~D. Jost, C.~Langer, R.~Ozeri, S.~Seidelin, and D.~J. Wineland.
\newblock Randomized benchmarking of quantum gates.
\newblock \emph{Phys. Rev. A}, 77:\penalty0 012307, Jan 2008.
\newblock \doi{10.1103/PhysRevA.77.012307}.

\bibitem[Knill et~al.(2000)Knill, Laflamme, and Viola]{Knill2000}
Emanuel Knill, Raymond Laflamme, and Lorenza Viola.
\newblock Theory of quantum error correction for general noise.
\newblock \emph{Phys. Rev. Lett.}, 84:\penalty0 2525--2528, Mar 2000.
\newblock \doi{10.1103/PhysRevLett.84.2525}.

\bibitem[Koh and Grewal(2022)]{koh2022classical}
Dax~Enshan Koh and Sabee Grewal.
\newblock Classical shadows with noise.
\newblock \emph{Quantum}, 6:\penalty0 776, 2022.
\newblock \doi{10.48550/arXiv.2011.11580}.

\bibitem[Ladd et~al.(2010)Ladd, Jelezko, Laflamme, Nakamura, Monroe, and O'Brien]{Ladd2010}
T.~D. Ladd, F.~Jelezko, R.~Laflamme, Y.~Nakamura, C.~Monroe, and J.~L. O'Brien.
\newblock Quantum computers.
\newblock \emph{Nature}, 464\penalty0 (7285):\penalty0 45--53, Mar 2010.
\newblock ISSN 1476-4687.
\newblock \doi{10.1038/nature08812}.

\bibitem[Laflamme et~al.(2022)Laflamme, Lin, and Mor]{Laflamme2022}
Raymond Laflamme, Junan Lin, and Tal Mor.
\newblock Algorithmic cooling for resolving state preparation and measurement errors in quantum computing.
\newblock \emph{Phys. Rev. A}, 106:\penalty0 012439, Jul 2022.
\newblock \doi{10.1103/PhysRevA.106.012439}.

\bibitem[Lidar and Brun(2013)]{lidar_brun_2013}
Daniel~A. Lidar and Todd~A. Brun.
\newblock \emph{Quantum Error Correction}.
\newblock Cambridge University Press, 2013.
\newblock \doi{10.1017/CBO9781139034807}.

\bibitem[Lin et~al.(2019{\natexlab{a}})Lin, Buonacorsi, Laflamme, and Wallman]{Gauge2}
Junan Lin, Brandon Buonacorsi, Raymond Laflamme, and Joel~J Wallman.
\newblock On the freedom in representing quantum operations.
\newblock \emph{New Journal of Physics}, 21\penalty0 (2):\penalty0 023006, feb 2019{\natexlab{a}}.
\newblock \doi{10.1088/1367-2630/ab075a}.

\bibitem[Lin et~al.(2019{\natexlab{b}})Lin, Buonacorsi, Laflamme, and Wallman]{lin2019freedom}
Junan Lin, Brandon Buonacorsi, Raymond Laflamme, and Joel~J Wallman.
\newblock On the freedom in representing quantum operations.
\newblock \emph{New Journal of Physics}, 21\penalty0 (2):\penalty0 023006, 2019{\natexlab{b}}.
\newblock \doi{10.1088/1367-2630/ab075a}.

\bibitem[Lin et~al.(2021)Lin, Wallman, Hincks, and Laflamme]{Lin_SPAM_2021}
Junan Lin, Joel~J. Wallman, Ian Hincks, and Raymond Laflamme.
\newblock Independent state and measurement characterization for quantum computers.
\newblock \emph{Phys. Rev. Research}, 3:\penalty0 033285, Sep 2021.
\newblock \doi{10.1103/PhysRevResearch.3.033285}.

\bibitem[Magesan et~al.(2011)Magesan, Gambetta, and Emerson]{Magesan2011}
Easwar Magesan, J.~M. Gambetta, and Joseph Emerson.
\newblock Scalable and robust randomized benchmarking of quantum processes.
\newblock \emph{Phys. Rev. Lett.}, 106:\penalty0 180504, May 2011.
\newblock \doi{10.1103/PhysRevLett.106.180504}.

\bibitem[Magesan et~al.(2012)Magesan, Gambetta, and Emerson]{Magesan2012}
Easwar Magesan, Jay~M. Gambetta, and Joseph Emerson.
\newblock Characterizing quantum gates via randomized benchmarking.
\newblock \emph{Phys. Rev. A}, 85:\penalty0 042311, Apr 2012.
\newblock \doi{10.1103/PhysRevA.85.042311}.

\bibitem[Maslov and Roetteler(2018)]{maslov2018shorter}
Dmitri Maslov and Martin Roetteler.
\newblock Shorter stabilizer circuits via bruhat decomposition and quantum circuit transformations.
\newblock \emph{IEEE Transactions on Information Theory}, 64\penalty0 (7):\penalty0 4729--4738, 2018.
\newblock \doi{10.1109/TIT.2018.2825602}.

\bibitem[Matteo et~al.(2020)Matteo, Gamble, Granade, Rudinger, and Wiebe]{Di_Matteo_2020}
Olivia~Di Matteo, John Gamble, Chris Granade, Kenneth Rudinger, and Nathan Wiebe.
\newblock Operational, gauge-free quantum tomography.
\newblock \emph{Quantum}, 4:\penalty0 364, nov 2020.
\newblock \doi{10.22331/q-2020-11-17-364}.

\bibitem[Merkel et~al.(2013)Merkel, Gambetta, Smolin, Poletto, C\'orcoles, Johnson, Ryan, and Steffen]{Merkel_QProcTom_2013}
Seth~T. Merkel, Jay~M. Gambetta, John~A. Smolin, Stefano Poletto, Antonio~D. C\'orcoles, Blake~R. Johnson, Colm~A. Ryan, and Matthias Steffen.
\newblock Self-consistent quantum process tomography.
\newblock \emph{Phys. Rev. A}, 87:\penalty0 062119, Jun 2013.
\newblock \doi{10.1103/PhysRevA.87.062119}.

\bibitem[Mohseni et~al.(2008)Mohseni, Rezakhani, and Lidar]{Mohseni2008}
M.~Mohseni, A.~T. Rezakhani, and D.~A. Lidar.
\newblock Quantum-process tomography: Resource analysis of different strategies.
\newblock \emph{Phys. Rev. A}, 77:\penalty0 032322, Mar 2008.
\newblock \doi{10.1103/PhysRevA.77.032322}.

\bibitem[Nachman and Geller(2021)]{nachman2021categorizing}
Benjamin Nachman and Michael~R. Geller.
\newblock Categorizing readout error correlations on near term quantum computers.
\newblock 2021.
\newblock \doi{10.48550/arXiv.2104.04607}.

\bibitem[Nation et~al.(2021)Nation, Kang, Sundaresan, and Gambetta]{Nation2021}
Paul~D. Nation, Hwajung Kang, Neereja Sundaresan, and Jay~M. Gambetta.
\newblock Scalable mitigation of measurement errors on quantum computers.
\newblock \emph{PRX Quantum}, 2:\penalty0 040326, Nov 2021.
\newblock \doi{10.1103/PRXQuantum.2.040326}.

\bibitem[Nielsen et~al.(2021)Nielsen, Gamble, Rudinger, Scholten, Young, and Blume-Kohout]{Nielsen_2021}
Erik Nielsen, John~King Gamble, Kenneth Rudinger, Travis Scholten, Kevin Young, and Robin Blume-Kohout.
\newblock Gate set tomography.
\newblock \emph{Quantum}, 5:\penalty0 557, oct 2021.
\newblock \doi{10.22331/q-2021-10-05-557}.

\bibitem[Nielsen and Chuang(2010)]{nielsen2002quantum}
Michael~A. Nielsen and Isaac~L. Chuang.
\newblock \emph{Quantum Computation and Quantum Information: 10th Anniversary Edition}.
\newblock Cambridge University Press, 2010.
\newblock \doi{10.1017/CBO9780511976667}.

\bibitem[Opremcak et~al.(2021)Opremcak, Liu, Wilen, Okubo, Christensen, Sank, White, Vainsencher, Giustina, Megrant, Burkett, Plourde, and McDermott]{Opremcak2021}
A.~Opremcak, C.~H. Liu, C.~Wilen, K.~Okubo, B.~G. Christensen, D.~Sank, T.~C. White, A.~Vainsencher, M.~Giustina, A.~Megrant, B.~Burkett, B.~L.~T. Plourde, and R.~McDermott.
\newblock High-fidelity measurement of a superconducting qubit using an on-chip microwave photon counter.
\newblock \emph{Phys. Rev. X}, 11:\penalty0 011027, Feb 2021.
\newblock \doi{10.1103/PhysRevX.11.011027}.

\bibitem[Papageorgiou and Traub(2013)]{PhysRevA.88.022316}
Anargyros Papageorgiou and Joseph~F. Traub.
\newblock Measures of quantum computing speedup.
\newblock \emph{Phys. Rev. A}, 88:\penalty0 022316, Aug 2013.
\newblock \doi{10.1103/PhysRevA.88.022316}.

\bibitem[Poyatos et~al.(1997)Poyatos, Cirac, and Zoller]{Poyatos1997}
J.~F. Poyatos, J.~I. Cirac, and P.~Zoller.
\newblock Complete characterization of a quantum process: The two-bit quantum gate.
\newblock \emph{Phys. Rev. Lett.}, 78:\penalty0 390--393, Jan 1997.
\newblock \doi{10.1103/PhysRevLett.78.390}.

\bibitem[Preskill(1998)]{Preskill1998}
John Preskill.
\newblock \emph{Fault-Tolerant Quantum Computation}, pages 213--269.
\newblock 1998.
\newblock \doi{10.1142/9789812385253_0008}.

\bibitem[Preskill(2018)]{Preskill_2018}
John Preskill.
\newblock Quantum computing in the {NISQ} era and beyond.
\newblock \emph{Quantum}, 2:\penalty0 79, aug 2018.
\newblock \doi{10.22331/q-2018-08-06-79}.

\bibitem[Reilly(2019)]{Reilly2019}
D.~J. Reilly.
\newblock Challenges in scaling-up the control interface of a quantum computer.
\newblock In \emph{2019 IEEE International Electron Devices Meeting (IEDM)}, pages 31.7.1--31.7.6, 2019.
\newblock \doi{10.1109/IEDM19573.2019.8993497}.

\bibitem[Roffe(2019)]{Roffe2019}
Joschka Roffe.
\newblock Quantum error correction: an introductory guide.
\newblock \emph{Contemporary Physics}, 60\penalty0 (3):\penalty0 226--245, 2019.
\newblock \doi{10.1080/00107514.2019.1667078}.

\bibitem[Ryan-Anderson et~al.(2021)Ryan-Anderson, Bohnet, Lee, Gresh, Hankin, Gaebler, Francois, Chernoguzov, Lucchetti, Brown, Gatterman, Halit, Gilmore, Gerber, Neyenhuis, Hayes, and Stutz]{Ryan2021}
C.~Ryan-Anderson, J.~G. Bohnet, K.~Lee, D.~Gresh, A.~Hankin, J.~P. Gaebler, D.~Francois, A.~Chernoguzov, D.~Lucchetti, N.~C. Brown, T.~M. Gatterman, S.~K. Halit, K.~Gilmore, J.~A. Gerber, B.~Neyenhuis, D.~Hayes, and R.~P. Stutz.
\newblock Realization of real-time fault-tolerant quantum error correction.
\newblock \emph{Phys. Rev. X}, 11:\penalty0 041058, Dec 2021.
\newblock \doi{10.1103/PhysRevX.11.041058}.

\bibitem[Saffman(2016)]{Saffman_2016}
M~Saffman.
\newblock Quantum computing with atomic qubits and rydberg interactions: progress and challenges.
\newblock \emph{Journal of Physics B: Atomic, Molecular and Optical Physics}, 49\penalty0 (20):\penalty0 202001, oct 2016.
\newblock \doi{10.1088/0953-4075/49/20/202001}.

\bibitem[Shor(1996)]{Shor1996}
P.~Shor.
\newblock Fault-tolerant quantum computation.
\newblock In \emph{2013 IEEE 54th Annual Symposium on Foundations of Computer Science}, page~56, Los Alamitos, CA, USA, oct 1996. IEEE Computer Society.
\newblock \doi{10.1109/SFCS.1996.548464}.

\bibitem[Slussarenko and Pryde(2019)]{Slussarenko2019}
Sergei Slussarenko and Geoff~J. Pryde.
\newblock Photonic quantum information processing: A concise review.
\newblock \emph{Applied Physics Reviews}, 6\penalty0 (4):\penalty0 041303, 2019.
\newblock \doi{10.1063/1.5115814}.

\bibitem[Takagi et~al.(2022)Takagi, Endo, Minagawa, and Gu]{Takagi2022}
Ryuji Takagi, Suguru Endo, Shintaro Minagawa, and Mile Gu.
\newblock Fundamental limits of quantum error mitigation.
\newblock \emph{npj Quantum Information}, 8\penalty0 (1):\penalty0 114, Sep 2022.
\newblock ISSN 2056-6387.
\newblock \doi{10.1038/s41534-022-00618-z}.

\bibitem[Temme et~al.(2017)Temme, Bravyi, and Gambetta]{Temme2017}
Kristan Temme, Sergey Bravyi, and Jay~M. Gambetta.
\newblock Error mitigation for short-depth quantum circuits.
\newblock \emph{Phys. Rev. Lett.}, 119:\penalty0 180509, Nov 2017.
\newblock \doi{10.1103/PhysRevLett.119.180509}.

\bibitem[van~den Berg et~al.(2022)van~den Berg, Minev, and Temme]{vandenBerg_ModFreeReadout_2022}
Ewout van~den Berg, Zlatko~K. Minev, and Kristan Temme.
\newblock Model-free readout-error mitigation for quantum expectation values.
\newblock \emph{Phys. Rev. A}, 105:\penalty0 032620, Mar 2022.
\newblock \doi{10.1103/PhysRevA.105.032620}.

\bibitem[Vandersypen and van Leeuwenhoek(2017)]{Vandersypen2017}
Lieven Vandersypen and Antoni van Leeuwenhoek.
\newblock 1.4 quantum computing - the next challenge in circuit and system design.
\newblock In \emph{2017 IEEE International Solid-State Circuits Conference (ISSCC)}, pages 24--29, 2017.
\newblock \doi{10.1109/ISSCC.2017.7870244}.

\bibitem[Vershynin(2018)]{vershynin2018high}
Roman Vershynin.
\newblock \emph{High-Dimensional Probability: An Introduction with Applications in Data Science}.
\newblock Cambridge Series in Statistical and Probabilistic Mathematics. Cambridge University Press, 2018.
\newblock \doi{10.1017/9781108231596}.

\bibitem[Wehrl(1991)]{WEHRL1991119}
A.~Wehrl.
\newblock The many facets of entropy.
\newblock \emph{Reports on Mathematical Physics}, 30\penalty0 (1):\penalty0 119--129, 1991.
\newblock ISSN 0034-4877.
\newblock \doi{10.1016/0034-4877(91)90045-O}.

\bibitem[Wright et~al.(2019)Wright, Beck, Debnath, Amini, Nam, Grzesiak, Chen, Pisenti, Chmielewski, Collins, Hudek, Mizrahi, Wong-Campos, Allen, Apisdorf, Solomon, Williams, Ducore, Blinov, Kreikemeier, Chaplin, Keesan, Monroe, and Kim]{Wright2019}
K.~Wright, K.~M. Beck, S.~Debnath, J.~M. Amini, Y.~Nam, N.~Grzesiak, J.-S. Chen, N.~C. Pisenti, M.~Chmielewski, C.~Collins, K.~M. Hudek, J.~Mizrahi, J.~D. Wong-Campos, S.~Allen, J.~Apisdorf, P.~Solomon, M.~Williams, A.~M. Ducore, A.~Blinov, S.~M. Kreikemeier, V.~Chaplin, M.~Keesan, C.~Monroe, and J.~Kim.
\newblock Benchmarking an 11-qubit quantum computer.
\newblock \emph{Nature Communications}, 10\penalty0 (1):\penalty0 5464, Nov 2019.
\newblock ISSN 2041-1723.
\newblock \doi{10.1038/s41467-019-13534-2}.

\bibitem[Yung(2018)]{Yung2018}
Man-Hong Yung.
\newblock {Quantum supremacy: some fundamental concepts}.
\newblock \emph{National Science Review}, 6\penalty0 (1):\penalty0 22--23, 07 2018.
\newblock ISSN 2095-5138.
\newblock \doi{10.1093/nsr/nwy072}.

\end{thebibliography}

\newpage

\appendix

% \onecolumngrid

% \newpage

% \begin{center}
% {\LARGE Supplementary Information}
% \end{center}
\onecolumngrid

\section{Completeness of Linear operator space}
\label{app:PTM}
In this section we show that the linear operator space induced by hybrid unitary gate operations and linear classical post-processing \eqref{eq:defining_z} is complete. 
\begin{theorem}{\bf Completeness}
\label{thm:completeness}
Let $\uc(2^n)$ be the unitary group on $n$-qubits. For any $\phi \in \mathbbm{R}^{4^n-1\times 4^n-1}$ the matrix
\begin{align}
    \begin{bmatrix}
        1 & 0 \\
        0 & \phi
    \end{bmatrix} \in \lc(\uc(2^n)).
\end{align}
\end{theorem}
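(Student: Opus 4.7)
The plan is to reduce the statement to a claim about the affine hull of the adjoint action $U \mapsto \phi(U)$ on the traceless subspace, establish the corresponding \emph{linear} span claim via Pauli twirling, and then upgrade linear to affine span using a coefficient-sum observation.

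First I would fix the normalized Pauli basis $\bc = \hat{\pc} \setminus \{\hat I\}$ for the traceless subspace. In this basis every unitary superoperator has the block form $\begin{bmatrix} 1 & 0 \\ 0 & \phi(U) \end{bmatrix}$, so any element $\sum_l c_l \Phi(U_l)$ with $\sum_l c_l = 1$ inherits the same form with traceless block $\sum_l c_l \phi(U_l)$. The theorem is therefore equivalent to showing that the affine hull of $\{\phi(U) : U \in \uc(2^n)\}$ equals $\mathbb{R}^{(4^n-1) \times (4^n-1)}$.

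Next I would prove by Pauli twirling that the \emph{linear} span is already the full matrix space. Write $\mathcal{P}_n = \{I,X,Y,Z\}^{\otimes n}$ for the $4^n$ Pauli unitaries. Any such $P$ acts by conjugation diagonally on normalized Pauli strings: $\phi(P)_{Q Q'} = \epsilon_{P,Q}\, \delta_{Q Q'}$, where $\epsilon_{P,Q} \in \{\pm 1\}$ encodes whether $P$ commutes with $Q$. The orthogonality identity $\sum_{P \in \mathcal{P}_n} \epsilon_{P,R} = 4^n \delta_{R,I}$, together with the multiplicativity $\epsilon_{P,Q_0}\epsilon_{P,Q} = \epsilon_{P, Q_0 Q}$, then gives
\begin{equation*}
\frac{1}{4^n} \sum_{P \in \mathcal{P}_n} \epsilon_{P,Q_0}\, \phi(P) \;=\; E_{Q_0 Q_0}
\end{equation*}
for every non-identity $Q_0$, where $E_{Q_0 Q_0}$ denotes the matrix unit at position $(Q_0, Q_0)$. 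To obtain off-diagonal matrix units, I would invoke the standard fact that the Clifford subgroup of $\uc(2^n)$ acts transitively up to a sign on non-identity Paulis: choose $C$ with $C Q_1 C^\dagger = \pm Q_0$ and, using $\phi(C P) = \phi(C) \phi(P)$, compute
\begin{equation*}
\frac{1}{4^n} \sum_{P \in \mathcal{P}_n} \epsilon_{P,Q_1}\, \phi(C P) \;=\; \phi(C)\, E_{Q_1 Q_1} \;=\; \pm E_{Q_0 Q_1}.
\end{equation*}
This realizes every matrix unit as a real linear combination of $\phi(U)$'s, hence the linear span is the full matrix space.

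Finally I would promote to the affine span by the observation that both constructions above have \emph{zero} coefficient sum: $\sum_{P \in \mathcal{P}_n} \epsilon_{P,Q_\ast}/4^n = 0$ whenever $Q_\ast \neq I$, by the same orthogonality identity. Hence every $M \in \mathbb{R}^{(4^n-1) \times (4^n-1)}$ can be written as a zero-sum linear combination of $\phi(U)$'s; appending the single extra term $\phi(I) = I$ with coefficient $1$ yields the affine combination $I + M$ of total coefficient sum $1$. Since $M$ is arbitrary, $I + M$ ranges over the entire matrix space, which together with the block-form reduction finishes the argument. The main obstacle is simply careful bookkeeping—verifying that the Pauli twirl kills off-diagonal entries while selecting a single diagonal entry, and making the Clifford transitivity statement precise up to sign—both are standard but deserve explicit statement in the proof.
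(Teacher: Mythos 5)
Your proposal is correct and follows essentially the same route as the paper's proof: Pauli twirling to build the diagonal rank-one eliminators, Clifford transitivity on Pauli strings to obtain the off-diagonal matrix units by composition, and a final bookkeeping step to respect the affine constraint $\sum_l c_l = 1$. The only cosmetic difference is that you handle the affine constraint by observing the twirl combinations have zero coefficient sum and appending $\phi(I)$ once, whereas the paper packages the same fact through the operators $E_{PQ}^{\alpha} = \alpha(E_{PQ}-E_{I}) + E_{I}$ and an explicit affine combination over all $(P,Q)$.
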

To prove the theorem, we will use a basis representation where $\bc_L,\bc_R$ are the Pauli basis. However, once the completeness is proved, the result carries over to any pair of basis of the traceless space. 
The proof of the theorem relies on the existence of two constituent families of linear operators. We call the first the \emph{eliminator} operators.
\begin{lemma}[Eliminator operators] \label{lem:eliminators}
    For any $P \in \pc$ there exist operators $E_P \in \lc(\uc(2^n))$ such that
    \begin{align}   \label{eq:eliminators}
        E_P \Kket{I} = \Kket{I}, \ E_P \Kket{P} = \Kket{P} \nonumber, \\
        E_P \Kket{Q} = 0 \ \text{for all} \ Q  \in \pc \setminus \{P,I \}.
    \end{align}
\end{lemma}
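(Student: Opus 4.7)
The plan is to construct $E_P$ explicitly as a uniform affine combination of conjugation superoperators $\Phi(R)$, where $R$ ranges over the Pauli strings that commute with $P$. The guiding principle is that any two Pauli strings either commute or anti-commute, so $R Q R^\dagger = \pm Q$ for every $R, Q \in \pc$. This reduces the three defining properties of $E_P$ to a counting problem inside the Pauli group.

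Concretely, for a fixed non-identity $P \in \pc$, let $C(P) := \{R \in \pc : RP = PR\}$ denote the centralizer of $P$. I would define
\begin{align}
    E_P := \frac{1}{|C(P)|} \sum_{R \in C(P)} \Phi(R).
\end{align}
Membership in $\lc(\uc(2^n))$ is then immediate: the coefficients are uniform and sum to $1$, matching the affine-combination requirement in \eqref{eq:linear_operator_space}. Checking $E_P \Kket{I} = \Kket{I}$ is trivial since every $\Phi(R)$ fixes $\Kket{I}$, and $E_P \Kket{P} = \Kket{P}$ holds because every $R \in C(P)$ commutes with $P$ by construction, so each $\Phi(R)$ fixes $\Kket{P}$.

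The main work is to show $E_P \Kket Q = 0$ for every remaining $Q \in \pc \setminus \{I, P\}$. Using the commutation dichotomy,
\begin{align}
    E_P \Kket Q \;=\; \frac{|C(P) \cap C(Q)| - |C(P) \setminus C(Q)|}{|C(P)|}\, \Kket Q,
\end{align}
so the task reduces to showing that the two signed contributions cancel, or equivalently that $C(P) \not\subseteq C(Q)$. Viewing the Pauli group modulo phases as a symplectic $\mathbbm{F}_2$-vector space of dimension $2n$, each non-identity Pauli has a centralizer that is a codimension-one hyperplane, and two such hyperplanes coincide if and only if the two Paulis agree up to a phase. Since $Q \notin \{I, P\}$, this forces $C(P) \neq C(Q)$, so we can pick $S \in C(P)$ with $S Q S^\dagger = -Q$; the coset $S \cdot (C(P) \cap C(Q))$ is then precisely $C(P) \setminus C(Q)$, and equal cardinality of the two pieces gives the desired cancellation.

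The main obstacle is packaging this centralizer-inside-centralizer argument cleanly; the symplectic formulation handles it in one line, but a little care is needed around phases if one prefers to work with the full Pauli group rather than modulo $\pm 1$. Everything else is bookkeeping, and the uniform weights over $C(P)$ simultaneously deliver the affine-combination constraint and the exact cancellation.
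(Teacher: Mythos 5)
Your construction is correct and in fact lands on exactly the operator the paper uses: for $P \neq I$ the centralizer $C(P)$ has $4^n/2$ elements, so your $E_P = \frac{1}{|C(P)|}\sum_{R \in C(P)}\Phi(R)$ is the paper's $E_P = \frac{2}{4^n}\sum_{R:[R,P]=0}\Phi(R)$. What differs is the verification route. The paper builds single-qubit rank-one projectors $F_{P_i}^1 = \frac14\bigl(\phi(I)+\phi(P_i)-\phi(\cdot)-\phi(\cdot)\bigr)$, tensors them to get $F_P = \otimes_i F_{P_i}^1$, sets $E_P = E_I + F_P$, and reads off the sign pattern from the requirement that commuting strings have an even number of qubit-wise anti-commuting pairs; your proof instead works globally, reducing $E_P\Kket{Q}=0$ to $C(P)\not\subseteq C(Q)$ and settling that with the symplectic picture of the Pauli group over $\mathbb{F}_2^{2n}$ plus an index-two coset cancellation. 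Your argument is cleaner as a standalone group-theoretic statement and does not lean on the tensor-product factorization of $P$, while the paper's construction is more elementary and makes the connection to single-qubit twirling explicit. Two small points to tidy up: you fix $P$ to be non-identity, but the lemma (and the completeness proof that consumes it) also needs $E_I$ --- your formula with $C(I) = \pc$ handles it verbatim, since for $Q \neq I$ the subgroup $C(Q)$ has index two in $\pc$ and the same cancellation applies, so just say so; and, as you anticipate, the coset manipulation should be carried out in the Pauli group modulo phases, which is harmless because $\Phi(R)$ is insensitive to a global phase of $R$.
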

The second corresponds to unitary transformations that take one Pauli operator to another. We call these \emph{permutation } operators.
\begin{lemma}[Permutation operators] \label{lem:permutation}
    For any $P,Q \in \pc$ with $P \neq Q$, there exists a unitary $U_{PQ}$ such that $\Phi(U_{PQ})\Kket{P}  = \Kket{Q}$. Moreover $U_{PQ}$ can be implemented using a circuit of at most $O(n)$ depth.
\end{lemma}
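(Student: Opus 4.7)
The plan is to construct $U_{PQ}$ as an explicit Clifford circuit, routed through a canonical reference Pauli string. The key observation is that conjugation of Pauli operators by Clifford unitaries induces a transitive action on the set $\pc \setminus \{I\}$, so it suffices to show that an arbitrary non-identity Pauli $P$ can be mapped to a chosen reference string (say $Z_1 = Z \otimes I \otimes \cdots \otimes I$) by a Clifford circuit of depth $O(n)$; then $U_{PQ}$ is the composition of such a circuit for $P$ followed by the inverse of the corresponding circuit for $Q$.

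First I would reduce to $Z$-type strings using single-qubit Cliffords. On each qubit $i$, apply the appropriate element of the single-qubit Clifford group ($H$, $HS^\dagger$, or $I$, depending on whether the local Pauli factor of $P$ at site $i$ is $X$, $Y$, or $Z$; the identity factors are left alone). After this depth-$1$ layer, $P$ is conjugated into a $Z$-string $Z_S := \prod_{i \in S} Z_i$ for a non-empty $S \subseteq [n]$ equal to the support of $P$.

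Next I would collapse the support using a CNOT circuit. Under conjugation one has $\mathrm{CNOT}_{i \to j}\, Z_j \,\mathrm{CNOT}_{i \to j} = Z_i Z_j$, so CNOTs act on the index set of a $Z$-string as $\mathbb{F}_2$-linear row operations on the characteristic vector of $S \in \mathbb{F}_2^n$. Because $GL_n(\mathbb{F}_2)$ acts transitively on non-zero vectors of $\mathbb{F}_2^n$, there is a CNOT circuit that sends $Z_S$ to $Z_1$. Invoking the standard synthesis result of \cite{maslov2018shorter} (already cited in the paper), any element of $GL_n(\mathbb{F}_2)$ can be realized by a CNOT circuit of depth $O(n)$. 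Composing with the depth-$1$ single-qubit layer yields a Clifford $V_P$ with $\Phi(V_P)\Kket{P} = \Kket{Z_1}$ in total depth $O(n)$. The same construction produces a Clifford $V_Q$ with $\Phi(V_Q)\Kket{Q} = \Kket{Z_1}$, so setting $U_{PQ} := V_Q^\dagger V_P$ gives $\Phi(U_{PQ})\Kket{P} = \Kket{Q}$, still within depth $O(n)$.

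The main technical subtlety I anticipate is the sign ambiguity inherent in the superoperator representation: Clifford conjugation can introduce an overall sign $\pm 1$ on a Pauli, and we need $\Phi(U_{PQ})\Kket{P} = +\Kket{Q}$, not $-\Kket{Q}$. This is easy to fix in post-processing: because any single-qubit Pauli itself is Clifford and flips the sign of exactly those Paulis that anticommute with it, we can absorb any unwanted minus sign by pre- or post-composing $U_{PQ}$ with an appropriately chosen single-qubit Pauli (e.g., apply $X_1$ if $Q$ contains a $Z$ or $Y$ at position $1$, etc.), which does not change the $O(n)$ depth bound. The remaining verification that the single-qubit and CNOT layers really do implement the claimed maps on the Pauli group is a routine conjugation calculation using $HXH=Z$, $SXS^\dagger=Y$, and the standard CNOT propagation rules.
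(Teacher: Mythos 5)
Your proof is correct and follows the same route as the paper's, which simply observes that Pauli-to-Pauli conjugation maps are realized by stabilizer (Clifford) circuits and cites \cite{gottesman1998theory} and \cite{maslov2018shorter} for the $O(n)$ depth bound. Your version is a self-contained instantiation of that citation — the depth-$1$ single-qubit layer reducing to a $Z$-string, the $GL_n(\mathbb{F}_2)$ transitivity argument for the CNOT layer, the routing through a reference string $Z_1$, and the explicit sign correction by a post-composed anticommuting Pauli are all sound, with the sign issue being a genuine detail the paper's one-line proof leaves implicit.
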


\begin{proof}[Proof of Theorem~\ref{thm:completeness}]
    We prove the theorem by constructing a set of canonical linear operators. For $P,Q \in \pc$ let $e_{PQ} \in \mathbbm{R}^{4^{n}-1\times 4^{n}-1}$ such that $[e_{PQ}]_{P,Q} = 1$ and $[e_{PQ}]_{P',Q'} = 0$ whenever $(P',Q') \neq (P,Q)$. Let $\alpha \in \mathbbm{R}$ and define the linear operator $E_{PQ}^{\alpha}$ given by 
    \begin{align}
        E_{PQ}^{\alpha} = \begin{bmatrix}
        1 & 0 \\
        0 & \alpha e_{PQ} \end{bmatrix}
    \end{align}
    We will show that $E_{PQ}^{\alpha} \in \lc$ using an explicit construction. Showing this is sufficient to prove the theorem as every super-operator that preserves $\Kket{I}$ can be written as a linear combination of these eliminators, such that the coefficients sum to unity.

    Using Lemma~\ref{lem:permutation} and since $U_{PQ}$ is unitary, we get $\phi(U_{PQ}) \Kket{P} = \Kket{Q}$ and for any
    $P' \neq P$ the orthogonality condition $\tr(\phi(U_{PQ}) \Kket{P'}, Q) = 0$. Therefore, using the eliminator operators in Lemma~\ref{lem:eliminators} we get $E_{Q} \phi(U_{PQ}) \Kket{P'}= 0$ for all $P' \neq P$. So the operator $E_{PQ}^{1}$ can be explicitly constructed as
    \begin{align}
        E_{PQ}^{1} =  E_Q \phi(U_{PQ}).
    \end{align}
    By closedness under composition from Lemma~\ref{lem:closed}, we get $E_{PQ}^{1} \in \lc$.
    For arbitrary $\alpha \in \mathbbm{R}$, we can construct
    \begin{align}
        E_{PQ}^{\alpha} := \alpha (E_{PQ} - E_{I}) + E_{I}.
    \end{align}
    It follows that $E_{PQ}^{\alpha} \in \lc$ from closedness under linear combination in Lemma~\ref{lem:closed_linear}. Finally, any operators can be constructed as 
    \begin{align}
        \begin{bmatrix}
            1 & 0 \\
            0 & \phi
        \end{bmatrix}  = \sum_{P,Q \in \pc \setminus I} \frac{1}{(4^n-1)^2} E_{PQ}^{(4^n -1 )^2\phi_{PQ}} .
    \end{align}
    
\end{proof}

We now establish the existence of the \emph{eliminators} and \emph{permutation} operators.
For $P \neq I$, let $F_P = E_P- E_{I}$. These are rank-1 projectors to the operator $P$. For consistency, we will also take $F_I =E_{I}$.
\begin{proof}[Proof of Lemma~\ref{lem:eliminators}]
    For the case of one qubit, we can easily verify that the operators
    \begin{align}
        F_I^1 &= E_{I}^1 =  \frac{1}{4}(\phi(I) +\phi(X) +\phi(Y)+\phi(Z)) \; , \\
        F_X^1 &= \frac{1}{4}(\phi(I) + \phi(X) - \phi(Y) - \phi(Z)) \; , \\
        F_Y^1 &= \frac{1}{4}(\phi(I) + \phi(Y) - \phi(X) - \phi(Z)) \; , \\
        F_Z^1 &= \frac{1}{4}(\phi(I) + \phi(Z) - \phi(X) - \phi(Y)) \; ,
    \end{align}
    satisfy \eqref{eq:eliminators}. For the n-qubit case when $P = \otimes_{i=1}^n P_i$, we can construct the n-qubit projector operator as $F_{P} = \otimes_{i=1}^n F_{P_i}^1$.
    Now $E_P = F_I + F_P = \frac{1}{4^n}\sum_{P'} \phi(P') + F_P$. It is clear that the expansion of $F_P$ in terms of superoperators $\Phi(P')$ will have positive coefficients if and only if $P'$ commutes with $P$. This is because for $P'$ and $P$ to commute they should have either zero or an even number of anti-commuting  pairs of single qubit operators when they are matched qubit-wise.
    From this argument, it is clear that $E_P = \frac{2}{4^n}\sum_{P': [P',P] = 0} \phi(P')$
\end{proof}
This is akin to what is known as \emph{twirling} in the literature \cite{dankert2009exact}.

\begin{proof}[Proof of Lemma~\ref{lem:permutation}]
    Circuits that map between Pauli strings from the Clifford group are well studied in the literature. These circuits are examples of stabilizer circuits, i.e.  they can be constructed using only CNOT, Hadamard, and Phase gates \cite{gottesman1998theory}. Any such stabilizer circuit can be constructed using only $O(n)$ depth \cite{maslov2018shorter}.
\end{proof}

\section{Identifiability: Proof of Theorem~\ref{thm:degeneracy_char}}
\label{app:gauge}
Let $s_P, s_P', $ with $ P \in \pc^n$, denote the coefficients of $\rho$ and $\rho'$ in the Pauli basis. Then using the assertion of the theorem and \eqref{eq:noisy_observations}, we have for all $k \in [D]$,
\begin{align}   \label{eq:zero_coeffs}
    2^{-n/2}&\sum_{\kp \in [D]} (A_{k\kp}-A'_{k\kp})m_{\kp I} \ + \nonumber \\
    &\sum_{\substack{\kp \in [D], \\ P,Q \in \pc\setminus I} } (s_P A_{k \kp} - s'_P A'_{k \kp}) \phi(U)_{PQ}    m_{\kp Q} = 0,
\end{align}
for all $U \in \uc$. Thus for any set of unitary operators $U_1, \ldots, U_L$ and scalars $c_1, \ldots, c_L$ such that $\sum_{l}c_l = 1$,
\begin{align}   \label{eq:zero_coeffs_1}
    2^{-n/2}&\sum_{\kp \in [D]} (A_{k\kp}-A'_{k\kp})m_{\kp I} \ + \nonumber \\
    &\sum_{\substack{\kp \in [D], \\ P,Q \in \pc\setminus I} } (s_P A_{k \kp} - s'_P A'_{k \kp}) \phi_{PQ}    m_{\kp Q} = 0,
\end{align}
where $\phi  = \sum_l c_l \phi(U_l) \in \lc(\uc)$. Since \eqref{eq:zero_coeffs_1} holds for any $\phi \in \lc(\uc)$, and by Theorem~\ref{thm:completeness}, the linear operator $ \begin{bmatrix}
        1 & 0 \\
        0 & 0
    \end{bmatrix} \in \lc(\uc)$, 
we must have  
\begin{align} \label{eq:zero_I_coeff}
\sum_{\kp} A_{k \kp} m_{\kp I} = \sum_{\kp} A'_{k \kp} m_{\kp I}.
\end{align}
    Similarly, using Theorem~\ref{thm:completeness} we get that $ \begin{bmatrix}
        1 & 0 \\
        0 & e_{PQ}
    \end{bmatrix} \in \lc^n$ for all $P \in \bc_R$ and $Q \in \bc_L$. An identical argument yields for all $ k \in [D]$,
\begin{align}
    \sum_{\kp} (s_P A_{k \kp} - s'_P A'_{k \kp}) m_{\kp Q} = 0,
\end{align}
or equivalently in matrix form,
\begin{align}
    (s_P A - s'_P A') \Bm_{\setminus I} = 0, \quad \forall P \in \bc_R.
\end{align}
\revision{Using this with Lemma~\ref{lem:left_null_space}, we can see that the  matrix      $(s_P A - s'_P A')$ must have the following form,
\begin{align}
    s_P A_{kk'} - s'_P A'_{kk'} =  -d_k.
\end{align}
}
\revision{Here $d_k$ is as of yet undetermined. In the following steps, we will fix the value of $d_k$ from \eqref{eq:zero_I_coeff}.
Defining $s_P/s'_P = \alpha$, we get 
\begin{align}
    A' = \alpha A + \frac{\text{diag}(d)}{s'_P} \mathbbm{1}.
\end{align}
Combining with \eqref{eq:zero_I_coeff} and using the fact that $\sum_{k}m_{kI} = 2^{n/2},$ we get,
\begin{align}
    \sum_{\kp} A'_{k \kp} m_{\kp I} = \alpha \sum_{\kp} A_{k \kp} m_{\kp I} + \frac{d_k}{s'_P} 2^{n/2}, 
\end{align}
which gives
\begin{align}
    d_k = \frac{s'_P(1-\alpha)\sum_{\kp} A_{k \kp} m_{\kp I}}{2^{n/2}}.
\end{align}}
This completes the proof of Theorem~\ref{thm:degeneracy_char}.

\section{Making a POVM linearly independent}
\label{app:new_POVM}
\begin{lemma}

Given a $D$-outcome POVM such that the linear span of this POVM has only dimension $r$, then we can always construct an $r-$outcome , linearly independent POVM by taking linear combinations of the original POVM elements. \end{lemma}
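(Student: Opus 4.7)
The plan is to build the reduced POVM by \emph{grouping} the original effects rather than by taking arbitrary linear combinations. Concretely, since the span of $\{M_k\}_{k \in [D]}$ has dimension $r$, I would first pick $r$ linearly independent elements $M_{i_1},\ldots,M_{i_r}$ from the original POVM, and then construct a partition $S_1,\ldots,S_r$ of $[D]$ with $i_j \in S_j$, defining $\tilde M_j = \sum_{k \in S_j} M_k$.

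With this construction two of the POVM properties come for free: each $\tilde M_j$ is a sum of PSD operators and hence PSD, and $\sum_{j=1}^r \tilde M_j = \sum_{k=1}^D M_k = I$. So the resulting set is a valid $r$-outcome POVM whose elements are linear combinations (in fact, $0/1$ combinations) of the original effects. The only nontrivial task is to choose the partition so that $\tilde M_1,\ldots,\tilde M_r$ remain linearly independent.

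I would handle this by an incremental argument: assign the $D-r$ remaining indices to groups one at a time, maintaining the invariant that the current partial sums $\tilde M_j^{(t)}$ are linearly independent (which holds at the start because they reduce to $M_{i_1},\ldots,M_{i_r}$). When processing the next index $k$, write $M_k = \sum_j d_j \tilde M_j^{(t)}$ using the fact that $M_k$ lies in the span. A direct computation with the linear independence of $\{\tilde M_j^{(t)}\}$ shows that moving $k$ into group $j$ preserves linear independence if and only if $d_j \neq -1$. So the induction closes provided such a $j$ always exists.

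The main obstacle is therefore verifying this existence claim, and here I expect positivity to do the work. If every $d_j$ equals $-1$, then $M_k = -\sum_{j} \tilde M_j^{(t)}$, and taking the trace forces $\tr(M_k) \le 0$; combined with $M_k \succcurlyeq 0$ this yields $M_k = 0$, in which case the assignment of $k$ is vacuous and cannot affect linear independence regardless of the group chosen. Thus at every step either a valid group exists or the new element is null, so the induction goes through and produces the desired $r$-outcome linearly independent POVM.
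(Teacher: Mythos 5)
Your proof is correct, and it takes a genuinely different route from the paper's. The paper works in the coefficient space $\mathbb{R}^D$: it characterizes the kernel of the map $\vec a \mapsto \sum_k a_k M_k$ via the $D-r$ dependency vectors $\vec c_i$, exhibits a strictly positive vector $\vec u$ (built from traces against a PSD operator) in the orthogonal complement $F$, shifts an arbitrary basis of $F$ by large multiples of $\vec u$ to obtain $r$ linearly independent \emph{positive} coefficient vectors, and normalizes these to get a rank-$r$ stochastic-like matrix $p$ defining the new effects. You instead coarse-grain: partition $[D]$ into $r$ groups, each anchored at one element of a maximal linearly independent subset, and sum within groups. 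Positivity and the completeness relation are then immediate, and the only work is in the greedy choice of partition, which you close with the rank-one update observation that assigning $M_k = \sum_j d_j \tilde M_j^{(t)}$ to group $j$ multiplies the relevant determinant by $1+d_j$, together with the trace-positivity argument ruling out the all-$(-1)$ case (indeed that case is outright contradictory when $M_k \neq 0$, since it would force $M_k = 0$). Your construction buys something concrete: the new POVM is a relabeling/merging of the original outcomes with $0/1$ coefficients, so its outcome probabilities are literally sums of the old ones and no signed post-processing is needed; it also sidesteps the paper's implicit assumption that a PSD operator with nonzero overlap with every effect exists. The paper's approach is less combinatorial and generalizes more directly to producing arbitrary positive recombinations, but for the stated lemma your argument is, if anything, cleaner.
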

\begin{proof}

Our aim is to construct a new $r-$ outcome POVM,
\begin{equation} \label{eq:M_new_1}
M'_j = \sum_{i=1}^D p_{ji} M_i,~~j \in [r].
\end{equation}
If the matrix $p$ is full rank (rank $r$). Then it is easy to check that $M'_j$ is satisfies condition \ref{cond3}. Moreover the outcomes probabilities corresponding to the new POVM can calculated from the outcome probabilities of the old POVM.

Because of the linear dependence between the POVM elements, there exists $D-r$  independent vectors ($\vec{c}_i$) \footnote{$\vec{u}$ notation hides the superscript index which runs from $1$ to $D$.}
 in $\mathbb{R}^D,$ such that $\sum_k c_i^k M_k = 0.$ Now take some PSD matrix $O$ with nonzero overlap with all the POVM operators. Define $u^k = \tr(M_k O)$: it follows $\langle\vec{u}, \vec{c}_i \rangle = 0$. Thus there exists one vector ($\vec{u}$) with positive coefficients that is orthogonal to every $\vec{c}_i$.
 
Now let $F$ be the $r$ dimensional subspace of $\mathbb{R}^D$ consisting of all the vectors orthogonal to the set $\{\vec{c}_i| i = 1,\ldots, D-r \}.$ We know that $\vec{u} \in F$. We now claim that $F$ can be spanned by a set of $r$  linearly independent, \emph{positive vectors}.

Suppose $\vec{u}, \vec{v_1} \ldots\vec{v}_{r-1}$ is a linearly independent set that spans $F$.  We can always chose a positive $\alpha_i$ such that $\vec{v}'_i :=  \vec{v}_i + \alpha_i \vec{u}$ is positive. Now we need to prove that these newly defined positive vectors are linearly independent.

Suppose there was some linear dependence between the new vectors, such that $\vec{u} + \sum_i\vec{v}_i' a_i = 0 $. This would imply a linear relation between the original $\vec{v}_i$ and $\vec{u}$. This would obviously contradict the initial statement that this is a linearly independent set that spans $F.$  

Thus we can always construct a set of linearly independent, positive vectors $\{ \vec{u}, \vec{v}_1', \ldots \vec{v}'_{r-1} \}$  that span $F$. 
For purely notational convenience define,
\vspace{-0.07in}
\begin{align*}
\vec{\mu}_1 &:= \vec{u}  \; , \\
 \vec{\mu}_{j} &:= \vec{v}'_{j-1} \; ,~~j = 2 \ldots r \; .
 \end{align*}
With these vectors as coefficients we can recombine the old POVM operators to construct an $r$-outcome POVM.,
\begin{equation}\label{eq:M_new_2}
 M'_j = \sum_{i =1}^D \mu^i_j  \frac{M_i}{\sum_l \mu^i_l} \; ,~~j \in [r] \; . 
\end{equation}
The positivity of $\vec{\mu}_i$ ensures that these new operators are positive. The normalization used in this relation ensures that the new operators sum to identity.
Now comparing \eqref{eq:M_new_1} and \eqref{eq:M_new_2}, we see that the desired $p$ matrix is, $p_{ji} = \frac{\mu_j^i}{\sum_l \mu^i_l}.$ By construction the matrix of $\mu_j^i$ has rank $r$. Matrix $p$ is obtained by normalizing all the columns of the  $\mu$ matrix. Since this cannot change the number of independent columns, $p$ must also have rank $r$. Which in turn implies that $\{M'_j | j = 1,\ldots,r \}$  satisfies the condition \ref{cond3}.
\end{proof}

\section{Detailed description of Algorithm \ref{alg:general_decoding}}
\label{app:Algo_details}
In Step 1, we want to find a non-zero state coefficient to set this as the gauge for the entire problem. From \eqref{eq:gauge}, it is clear that $s_P$ must be non-zero to act as a valid gauge parameter. To this end we must first find all rows of $A$ that are not all zeros. This is done by checking that $z_k^{I} \neq 0$, as this will imply that the $k-$th row of $A$ is non-zero. Using this we can construct a set $\mathcal{K}$ that holds the location of all non-zero rows. Since,
$$
z^{P,i}_{k} - z^{I}_k = s_P \sum_{ \kp} A_{k\kp} C_{\kp i} \; , ~~ \forall  P\in \bc_R \; ,~ i,k \in [D] \; ,$$
if $ z^{P,i}_{k} - z^{I}_k = 0$, that can either be due to $s_p =0$ or $\sum_{\kp} A_{k\kp} C_{\kp i} =  0 $.

Now to eliminate the second case, we check if $  z^{P,i}_{k} - z^{I}_k = 0$ for all values of $i$ and $k \in \mathcal{K}$. If this is the case and $s_P \neq 0$; then that is only possible if all the non-zero rows of $A$  are in the null space of $C.$ But if the POVM is independent then $C$ has a rank of $D-1$ (see Lemma \ref{lem:rank_of_cov}). It is easy to check from the definition of $C$ that the null space consists only of the uniform vector. So if all the non-zero rows of $A$  are uniform vectors, then $A$ is an erasure channel and it violates Condition \ref{cond1},
% (Eq.~(\ref{eq:cond1}))
necessary for simultaneous tomography. So assuming that this condition holds, $s_P \neq 0$ must imply that there exist some $j,l$ such that  $ z^{P,j}_{l} - z^{I}_l \neq 0.$. Note that this pair $j,l$ is such that $\sum_{\kp}A_{l\kp} C_{\kp j} \neq 0.$ Once found for some $P$ can be reused to check other state-coefficients as it is state-independent. We also store this $j,l$ for the last step of the algorithm.

In Step 2, we pick an $s_R$  from $R \in \mathcal{C}$ as the unknown gauge for our problem. In terms of $A'(s_R)$, we have the equations,
$$ z^{R,i}_{k} - z^{I}_k = \sum_{ \kp} A'(s_R)_{k\kp} C_{\kp i},~~ \forall  P\in \bc_R,~ i,k \in [D] \; , $$
and $$z_k^{I} =  \sum_{\kp} \frac{A'(s_R)_{k \kp} m_{\kp I}}{2^{n/2}} \; . $$

Now since the columns of $C$ along with the uniform vector form a full rank system in $\mathbb{R}^D$, we can invert this system of equations to find $A'(s_R)$.

In Step  3, we use the gauge transform equations \eqref{eq:gauge} again to find relations between state coefficients
\revision{
$$ A_{k \kp}'(s_R) = \frac{s_R}{s_P}A_{k \kp}'(s_P) + ( 1 - \frac{s_R}{s_P})\frac{\sum_{\kp} A'_{k i}(s_P)m_{i I}}{2^{n/2}} \; .$$}

\begin{lemma}
\label{lem:rank_of_cov}
The $D \times D$ covariance matrix has rank $D-1$ if the POVM is linearly independent. And the null space of the covariance matrix is spanned by the uniform vector.
\end{lemma}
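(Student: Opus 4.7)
The plan is to recognize the covariance matrix $C$ as a Gram-type matrix between the POVM operators $\{M_i\}$ and their traceless parts $\{\bar{M}_j\}$ with respect to the Hilbert--Schmidt inner product, and to use the linear-independence hypothesis to pin the kernel down to a single line. I would split the argument into a containment statement and a matching upper bound on $\dim \ker C$.

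First I would verify that the all-ones vector $\mathbbm{1} = (1,\ldots,1)^T$ lies in $\ker C$ by a direct calculation: summing a row of $C$ and using $\sum_j M_j = I$ together with $\sum_j \tr(M_j) = \tr(I)$, the two contributions cancel once the normalization in $\bar{M}_j$ is matched so that $\bar{M}_j$ is genuinely traceless. This gives $\mathrm{span}\{\mathbbm{1}\} \subseteq \ker C$, so in particular $\mathrm{rank}(C) \le D-1$.

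The main step, and the one where linear independence of the POVM actually enters, is the reverse bound $\dim \ker C \le 1$. I would rewrite $C_{ij} = \BK{M_i|\bar{M}_j}$ and note that $Cv = 0$ is exactly the assertion that the operator $O_v := \sum_j v_j \bar{M}_j$ is Hilbert--Schmidt orthogonal to every $M_i$, and hence to their linear span $\mathcal{V} := \mathrm{span}\{M_j\}$. The key observation is that $O_v$ equals $\sum_j v_j M_j$ minus a scalar multiple of $I$, and $I = \sum_j M_j$ itself lies in $\mathcal{V}$, so $O_v$ also lies in $\mathcal{V}$. An element that is simultaneously in $\mathcal{V}$ and orthogonal to $\mathcal{V}$ must vanish, so $\sum_j v_j M_j = c\,I = c\sum_j M_j$ for some scalar $c$, and linear independence of the $M_j$ forces $v_j = c$ for every $j$. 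Thus $v$ is a scalar multiple of $\mathbbm{1}$.

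Combining the two steps yields $\ker C = \mathrm{span}\{\mathbbm{1}\}$ and $\mathrm{rank}(C) = D-1$. I expect the main obstacle to be cosmetic rather than structural: carefully tracking the normalization in the definition of $\bar{M}_j$ so that both the first-step cancellation and the ``$O_v \in \mathcal{V}$'' observation in the second step go through cleanly with the conventions of the paper. No deeper tool appears to be needed beyond the linear independence of the POVM, which is precisely the hypothesis being used.
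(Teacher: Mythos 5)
Your proof is correct and follows essentially the same route as the paper's: both treat $C$ as a Gram-type matrix and reduce the statement to the fact that, by $\sum_j M_j = I$ and linear independence of the $M_j$, the only linear combination of the traceless parts $\bar{M}_j$ that vanishes (equivalently, is Hilbert--Schmidt orthogonal to $\mathrm{span}\{M_i\}$) is the uniform one. The paper phrases this as computing $\dim \mathrm{span}\{\bar{M}_j\}$ via the factorization $C = W^\dagger W$, whereas you characterize $\ker C$ directly through the ``in $\mathcal{V}$ and orthogonal to $\mathcal{V}$'' observation; the two are interchangeable, and your explicit care with the normalization of $\bar{M}_j$ is warranted.
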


\begin{proof}
The Covariance matrix can be equivalently expressed as,
\begin{equation}
    C_{ij} = \langle \braket{\bar{M_i}|\bar{M_j}}\rangle \; .
\end{equation}

This is the Gram matrix associated with the set of traceless measurement operators, $C = W^\dagger W$, where $W = \sum_i \Kket{\bar{M}_i}\!\!\bra{i}.$ From the definition of Gram matrix, it is clear that the rank of $C$ will be equal to the dimension of the subspace spanned by the traceless measurement operators.
Now the space spanned by the measurement  operators has dimension $D$, by the independence assumption. The normalization constraint on the measurement operators gives,
\begin{equation}
    \sum_i M_i = I ~~ \implies \sum_i \bar{M}_i  = 0 \; .
\end{equation}

This implies that the subspace spanned by the traceless operators has dimension $D-1$ or less. Suppose that there was some other set of coefficients $c_i$ such that $\sum_i c_i \bar{M}_i = 0$. This would then imply,

\begin{align}
\sum_i c_i M_i - \left(\sum_j c_j \langle\braket{M_j|I}\rangle \right) \frac{I}{D} = 0\\
\implies \sum_i \left(c_i -  \frac{\sum_j c_j \langle\braket{M_j|I}\rangle}{D} \right)M_i = 0 \; .
\end{align}

This is only possible if all the $c_i$ are the same. So the only possible linear relation between the traceless measurements is $\sum_i  \bar{M}_i = 0.$ This fixes the dimension of their span, and the rank of the covariance matrix to be $D-1.$

The null space of $C$ is spanned by the uniform vector as $\sum_j C_{ij} = 0.$

\end{proof}

\section{Elimination operators for the Pauli basis and computational measurements}
\label{app:Elim_for_Pauli}
For computational basis measurements, we take $\bc_R$ to be  the set of all normalized  Pauli operators on $n-$qubits, $\hat{\pc}^n = \{I/\sqrt2,X/\sqrt2,Y/\sqrt2,Z/\sqrt2 \}^{\otimes n}$

\begin{equation}
\label{eq:br_pauli}
    \bc_R = \hat{\pc}^n \setminus \{I/\sqrt{2^n}\} \; .
\end{equation}

For $\bc_L$ we take the traceless part of the computational basis POVM
\begin{equation}
\label{eq:bl_pauli}
   \bc_L = \{\bar{M}_k =  \ket{k}\!\!\bra{k} - \frac{I}{2^n} ~| ~k  \in [2^n] \} \; .
\end{equation}

From this definition, it is clear that  $\bc_L^\perp$ will be the space of all fully off-diagonal operators in the computational basis.

We also denote by $\pc_X$ $(\pc_Z)$, the set of all Pauli strings consisting of only $X (Z)$ operators and $I$.
As before the normalized matrices are given by  $\hat{Q} = Q/\sqrt{2^n}.$

Now using these definitions we prove the relations in \eqref{eq:EI_pauli_def} and \eqref{eq:EPi_def}.
\begin{lemma}
Let $\bc_R$ and $\bc_L$ be as defined in \eqref{eq:br_pauli} and \eqref{eq:bl_pauli}. Let $E_{I}$ be the $n-$qubit superoperator defined  as in \eqref{eq:EI}. Then,
$$E_{I} = \frac{1}{2^n} \sum_{P \in \pc_X} \Phi(P) \; ,$$ where $\pc_X = \{I,X\}^{\otimes n} \; .$
\end{lemma}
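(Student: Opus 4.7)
The plan is to verify directly that the candidate superoperator $E := \frac{1}{2^n}\sum_{P \in \pc_X} \Phi(P)$ satisfies the two defining properties of $E_I$ from \eqref{eq:EI}: namely $E\Kket{I} = \Kket{I}$ and $E\Kket{P'} \in \bc_L^{\perp}$ for every $P' \in \bc_R$. Since each $P \in \pc_X$ is unitary, $\Phi(P)\Kket{I} = \Kket{P I P^\dagger} = \Kket{I}$, and averaging preserves this, so the first property is immediate.

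For the second, I would write $P'$ as a scalar multiple of a non-identity Pauli string $Q$ and invoke the standard fact that any two Pauli strings either commute or anticommute, giving $P Q P^\dagger = \pm Q$ with sign determined by the commutation relation. The question then reduces to evaluating the signed sum $\sum_{P \in \pc_X} \text{sgn}(P, Q)$, which I would split into two cases depending on whether $Q$ itself lies in $\pc_X$.

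If $Q \in \pc_X$, then every element of $\pc_X$ commutes with $Q$ (all single-qubit factors are $I$'s and $X$'s, which commute pairwise), and the sum equals $2^n Q$, giving $E\Kket{Q} = \Kket{Q}$. But since $Q \neq I$, at least one single-qubit factor of $Q$ is $X$; this forces every diagonal entry of $Q$ in the computational basis to vanish, so $Q$ is traceless and off-diagonal, hence orthogonal to every $\ket{k}\bra{k} - I/2^n \in \bc_L$, and thus $E\Kket{Q} \in \bc_L^\perp$.

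If $Q \notin \pc_X$, then some single-qubit factor of $Q$ is $Y$ or $Z$, say at site $i$. The single-qubit $X$ at site $i$ (tensored with identities elsewhere) lies in $\pc_X$ but anticommutes with $Q$, so the subset of $\pc_X$ commuting with $Q$ is a proper subgroup; by Lagrange it has index exactly two, so half of $\pc_X$ contributes $+Q$ and half contributes $-Q$, cancelling in the sum and giving $E\Kket{Q} = 0 \in \bc_L^\perp$ trivially. The only mildly nontrivial step is this index-$2$ subgroup argument, which amounts to the non-degeneracy of the commutator pairing on the Pauli group restricted to $\pc_X$; the rest is direct bookkeeping.
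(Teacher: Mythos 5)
Your proof is correct, but it takes a genuinely different route from the paper's. The paper verifies the one-qubit case explicitly, namely that $E_I^1 = \tfrac12(\Phi(I)+\Phi(X))$ fixes $I$ and $X$ and annihilates $Y$ and $Z$, and then lifts to $n$ qubits via the factorization $\Phi(A)\otimes\Phi(B) = \Phi(A\otimes B)$, so that $E_I = (E_I^1)^{\otimes n}$ and its image on $\bc_R$ consists of zeros and off-diagonal elements of $\pc_X$. You instead run a global character-sum argument: for a Pauli string $Q$, each $\Phi(P)$ with $P\in\pc_X$ sends $Q$ to $\pm Q$, the sign $P \mapsto \mathrm{sgn}(P,Q)$ is a homomorphism from the abelian group $\pc_X \cong (\mathbb{Z}_2)^n$ to $\{\pm1\}$, and the sum is $2^n Q$ or $0$ according to whether this homomorphism is trivial, with the surviving case ($Q\in\pc_X\setminus\{I\}$) landing in $\bc_L^\perp$ because such $Q$ are purely off-diagonal. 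Both case analyses are sound; your only loose phrase is attributing the index-two claim to Lagrange, when the precise justification is that a homomorphism onto $\{\pm 1\}$ that takes the value $-1$ somewhere is surjective and so has kernel of index exactly two. The paper's route is modular and reuses a lemma it needs elsewhere, while yours is self-contained, makes the twirling structure explicit, and transfers essentially verbatim to the companion lemma for $E_{QQ}$ (where the paper again resorts to a tensor-product induction).
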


\begin{proof}
First let us look at the one-qubit version of this superoperator, $E^{1}_{I} = \frac{\Phi(I) + \Phi(X)}{2}.$ It is clear that this superoperator satisfies all the conditions outlined in \eqref{eq:EI}. It leaves the identity invariant and maps the Pauli operators to either zero or an off-diagonal operator.
\begin{align*}
E^{1}_{I}\Kket{I} = \Kket{I} \; , ~~ E^{1}_{I} \Kket{X} &= \Kket{X} \; , \\
E^{1}_{I}\Kket{Z} = 0 \; , ~~ E^{1}_{I} \Kket{Y} &= 0 \; .
\end{align*}
Now by elementary linear algebra, the following relation holds 
\begin{equation}
    \Phi(A) \otimes \Phi(B) = \Phi(A \otimes B) \; .
\end{equation}
This is because for any operators $V$ and $W$ 
\begin{align*}
\Phi(A) \otimes \Phi(B) \Kket{V} \otimes \Kket{W} &= (AVA^\dagger) \otimes (BWB^\dagger) \\
&= (A \otimes B) (V \otimes W) (A \otimes B) ^\dagger \\ 
&=  \Phi(A \otimes B) \Kket{V} \otimes \Kket{W} \; . 
\end{align*}
From this elementary fact,
$ E_{I} = \frac{1}{2^n} \sum_{P \in \pc_X} \Phi(P)= \left( \frac{\Phi(I) + \Phi(X)}{2} \right)^{\otimes n} = (E^1_{I})^{\otimes n}$.
This implies that $E_{I} $ acting on any operator in $\bc_R$ will either give zero or an operator in $\pc_X.$  On the other hand, it is clear that $E_{I}$ leaves the identity invariant. Thus $E_{I} = \frac{1}{2^n} \sum_{P \in \pc_X} \Phi(P) $ is consistent with the definition in \eqref{eq:EI}.     
\end{proof}

Now, to prove the same for $E_{Pi}$ eliminators defined in \eqref{eq:EPi}, first we define some auxiliary eliminators. These eliminators effectively map between normalized Pauli strings.
Let $P \in \bc_R$  and $Q \in \pc_Z  \setminus \{I\}$. 
\begin{align}
\label{eq:EPQ}
&E_{PQ} \Kket{I} = \Kket{I} ,~E_{PQ} \Kket{P} -  \Kket{\hat{Q}} \in \bc_L^\perp \; , \nonumber \\
&E_{PQ} \Kket{P^\prime} \in \bc_L^{\perp}~~~~ \forall  P^\prime \in \bc_R \setminus \{ P\} \; .
\end{align}
Using these we can write $E_{Pi}$ as follows,
\begin{equation}
\label{eq:Pi_from_PQ}
E_{Pi} = (1 - \sum_{Q \in \pc_Z \setminus \{I\}} H_{iQ})E_{I} + \sum_{Q \in \pc_Z \setminus \{I\}} H_{iQ} E_{PQ} \; ,
\end{equation}
where $H_{iQ} = \frac{\braket{i|Q|i}}{\sqrt{2^n}} = \tr(\bar{M}_i \hat{Q}) $.
Since $\pc_Z \setminus {I}$ is an orthogonal basis for traceless diagonal matrices, we have $\bar{M}_i = \sum_{Q \in \pc_Z \setminus \{I\} } H_{iQ} \hat{Q}.$
From this, it is easy to check that \eqref{eq:Pi_from_PQ} holds.

\noindent Now, if $U_{PQ}$ is a member of the Clifford group on $n$-qubits that maps from $P$ to $Q$, then $E_{PQ} = E_{QQ}\Phi(U_{PQ}).$ 
So to show the relation in \eqref{eq:EPi_def}, we only need to show that 
$E_{QQ} = \frac{2}{2^n} \sum_{\underset{Q': [Q,Q'] = 0}{Q' \in \pc_X}} \Phi(Q').$
\begin{lemma}
Let $\bc_R$ and $\bc_L$ be as defined in \eqref{eq:br_pauli} and \eqref{eq:bl_pauli}. Let $E_{QQ}$ be the $n-$qubit superoperator defined  as in \eqref{eq:EPQ} for all $Q \in \pc_Z \setminus {I}$. Then,
$$E_{QQ} = \frac{2}{2^n} \sum_{\underset{Q': [Q,Q'] = 0}{Q' \in \pc_X}} \Phi(Q') \; ,$$ where $\pc_X = \{I,X\}^{\otimes n} \; .$
\end{lemma}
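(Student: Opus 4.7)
I will verify directly that the proposed superoperator $S = \frac{2}{2^n}\sum_{Q'\in\pc_X,\,[Q,Q']=0}\Phi(Q')$ satisfies each of the three defining properties of $E_{QQ}$ in \eqref{eq:EPQ} with $P=\hat Q$. The strategy is to use the fact that $\Phi(Q')\Kket{R}=Q'RQ'^\dagger=(-1)^{\epsilon(Q',R)}\Kket{R}$ for any Pauli string $R$, where $\epsilon(Q',R)=0$ if $Q'$ and $R$ commute and $1$ otherwise. Consequently the action of $S$ on any basis element $\Kket{\hat R}$ reduces to a signed count of the Pauli strings in $\pc_X$ that commute with $Q$, indexed by their commutation pattern with $R$.

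The first two properties are immediate counting statements. For $\Kket{I}$, every $Q'\in\pc_X$ commutes with $I$, so I only need the cardinality of $\{Q'\in\pc_X:[Q',Q]=0\}$. Writing $B\subseteq[n]$ for the support of $Z$'s in $Q$ (with $|B|\ge 1$ since $Q\neq I$), a string $Q'\in\pc_X$ with $X$-support $A$ commutes with $Q$ iff $|A\cap B|$ is even, and a standard binomial argument gives exactly $2^{n-1}$ such $A$'s. Thus $S\Kket{I}=\frac{2}{2^n}\cdot 2^{n-1}\Kket{I}=\Kket{I}$. The same count with the observation $\Phi(Q')\Kket{Q}=\Kket{Q}$ whenever $[Q',Q]=0$ yields $S\Kket{Q}=\Kket{Q}$, and after normalization $S\Kket{\hat Q}=\Kket{\hat Q}$, which certainly differs from $\Kket{\hat Q}$ by a vector in $\bc_L^\perp$ (namely zero).

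The third property is the main content. I take an arbitrary $\hat R\in\bc_R\setminus\{\hat Q\}$ and split into two cases. If $R$ contains at least one $X$ or $Y$ factor then $\hat R$ is off-diagonal in the computational basis, hence $\hat R\in\bc_L^\perp$; since $S\Kket{\hat R}$ is a scalar multiple of $\Kket{\hat R}$, it automatically lies in $\bc_L^\perp$. If instead $R\in\pc_Z\setminus\{I,Q\}$, then letting $C\subseteq[n]$ be the $Z$-support of $R$, I need to show
\begin{equation}
\sum_{A\subseteq[n]:\,|A\cap B|\text{ even}}(-1)^{|A\cap C|}=0.
\end{equation}
The key combinatorial obstacle is here. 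I will split $A$ according to its intersection with the disjoint pieces $D=C\setminus B$, $E=C\cap B$, $F=B\setminus C$, and $[n]\setminus(B\cup C)$. If $D\neq\emptyset$ (i.e.\ $R$ has a $Z$ outside $Q$'s support), then the $A_D\subseteq D$ sum already produces a factor $(1-1)^{|D|}=0$. Otherwise $C\subsetneq B$ (proper because $R\neq Q$) and $C\neq\emptyset$ (because $R\neq I$), so both $E$ and $F$ are nonempty; applying the parity-selector identity $\frac{1+(-1)^{|A_E|+|A_F|}}{2}$ factorizes the remaining sum into $\frac12[(1-1)^{|E|}\,2^{|F|}+2^{|E|}\,(1-1)^{|F|}]=0$.

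Combining these cases shows that $S$ satisfies every defining condition of $E_{QQ}$, completing the proof. The only genuine difficulty is the parity computation in the final case, and that is handled by the support-decomposition argument above; the rest is bookkeeping with $\Phi(Q')\Kket{R}=\pm\Kket{R}$ and cardinality counts for subsets of $[n]$ with prescribed intersection parity with $B$.
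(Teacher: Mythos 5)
Your proof is correct, and it takes a genuinely different route from the paper's. The paper proceeds synthetically: it defines the rank-one pieces $F_{QQ} = E_{QQ} - E_I$, verifies the single-qubit case $F_{ZZ} = \tfrac12(\Phi(I)-\Phi(X))$ by hand, uses the tensor-product factorization $F_{QQ} = \bigotimes_i F_{Q_iQ_i}$ (resting on $\Phi(A)\otimes\Phi(B)=\Phi(A\otimes B)$), and then reads off which Pauli strings survive in the expansion of $E_I + F_{QQ}$ with nonzero coefficient --- namely those in $\pc_X$ commuting with $Q$. You instead verify the closed-form sum directly against the three defining conditions of \eqref{eq:EPQ}, reducing everything to signed counts of subsets $A\subseteq[n]$ with prescribed intersection parity with the $Z$-support $B$ of $Q$. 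The two arguments encode the same commutation combinatorics, but yours buys explicitness: it checks all three defining properties one by one, including the off-diagonal case ($R$ containing an $X$ or $Y$, hence already in $\bc_L^\perp$) and the genuinely nontrivial case $R\in\pc_Z\setminus\{I,Q\}$, where your support decomposition into $C\setminus B$, $C\cap B$, $B\setminus C$ and the parity-selector identity cleanly force the vanishing (using $R\neq I$ and $R\neq Q$ to guarantee the relevant pieces are nonempty). The paper's version is shorter and reuses the machinery of its eliminator lemma, but leaves the verification that $E_I + \bigotimes_i F_{Q_iQ_i}$ actually satisfies the defining conditions of $E_{QQ}$ as an ``easily check'' step; your write-up makes that check the centerpiece. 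Both are complete and elementary.
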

\begin{proof}
The proof idea here is similar to that used in Lemma \ref{lem:eliminators}.

For $n =1$ we can easily verify that $E_{ZZ} = \Phi(I)$.
Let $F_{QQ} = E_{QQ}-E_I$. For the single qubit case  we get $F_{ZZ} = \frac{1}{2}(\Phi(I) - \Phi(X))$.
For consistency we will denote $F_{II} = E_{I}.$

\noindent Now, for any $n-$qubit $Z$ string $Q = Q_1 \otimes \ldots Q_n$, we can easily check that $F_{QQ}= \otimes_{i =1 }^n F_{Q_i Q_i}.$ Since $E_{QQ} = E_I + F_{QQ} $, the operator strings that remain in the expansion for $E_{QQ}$ will be those which have an even number of single qubit $X$ operators at positions where $Z$ operators are present in $Q$. In the positions where $Z$ does not exist in $Q$, the operators in the expansion can either be $X$ or $I$. This precisely describes all the operator strings in $\pc_X$ that commute with $Q.$
\end{proof}

\ca

For larger $n$ we can give proof by induction. Before  getting into the inductive argument let us define a few quantities. First are subsets of $\pc_X$ that commute and anti-commute with $Q$
\begin{align}
\scc^n_X(Q) = \{Q'| Q' \in \pc_X, [Q',Q] = 0 \} \\
\tsc^n_X(Q) = \{Q'| Q' \in \pc_X, \{Q',Q\} =0 \}
\end{align}

Obviously $\scc^n_X(Q) \bigcup \tsc^n_X(Q) = \pc_X.$

Now define two superoperators corresponding to these sets,

\begin{align}
S_Q &= \frac{2}{2^n} \sum_{Q' \in \scc^n_X(Q)} \Phi(Q')\\
\tilde{S}_Q &= \frac{2}{2^n} \sum_{Q' \in \tsc^n_X(Q)} \Phi(Q')
\end{align}

Our aim is to prove that $S_Q = E_{QQ}$. We can see that $\tilde{S}_Q = 2E_{I} - S_Q.$

Now let us state the inductive argument:
\emph{Assume $E_{PP} = S_P$ for any $n-$qubit Pauli $Z$ string. Then the same is true for all $n+1$ -qubit Pauli $Z$ strings.}

To show this we will consider two cases \\
{\bf Case 1} $Q = P \otimes Z$ \\
In this case the right-most operator in $Q$ a single qubit $Z$. Here $P$ is an $n$-qubit Pauli $Z$ string.

For this $Q$ it is easy to see that,
\begin{equation}
   \scc_X^{n+1} (Q) = (\scc^n_X(P) \otimes {I}) \bigcup  (\tsc^n_X(P) \otimes {X})
\end{equation}

 Here we define, $\scc^n_X(P) \otimes {I} \equiv \{P'\otimes I| P \in \scc^n_X(P) \} .$

From this relation, using the assumption in the inductive argument we have,
\begin{align}
S_Q &= \frac{2}{2^{n+1}} \left( \sum_{P' \in  \scc^n_X(P)} \Phi(P' \otimes I ) +  \sum_{B' \in  \tsc^n_X(P)} \Phi(B' \otimes X ) \right),  \\
%&= \frac{2}{2^{n+1}} \left( \sum_{P' \in  \scc^n_X(P)} \Phi(P') \otimes \Phi(I ) + \sum_{B' \in  \tsc^n_X(P)} \Phi(B') \otimes\Phi(X) \right),  \\ 
&= \frac{1}{2} \left( S_P \otimes \Phi(I) + \tilde{S}_P \otimes \Phi(X) \right)\\
&= \frac{1}{2} \left( E_{PP} \otimes \Phi(I) + (2E_{I} - E_{PP}) \otimes \Phi(X) \right)
\end{align}

From this relation, we can check that $S_Q \Kket{I} = \Kket{I}$.
Now let $E_{PP}\Kket{P} = P + B$, where $B$ is a fully off-diagonal operator.
This gives us,
$$S_Q\Kket{Q} = S_Q\Kket{P} \otimes \Kket{Z} = Q + B \otimes Z$$
This tells us that $S_Q\Kket{Q} - Q \in \bc^\perp_L$

Now for $P' \neq P$, we can verify $S_Q \Kket{P'} \otimes \Kket{Z}  \in \bc^\perp_L $ and $S_Q \Kket{P'} \otimes \Kket{I}  \in \bc^\perp_L $.
Also we can check that $S_Q \Kket{P}\otimes\Kket{I} \in \bc^\perp_L.$
These checks clearly exhausts all possible operator $n+1$ qubit Pauli $Z$ strings and show that $S_Q = E_{QQ}.$\\

{\bf Case 2:  $Q = P \otimes I$ }\\
This is simpler compared to the previous case
In this case,

\begin{equation}
   \scc_X^{n+1} (Q) = \scc^n_X(P) \otimes \{I,X\}
\end{equation}

This gives,
\begin{equation}
   S_Q = S_P \otimes E_{I}  = E_{PP} \otimes E_{I}
\end{equation}

We can check explicitly by checking various inputs as before that $E_{PP} \otimes E_{I} = E_{QQ}.$

\cb

\section{Sample complexity of  simultaneous tomography}
\label{app:Sample}

\subsection{Randomized version of simultaneous tomography in the computational basis.}

First we describe how to use a finite number of measurement shots to estimate the $z-$values.

\paragraph{Finite sample estimator for $z_k^I$} \vphantom{}
\label{para:est_1}

\noindent Choose an $n$ qubit Pauli string uniformly at random from $\pc_X.$ Now apply this operator to the quantum state and measure the outcome. Repeat this procedure $N$ times using independent copies of the state and record the $N$ measurement outcomes.

\noindent Let now $X^l_1, \ldots X^l_N$ be binary random variables such that $X^l_i$ records whether the $i-$the measurement outcome is $l$. We then define the estimate
\begin{equation}
\label{eq:est_1}
\hat{z}^I_l := \frac{1}{N} \sum_i X^l_i \;.
\end{equation}
This is an unbiased estimate for $z^I_l$
 \begin{align*}
 \mathbb{E} X^l_i &=  Pr(X^l_i = 1) = \frac{1}{2^n} \sum_{P \in  \pc_X} Pr(X^l_i = 1 | P) \\
 &=  \frac{1}{2^n} \sum_{P \in  \pc_X}  \ty_l(P) = z^I_l \; . 
 \end{align*}
Notice that the same $N$ measurement outcomes can be processed in different ways to get estimates $z_l^I$ for all $l \in [2^n]$.

\paragraph{ Finite sample estimator for $z_k^{P,i}$}\vphantom{}
\label{para:est_2}

\noindent Let us define
\begin{equation}
    z_l^{PQ} :=   \frac{2}{2^n} \sum_{\substack{Q^\prime \in  \mathcal{P}_X \\ [Q^\prime,Q] = 0}}  \ty_l(Q^\prime U_{PQ}) \; ,
\end{equation}
where $U_{PQ}$ is a member of the Clifford group that maps $P$ to $Q$. From \eqref{eq:EPi_def}
$$ z_l^{P,i} = (1- \sum_{Q \in \pc_Z \setminus {I} } H_{iQ}) z_l^{I} +  \sum_{Q \in \pc_Z \setminus {I} } H_{iQ} z^{PQ}_l \; .$$
We can estimate $z_l^{PQ}$ in the following way.
Choose an $n$ qubit Pauli string $Q^\prime$ uniformly at random from with $Q$. Since half of $\mathcal{P}$ commutes with $Q$ this can be done with constant overhead. Now apply $Q^\prime U_{PQ}$ to the quantum state and measure the outcome. Repeat this procedure $N^\prime$ times using independent copies of the state and record the $N^\prime$ measurement outcomes.

\noindent Let $X^l_1, \ldots X^l_{N^\prime}$ be binary random variables such that $X^l_i$ records whether the $i-$the measurement outcome is $l$. Then we can define the following estimates
\begin{equation}
\hat{z}^{PQ}_l := \frac{1}{N^\prime} \sum_i X^l_i \;,  \label{eq:est_2}
\end{equation}
\begin{equation}
\hat{z}_l^{P,i} := (1- \sum_{Q \in \pc_Z \setminus {I} } H_{iQ}) \hat{z}_l^{I} +  \sum_{Q \in \pc_Z \setminus {I} } H_{iQ} \hat{z}^{PQ}_l \; . \label{eq:est_3}
\end{equation}
These are also unbiased estimates since
  \begin{align}
 \mathbb{E} X^l_i &=  Pr(X^l_i = 1) = \frac{2}{2^n} \sum_{\substack{Q^\prime \in  \mathcal{P}_X \\ [Q^\prime,Q] = 0}}   Pr(X_i = 1 | Q^\prime) \\
 &=  \frac{2}{2^n} \sum_{\substack{Q^\prime \in  \mathcal{P}_X \\ [Q^\prime,Q] = 0}}   \ty_l(Q^\prime U_{PQ}) = z^{PQ}_l \; .
 \end{align}
 The unbiasedness of  $\hat{z}_l^{P,i}$ hence follows from linearity of expectation.
Notice that the same $N^\prime$ measurement outcomes can be processed in different ways to get estimates $\hat{z}^{PQ}_l$ for all $l \in [2^n].$ These estimates can then be combined to get $\hat{z}_l^{P,i}.$

The number of shots required to guarantee a certain error in these estimates is given in Lemma \ref{lemma:est_z_I} and \ref{lemma:est_Pi}. The proofs use the Hoeffding's inequality and properties of subgaussian random variables.
In Algorithm \ref{alg:randomized} we describe how this estimates can be used to perform simultaneous tomography.

\begin{algorithm}[!ht]
\KwData{$\beta, \unorm{A}$}

 \tcp{{Refer Theorem \ref{thm:samples}} for the number of shots required for each step}
 \tcp{ \textcolor{purple}{Step 1. Find non-zero coefficients of the state}}
 $\mathcal{C} \leftarrow \{\}$ \tcp{Empty set}
\SetAlgoLined
Estimate $\{\hat{z}^I_k | k \in [2^n] \}$ \\
\For{$P \in \bc_R$}{

Estimate $\{\hat{z}^{P,i}_k | k,i \in [2^n] \}$ \\
    $s_P \leftarrow 0$\\
    \If{$\max_{k,i \in [2^n]}|\hat{z}^{P,i}_k - \hat{z}^{I}_k| \geq 1.005\beta \unorm{A}$}{   
    $\mathcal{C} \leftarrow \mathcal{C} \bigcup \{P\}$ 
    }
}
 \tcp{\textcolor{purple}{Step 2. Find $A$ up to gauge symmetry}}
 \textbf{choose} $R \in \mathcal{C}$ \\
 \For{$(k,i) \in [2^n]\times[2^n]$}{
 $A_{ki}' \leftarrow \hat{z}_k^{R,i}$}  
 \tcp{\textcolor{purple}{Step 3. Find other state coefficients up to a multiplicative constant}} 
Re-estimate $\{\hat{z}^{R,i}_k | k,i \in [2^n] \}$  with $\epsilon = \frac{\beta \unorm{A}}{4}$  \tcp{see Lemma \ref{lemma:est_Pi}} 
Re-estimate $\{\hat{z}^I_k | k \in [2^n] \}$  with $\epsilon = \frac{\beta \unorm{A}}{4}$  \tcp{see Lemma \ref{lemma:est_z_I}} 
 $i',l'\leftarrow \text{argmax}_{i,l}  |\hat{z}_l^{R,i} -\hat{z}_l^{I}|$ \\
\For{$P \in \mathcal{C}$}{

\tcp{In this step use  the $\hat{z}$ estimates that were compupted from the most amount of shots} 
$\frac{s_P}{s_R} \leftarrow \frac{\hat{z}^{P,i'}_{l'} -\hat{z}^{I}_{l'}}{ \hat{z}^{R,i'}_{l'} -\hat{z}^{I}_{l'}}$ 
}
\textbf{return} $\{ (P,\frac{s_P}{s_R}) | P \in  \mathcal{C}\},~~\Ap $
\caption{Finite-shot simultaneous tomography in computational basis}
\label{alg:randomized}
\end{algorithm}

\subsection{Proof of Theorem \ref{thm:samples}}

\begin{proof}
We will use randomized measurements to  estimate the $z$ values required for the algorithm.

 The definitions in \eqref{eq:EI_pauli_def} and \eqref{eq:EPi_def} show that the $z$ values for performing simultaneous tomography can be estimated by using simple Monte-Carlo estimates as defined in \eqref{eq:est_1} and \eqref{eq:est_3}.

 \noindent For instance, from the definition of $E_{I}$ in \eqref{eq:EI_pauli_def} we see that $z^I_k$ can be estimated by choosing a random unitary from $\pc_X$, applying it to the state and recording the noisy measurement outcomes. 
 The number of shots required to guarantee a certain accuracy in these estimates with high probability are given by Lemma \ref{lemma:est_z_I}  and Lemma \ref{lemma:est_Pi}. 
 
The covariance matrix for computational basis measurements is given by
$$ C_{ik} = \delta_{ik} - \frac{1}{2^n} \; .$$
From \eqref{eq:zPQ_def}, given any $P$, we have
\begin{equation}\label{eq:tempa}
   z^{P,i}_k  - z^{I}_k = s_P(A_{ki} - z^{I}_k) \; .
\end{equation}
The first step of Algorithm \ref{alg:general_decoding} is finding the non-zero state coefficients. To perform the analogous step in the  randomized setting we set a positive parameter $0 < \beta < \mnorm{\rho}$ and construct a set $\mathcal{C} \subset \pc$ such that for every $P \in C$ we can guarantee with high probability that $|s_P| > \beta$. Similarly, we can also show that with high probability, if $P \notin \mathcal{C}$ then $|s_P| < 1.01 \beta$. We show in Lemma \ref{lemma:sparsity} that such a set can be constructed using  $O( 8^n\frac{cn + \log(1/\delta)}{ \beta^2\unorm{A}^2})$ shots.

\noindent From this set $\mathcal{C}$ we can choose $R$ such that $s_R$ can be used  as the unknown gauge in the problem, which gives us the noise matrix upto $s_R$
 \begin{equation}
     A_{l\,i}^\prime(s_R) = z_l^{R,i} \; .
 \end{equation}
 Due to the simple form of the covariance matrix, the estimation error in the noise matrix is also given by Lemma \ref{lemma:est_Pi}.

\noindent For the final phase of the algorithm, we first have to find an element of the noise matrix that is sufficiently bounded away from its corresponding row average. To this end let us first estimate $|\hat{z}_l^{R,i} -\hat{z}_l^{I}|$ up to a max error of $\epsilon$ with high probability using Lemmas \ref{lemma:est_Pi} and \ref{lemma:est_z_I}. 
Now let $i^*, l^* = \text{argmax}_{i,l}  |z_l^{R,i} -z_l^{I}|.$  From \eqref{eq:tempa} $\text{max}_{i,l}  |z_l^{R,i} -z_l^{I}| = |s_R| \unorm{A}$. So  $|\hat{z}_{l^*}^{R,i^*} -\hat{z}_{l^*}^{I}|$ must be $\epsilon$ close to $|s_R| \unorm{A}$ w.h.p.  Let $i',l'=\text{argmax}_{i,l}  |\hat{z}_l^{R,i} -\hat{z}_l^{I}|$, then with high probability the following relations hold
\begin{align}
 |z_{l'}^{R,i'} -z_{l'}^{I}| &\geq |\hat{z}_{l'}^{R,i'} -\hat{z}_{l'}^{I}| - \epsilon \; , \\
 &\geq  |\hat{z}_{l^*}^{R,i^*} -\hat{z}_{l^*}^{I}| - \epsilon \; , \\
 &\geq  |s_R| \unorm{A}  -2\epsilon \; .
\end{align}
Substituting \eqref{eq:tempa} in the LHS of the above relation gives
%This, along with the fact that the max error in the estimates is at most $\epsilon$, will imply that $\max_{i,l}|\hat{z}_l^{Ri} -\hat{z}_l^{I}|$ will be at least $\epsilon$ close to $|s_R| \unorm{A}$.
%\noindent Now let $i',l'=\text{argmax}_{i,l}  |\hat{z}_l^{Ri} -\hat{z}_l^{I}|.$ So w.h.p, $|z_{l'}^{Ri'} -z_{l'}^{I}| = |s_R||A_{i'l'} -z_{l'}^I|$ will be $2 \epsilon$ close to $|s_R| \unorm{A}.$ This in turn implies that w.h.p
\begin{equation}
    |A_{i'l'} -z_{l'}^I| \geq \unorm{A} - \frac{ 2\epsilon}{|s_R|} \; .
\end{equation}

Since $|s_R| \in \mathcal{C}$, we have that $|s_R| \geq \beta$. So  choosing $\epsilon =  \frac{\unorm{A} \beta}{4}$ would give us $|A_{i'l'} -z_{l'}^I| \geq \unorm{A}/2$ w.h.p.. 

According to Lemma \ref{lemma:est_Pi}, the number of measurements required to find the indices $i'$ and $l'$ will be $O(2^n\frac{cn + \log(1/\delta)}{\beta^2 \unorm{A}^2})$.
Using these indices, for every $P \in \mathcal C$ we can estimate $\frac{s_P}{s_R}$ as
\begin{equation}
   \widehat{\frac{s_P}{s_R}}  = \frac{ |\hat{z}_{l'}^{P,i'} -\hat{z}_{l'}^{I}|}{ |\hat{z}_{l'}^{R,i'} -\hat{z}_{l'}^{I}|} \; .
\end{equation}
The key observation here is that the true values of both the numerator and the denominator in the above expression is greater than $\beta \unorm{A}/2$ because of the $i',l'$ indices we have chosen.
Now using this observation in Lemma \ref{lemma:multiplicative_err}, we show $O(2^n \frac{cn + \log(1/\delta)}{\epsilon^2 \beta^2 \unorm{A}^2})$ measurements are sufficient to get the above estimate to within $\epsilon$ in multiplicative error. 

Now we have to repeat this procedure for every $P \in \mathcal{C}$ with the same values of $i'$ and $l'$ to get estimates for the state coefficeints up to gauge.

From the standard union-bound argument we can show that a total of $O(2^n|\mathcal{C}| \frac{cn + \log(1/\delta)}{\epsilon^2 \beta^2 \unorm{A}^2})$ measurements are sufficient to ensure a maximum multiplicative error of $\epsilon$ with probability at least $1-\delta$.

\end{proof}

\subsection{Proof of Corollary \ref{corr:SP}}
\label{sec:corr_proof}
\begin{proof}
The argument here is similar to the proof of Theorem \ref{thm:samples}. To begin with, perform the first step of the randomized algorithm with a threshold $\beta = \epsilon \leq \mnorm{\rho}/2$ to construct a $\mathcal{C}$. We set $\hat{s}_P = 0$ for any $P \notin \mathcal{C}.$ This gives an estimate with additive error of $\epsilon$ for all $P \notin \mathcal{C}$. From the proof of Lemma \ref{lemma:sparsity}  we know that this construction requires the estimation of $|z^{P,i}_k - z_k^I|$ with error $\epsilon' = 0.005\epsilon \unorm{A}.$

Let
\begin{align}
&R,i',k' =   \underset{P \in \mathcal{C},i,k \in [2^n]}{\text{argmax}}  |\hat{z}^{P,i}_k - \hat{z}_k^I|\;, \\ 
&P^* = \underset{P \in \mathcal{C}}{\text{argmax}} |z^{P,i}_k - z_k^I|  =\underset{P \in \mathcal{C}}{\text{argmax}}{|s_P|}\;.
\end{align}
From the definition of $P^*$, we have $|z^{P^*,i}_k - z_{k}^I| =  \mnorm{\rho} |A_{ik} - z_i^I|\leq \mnorm{\rho} \unorm{A}.$

For any $i,k$ we have
\begin{align}
|s_R| |A_{ik} - z_i^I| &=  |z^{R,i}_k - z_k^I| \\
&\geq |\hat{z}^{R,i}_k - \hat{z}_k^I| - 0.005\epsilon \unorm{A} \\
&\geq |\hat{z}^{P^*,i}_k - \hat{z}_k^I| - 0.005\epsilon \unorm{A} \\
&\geq |z^{P^*,i}_k - z_k^I| - 0.01\epsilon \unorm{A} \\
&=  \mnorm{\rho} |A_{ik} - z_i^I|  -  0.01\epsilon \unorm{A}
\end{align}
Maximizing this inequality over all $i,k \in [2^n]$  we have
\begin{equation}
    |s_R| \geq \mnorm{\rho} -0.01 \epsilon \; .
\end{equation}
From this, with high probability for all $P \in \mathcal{C}$ we have
\begin{equation}
\left|\frac{s_P}{s_R}\right| \leq  \frac{\mnorm{\rho}}{\mnorm{\rho} -0.01\epsilon} \leq \frac{1}{0.995} \, .
\end{equation}

Here we have used the fact that $\epsilon \leq \mnorm{\rho}/2.$

Using the above fact in the multiplicative error bound in step 3 of Theorem \ref{thm:samples} gives

   $Pr\left( \max_{P \in \mathcal{C}}\; \left| \widehat{\frac{s_P}{s_R}} - \frac{s_P}{s_R} \right|  > \frac{\epsilon}{0.995}  \right) \leq \delta \;.$

   Notice that the error incurred in the last step is slightly higher than $\epsilon$. But this can be rectified by substituting $\epsilon$ with a slightly lower value ($0.995 \epsilon$) in the above procedure.

   The sample complexity for this procedure can be found by adding the sample complexities of the first and third step of Theorem \ref{thm:samples}.

\end{proof}

\subsection{Technical lemmas for randomized measurements}

\begin{lemma}[Estimating $z^I$ values]
\label{lemma:est_z_I}
For an $n-$qubit system, let  $N = O( \frac{cn + \log(1/\delta)}{\epsilon^2})$, for a constant $c < 10$. By post-processing $N$ randomized  noisy measurement outcomes obtained from applying a random operator in $\pc_X$ to the state $\rho$, we can find $\hat{z}^{I}_l$ such that
\begin{equation}
   Pr( \max_{l \in [2^n]} |\hat{z}^{I}_l - z^{I}_l| > \epsilon )  \leq \delta \; .
\end{equation}    
\end{lemma}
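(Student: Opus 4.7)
The plan is to apply Hoeffding's inequality to the estimator defined in \eqref{eq:est_1}, together with a union bound over the $2^n$ possible outcomes $l$.

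First, I would observe that by the randomized measurement protocol described in paragraph \ref{para:est_1}, the random variables $X^l_1, \ldots, X^l_N$ are i.i.d.\ Bernoulli, since each trial independently draws a uniformly random $P \in \mathcal{P}_X$, applies it, and records whether the noisy outcome equals $l$. The paper has already verified that $\mathbb{E}[X^l_i] = z^I_l$, so the estimator $\hat{z}^I_l = \frac{1}{N} \sum_i X^l_i$ is an unbiased average of $[0,1]$-valued i.i.d.\ random variables.

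Next, I would invoke Hoeffding's inequality to obtain, for each fixed $l \in [2^n]$,
\begin{equation*}
\Pr\!\left(\, |\hat{z}^I_l - z^I_l| > \epsilon \,\right) \;\leq\; 2\exp(-2 N \epsilon^2).
\end{equation*}
A union bound over the $2^n$ values of $l$ then yields
\begin{equation*}
\Pr\!\left(\, \max_{l \in [2^n]} |\hat{z}^I_l - z^I_l| > \epsilon \,\right) \;\leq\; 2 \cdot 2^n \exp(-2 N \epsilon^2).
\end{equation*}
Setting the right-hand side to be at most $\delta$ and solving for $N$ gives
\begin{equation*}
N \;\geq\; \frac{(\log 2)\, n + \log(2/\delta)}{2\epsilon^2},
\end{equation*}
which is of the stated form $N = O\!\left(\frac{cn + \log(1/\delta)}{\epsilon^2}\right)$ with a constant $c = (\log 2)/2 < 10$, and completes the argument.

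There is essentially no hard step here: the only subtle point worth emphasizing is that, because each of the $N$ shots uses a freshly drawn random $P \in \mathcal{P}_X$ and an independent copy of $\rho$, the indicators $X^l_i$ are genuinely i.i.d.\ (despite the same measurement outcomes being reused to form estimates for different values of $l$), which is what legitimizes Hoeffding's bound separately for each $l$ prior to the union bound.
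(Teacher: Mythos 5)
Your proof is correct and follows essentially the same route as the paper: Hoeffding's inequality applied to the unbiased estimator $\hat{z}^I_l$ from \eqref{eq:est_1}, followed by a union bound over the $2^n$ outcomes $l$, with the resulting $N \gtrsim \frac{n\log 2 + \log(2/\delta)}{2\epsilon^2}$ matching the stated complexity. Your explicit remark that reusing the same $N$ shots for all $l$ is harmless because the union bound only requires each per-$l$ Hoeffding bound individually is a nice clarification, but the argument is otherwise identical to the paper's.
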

\begin{proof}

 Consider the estimate $\hat{z}_l^I$ computed as defined in \ref{eq:est_1}.
\noindent By Hoeffding's inequality, we can obtain a tail bound for these estimates
\begin{equation}
\label{eq:Hoeff_zI}
   Pr( |\hat{z}^{I}_l - z^{I}_l| > \epsilon)  \leq  2e^{-2N\epsilon^2} \; .
\end{equation}
Choosing $N =  \frac{c \log(2^n/\delta)}{\epsilon^2}$ for a constant $c$ gives us
\begin{equation}
   Pr( |\hat{z}^{I}_l - z^{I}_l| > \epsilon)  \leq  \frac{\delta}{2^n} \; .
\end{equation}
Let $\mathcal{A}_l$ be the event that  $|\hat{z}^{I}_l - z^{I}_l| \leq \epsilon$, then using the union bound
\begin{align}
   Pr( \max_{l \in [2^n]} |\hat{z}^{I}_l - z^{I}_l| \leq \epsilon ) &= Pr(\land^{2^n}_{l=1}   \mathcal{A}_l)\\
   &=  1 - Pr( \lor^{2^n}_{l=1}   \bar{\mathcal{A}}_l) \\
   &> 1 - \delta \; .
\end{align}
\end{proof}

\begin{lemma}[Estimating $z^{P,i}$ values]
\label{lemma:est_Pi}
For an $n-$qubit system, let $N = O( 2^n\frac{cn + \log(1/\delta)}{\epsilon^2})$, for a constant $c < 10$. Given a Pauli string $P \neq I$, by post-processing $N$ randomized  noisy measurement outcomes obtained from applying a random unitary from an efficiently characterizable subset of the Clifford group, we can find $\hat{z}^{Pi}_l$ such that
\begin{equation}
   Pr( \max_{l,i \in [2^n]} |\hat{z}^{Pi}_l - z^{Pi}_l| > \epsilon )  \leq \delta \; .
\end{equation}    
\end{lemma}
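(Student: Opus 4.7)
The plan is to control the estimation error through the explicit linear decomposition
\begin{align}
\hat{z}_l^{P,i} - z_l^{P,i} = \Big(1 - \sum_{Q \in \pc_Z \setminus \{I\}} H_{iQ}\Big)(\hat{z}_l^I - z_l^I) + \sum_{Q \in \pc_Z \setminus \{I\}} H_{iQ}\,(\hat{z}_l^{PQ} - z_l^{PQ}),
\end{align}
which follows immediately from \eqref{eq:est_3} together with the identity $z_l^{P,i} = (1-\sum_Q H_{iQ})z_l^I + \sum_Q H_{iQ} z_l^{PQ}$. The two terms will be bounded separately, and the budget of shots will be split across the estimation of $\hat z_l^I$ and the $(2^n-1)$ estimates $\{\hat z_l^{PQ}\}_{Q \in \pc_Z \setminus \{I\}}$, which use independent batches of measurements (because they involve different Clifford unitaries $Q' U_{PQ}$). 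First I would recall that $|H_{iQ}| = |\braket{i|Q|i}|/\sqrt{2^n} = 1/\sqrt{2^n}$ for every $i$ and every non-identity $Q \in \pc_Z$, so $|1 - \sum_Q H_{iQ}| \leq 1 + (2^n-1)/\sqrt{2^n} = O(\sqrt{2^n})$ while $\sum_Q H_{iQ}^2 = (2^n-1)/2^n \leq 1$.

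For the first term I would spend $N_I = O(2^n (cn+\log(1/\delta))/\epsilon^2)$ shots on estimating $\hat z_l^I$ and invoke Lemma~\ref{lemma:est_z_I} to obtain, with probability at least $1-\delta/2$, the bound $\max_l |\hat z_l^I - z_l^I| \leq \epsilon/(2(1+\sqrt{2^n}))$, so that the identity contribution to the error is at most $\epsilon/2$ uniformly in $i,l$. For the second term, the key observation is that because the $\hat z_l^{PQ}$ for different $Q$ are independent, each being an average of $N'$ bounded variables in $[0,1]$, each centred difference $\hat z_l^{PQ} - z_l^{PQ}$ is sub-Gaussian with variance proxy $1/(4N')$, so the weighted sum $\sum_Q H_{iQ}(\hat z_l^{PQ} - z_l^{PQ})$ is sub-Gaussian with variance proxy at most $\sum_Q H_{iQ}^2/(4N') \leq 1/(4N')$. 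Applying the standard sub-Gaussian tail bound at level $\epsilon/2$ and unioning over all $(i,l) \in [2^n]\times[2^n]$ pairs gives
\begin{align}
\Pr\Big(\max_{i,l} \Big|\sum_Q H_{iQ}(\hat z_l^{PQ} - z_l^{PQ})\Big| > \epsilon/2\Big) \leq 2 \cdot 2^{2n} \exp(-N' \epsilon^2 / 2) \leq \delta/2,
\end{align}
provided $N' = O((cn + \log(1/\delta))/\epsilon^2)$ with $c$ absorbing the $\log 2$ factors. The total sample count is then $N_I + (2^n-1) N' = O(2^n (cn+\log(1/\delta))/\epsilon^2)$, matching the stated complexity.

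The main obstacle is precisely the step where one might naïvely apply the triangle inequality to $\sum_Q H_{iQ}(\hat z_l^{PQ} - z_l^{PQ})$: since $\sum_Q |H_{iQ}| = (2^n-1)/\sqrt{2^n} = O(\sqrt{2^n})$, that approach would demand accuracy $\epsilon/\sqrt{2^n}$ on every $\hat z_l^{PQ}$ individually and inflate the total shot count by another factor of $2^n$. Exploiting independence across $Q$ and passing to the $L^2$ sub-Gaussian bound $\sqrt{\sum_Q H_{iQ}^2} \leq 1$ is what recovers the claimed $O(2^n)$ scaling, and it is the only subtle point; the remaining steps are a combination of Hoeffding, linearity of expectation, and a union bound over the $2^{2n}$ index pairs $(i,l)$, which produces the $cn$ factor in the sample complexity with a universal constant well below $10$.
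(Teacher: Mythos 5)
Your proposal is correct and follows essentially the same route as the paper: both rest on the sub-Gaussianity of the individual estimators, the key observation that the independent errors $\hat z_l^{PQ}-z_l^{PQ}$ combine with $\ell_2$ weight $\sum_Q H_{iQ}^2\le 1$ rather than $\ell_1$ weight $O(\sqrt{2^n})$, the allocation of $O(2^n)$ times more shots to the $\hat z_l^I$ estimate to absorb the $|1-\sum_Q H_{iQ}|=O(\sqrt{2^n})$ coefficient, and a union bound over the $4^n$ index pairs. The only difference is bookkeeping — you split the error budget into two $\epsilon/2$ halves and bound them separately, while the paper bundles both contributions into a single sub-Gaussian via the linear-combination fact — which does not change the argument or the resulting sample complexity.
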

\begin{proof}

Consider the finite sample estimate defined in \eqref{eq:est_2} for $z^{PQ}_l$ using $N'$ measurement shots. Now, let $e_l^{PQ} =  \hat{z}^{PQ}_l - z^{PQ}_l$. By Hoeffding's inequality, we find that, $e^{PQ}_l$  is a sub-gaussian random variable \cite{vershynin2018high}:
\begin{definition}{\bf Subgaussian random variable}
A zero-mean random variable $X$ is Subgaussian with a variance proxy of $\sigma^2$ if
\begin{equation}
    Pr(|X| >t) \leq 2e^{-\frac{2t^2}{\sigma^2}} \; .
\end{equation}
We denote this by:
$  X  \sim \texttt{SubG}\left(\sigma^2 \right) \; .$
\end{definition}
The tail bound from Hoeffding's inequality gives us
\begin{equation}
   e_l^{PQ}  \sim \texttt{SubG}\left(\frac{1}{N^\prime} \right) \; .
\end{equation}

In total, estimating all $\hat{z}_l^{PQ}$ for all $Q \in \pc_Z \setminus {I}$ and $l \in 2^n$  requires $N = O(2^n N^\prime)$ independent measurement outcomes. 

\noindent Similarly, we can use $N''$ randomized measurements to estimate $z_l^{I}$ as in Lemma \ref{lemma:est_z_I}.
We define the error $e^{I}_l =  (1 -  \sum_{Q \in \pc_Z \setminus {I} } H_{iQ})(\hat{z}^{I}_l - z_l^{I}).$ From \eqref{eq:Hoeff_zI} we have
\begin{equation}
   e_l^{I}  \sim \texttt{SubG}\left(\frac{ (1 -  \sum_{Q \in \pc_Z \setminus {I} } H_{iQ})^2}{N''} \right) \; .
\end{equation}
Using these we can compute the following estimates $ \hat{z}_l^{P,i} = (1 -  \sum_{Q \in \pc_Z \setminus {I} } H_{iQ}) \hat{z}^{I}_l +   \sum_{Q \in \pc_Z \setminus {I} } H_{iQ} \hat{z}^{PQ}_l,$  for all $i,l \in [2^n].$ With the total number of measurements required being $N = N'' + 2^n N'$.

\noindent The error in this estimate can be computed as,
$e_l^{P,i} =  \hat{z}^{P,i}_l - z^{P,i}_l =  e^{I}_l +   \sum_{Q \in \pc_Z \setminus {I} } H_{iQ} e_l^{PQ}.$
This error is a linear combination of subgaussian random variables, which is also subgaussian by the following fact \cite{vershynin2018high}.
\begin{fact}{\bf (Theorem 2.6.3 in \cite{vershynin2018high})}
Let $\{X_i \sim \texttt{SubG}(\sigma^2_i)~~\forall  i \in [D] \}$ be independent, mean zero, subgaussian random variables, and $a = (a_1,\ldots, a_D ) \in \mathbb{R}^D$.
Then, for every $t \geq 0$, we have
$$ Pr( |\sum_i a_i X_i | > t ) \leq 2\exp \left( \frac{-2 t^2}{ \sigma^2 ||a||^2_2} \right),$$
where $\sigma^2 =  \max_i \sigma^2_i \; .$
\end{fact}
From this we have,
\begin{equation}\label{eq:subg1}
    e_l^{P,i} \sim \texttt{SubG} \left( \max \left(\frac{ (1 - \sum_{Q\neq I}H_{iQ})^2}{N''},  \frac{1}{N'} \right) (1 + \sum_{Q\neq I} H^2_{iQ}) \right) \; .
\end{equation}
Now, $H_{iQ}$ defined as $\frac{\braket{i|Q|i}}{2^{n/2}}$ is just the $n-$qubit Hadamard operator. From its unitarity we have $1 + \sum_{Q\neq I} H^2_{iQ} < 2 $.
The first row and column of $H$ (corresponding to $\ket{i=1}=\ket{0}\otimes \ldots \otimes \ket{0}$ and $Q = I$ respectively)  is a vector of all $1/\sqrt{2^n}.$ Again from unitarity of $H$ we have
\begin{align}
\sum_{Q \neq I} H_{1Q} = \sqrt{2^{n}} - \frac{1}{\sqrt{2^n}} \; ,\\
\sum_{Q \neq I} H_{iQ} =  - \frac{1}{\sqrt{2^n}},~ {i \neq 1} \; .
\end{align}
This gives in the worst case $(1 - \sum_{Q \neq I}H_{iQ})^2 < 2^{n}$.

\noindent Combining these upper-bounds in $\eqref{eq:subg1}$ we get
\begin{equation}
    e_l^{P,i} \sim \texttt{SubG} \left( 2 \max \left(\frac{2^n}{N''},  \frac{1}{N'} \right) \right) \; ,
\end{equation}
and if we choose $N'' = 2^n N'$, we get
    $e_l^{P,i} \sim \texttt{SubG} \left( \frac{2}{N'} \right) $, with the total number of measurements required being $N = 2^n N' + N'' = O(2^{n} N')$.
From the definition of Subgaussian random variables it follows that
$$Pr( |e_l^{P,i}| > \epsilon ) \leq  2\exp \left(- N^\prime  \epsilon^2 \right) \; .$$
Choosing then $N^\prime =  \frac{c \log(4^n/\delta)}{\epsilon^2}$ for a constant $c$ gives us
\begin{equation}
  Pr( |e_l^{P,i}| > \epsilon ) \leq  \frac{\delta}{4^n} \; .
\end{equation}
Using the same union bound argument used in Lemma \ref{lemma:est_z_I}, we can show that
\begin{equation}
  Pr( \max_{l,i \in [2^n]} |e_l^{P,i}| > \epsilon ) \leq  \delta \; ,
\end{equation}
and the total number of measurements required is $N = O(2^n N^\prime) =  O( 2^n\frac{cn + \log(1/\delta)}{\epsilon^2})$

\end{proof}

\begin{lemma}
\label{lemma:sparsity}
Given $0<\beta< \mnorm{\rho}/2$, we can construct $\mathcal{C} \subset \bc_R$ such that
\begin{enumerate}
\item For every $P \in \mathcal{C}$ we can guarantee with probability $1 -\delta$ that $|s_P| \geq \beta$.
\item For every $P \notin \mathcal{C}$  we can guarantee with probability $1-\delta$ that $|s_P| < 1.01\beta$.
\end{enumerate}
The construction of such a $\mathcal{C}$ requires a total of $O( 8^n\frac{cn + \log(1/\delta)}{ \beta^2\unorm{A}^2})$ randomized measurements.
\end{lemma}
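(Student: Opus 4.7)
The plan is to directly exploit the identity $z^{P,i}_k - z^I_k = s_P(A_{ki} - z^I_k)$ established earlier together with the observation that $\max_{i,k \in [2^n]} |A_{ki} - z^I_k| = \unorm{A}$, so that $\max_{i,k} |z^{P,i}_k - z^I_k| = |s_P|\,\unorm{A}$. This collapses the sparsity-identification problem to estimating the maxima of the observable quantities $|\hat z^{P,i}_k - \hat z^I_k|$ for each $P \in \bc_R$ and comparing them against a threshold proportional to $\beta\unorm{A}$.

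First I would run the finite-shot estimators of Lemmas \ref{lemma:est_z_I} and \ref{lemma:est_Pi} to produce $\hat z^I_k$ for all $k \in [2^n]$ and $\hat z^{P,i}_k$ for all $P \in \bc_R$ and all $i,k \in [2^n]$, at a per-coordinate tolerance $\epsilon' = 0.0025\,\beta\unorm{A}$. Replacing $\delta$ by $\delta/(2|\bc_R|)$ in those lemmas, a union bound over $P \in \bc_R$ and over the two sets of estimates guarantees that with probability at least $1-\delta$,
\begin{equation*}
\bigl| (\hat z^{P,i}_k - \hat z^I_k) - (z^{P,i}_k - z^I_k) \bigr| \;\leq\; 2\epsilon' \;=\; 0.005\,\beta\unorm{A}
\end{equation*}
holds simultaneously for every $P,i,k$. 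The cost of Lemma \ref{lemma:est_Pi} for a single $P$ is $O(2^n(cn+\log(1/\delta))/\epsilon'^{\,2})$ measurements; multiplying by $|\bc_R| = 4^n - 1$ (and absorbing the union-bound factor $\log|\bc_R| = O(n)$ into the $cn$ term) yields the claimed total of $O\bigl(8^n (cn+\log(1/\delta))/(\beta^2 \unorm{A}^2)\bigr)$ randomized measurements.

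Next I would define
\begin{equation*}
\mathcal{C} \;:=\; \Bigl\{ P \in \bc_R \;:\; \max_{i,k \in [2^n]} |\hat z^{P,i}_k - \hat z^I_k| \;\geq\; 1.005\,\beta\unorm{A} \Bigr\}.
\end{equation*}
Conditioning on the high-probability event above, the triangle inequality gives the two required implications. If $P \in \mathcal{C}$, then $\max_{i,k} |z^{P,i}_k - z^I_k| \geq 1.005\,\beta\unorm{A} - 0.005\,\beta\unorm{A} = \beta\unorm{A}$, which by the identity implies $|s_P| \geq \beta$. Conversely, if $P \notin \mathcal{C}$, then $\max_{i,k} |z^{P,i}_k - z^I_k| < 1.005\,\beta\unorm{A} + 0.005\,\beta\unorm{A} = 1.01\,\beta\unorm{A}$, so $|s_P| < 1.01\beta$; contrapositively, $|s_P| \geq 1.01\beta$ forces $P \in \mathcal{C}$.

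The only real subtlety, and what I would treat as the main obstacle, is tuning the constants so that the acceptance threshold lies strictly between $\beta\unorm{A}$ and $1.01\beta\unorm{A}$ after the triangle-inequality slack is paid twice (once per estimate in the difference). The gap of $0.01\,\beta\unorm{A}$ between the two acceptance levels is exactly what determines the per-coordinate tolerance $\epsilon'$ and, through it, the $1/(\beta^2 \unorm{A}^2)$ factor in the final sample complexity; any looser choice of $\epsilon'$ would destroy one of the two implications. Beyond this constant bookkeeping, the argument is a direct application of the two previously proved estimation lemmas together with a union bound over Pauli strings.
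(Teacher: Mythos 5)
Your proposal is correct and follows essentially the same route as the paper's proof: both reduce the problem to the identity $\max_{i,k}|z^{P,i}_k - z^I_k| = |s_P|\,\unorm{A}$, estimate the $z$-differences to accuracy $0.005\,\beta\unorm{A}$ via Lemmas \ref{lemma:est_z_I} and \ref{lemma:est_Pi}, threshold at $1.005\,\beta\unorm{A}$, and apply the triangle inequality, with the same $O(8^n(cn+\log(1/\delta))/(\beta^2\unorm{A}^2))$ total count. Your version is slightly more explicit about the per-estimate tolerance split and the union bound over $\bc_R$, but the argument is the same.
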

\begin{proof}
    From the definition of $\unorm{A}$ we know that for every $P \in \pc$ 
\begin{equation}
    \label{eq:max_zP_zI}
\max_{i,k \in [2^n]}        |z^{P,i}_k - z^{I}_k| = |s_P|~\unorm{A} \; .
    \end{equation}
From Lemma \ref{lemma:est_z_I} and \ref{lemma:est_Pi}, for each $P$, we can estimate $|\hat{z}^{P,i}_k - \hat{z}^{I}_k|$ such that, with probability $1-\delta$, the maximum error in the LHS is at most $0.005 \beta\unorm{A}$ using $ O( 2^n\frac{cn + \log(1/\delta)}{ \beta^2 \unorm{A}^2})$ measurements. 

\noindent Once we compute these estimates for every $P \in \pc$ by using a total of  $ O( 8^n\frac{cn + \log(1/\delta)}{ \beta^2 \unorm{A}^2})$ measurements, we can define $\mathcal{C}$ such that
\begin{equation}
    \mathcal{C} = \{P~| \max_{k,i \in [2^n]}|\hat{z}^{P,i}_k - \hat{z}^{I}_k| \geq 1.005\beta \unorm{A}   \} \; .
\end{equation}
From the error in the estimates we can guarantee with high probability that for every $P \in \mathcal{C}$ we will have $ \max_{k,i \in [2^n]}|z^{P,i}_k - z^{I}_k| \geq \beta ~\unorm{A}$  and hence $|s_P| \geq \beta $.
Similarly, if $P \notin \mathcal{C}$, then $\max_{k,i \in [2^n]}|\hat{z}^{P,i}_k - \hat{z}^{I}_k| < 1.005\beta \unorm{A}$, which gives  $ \max_{k,i \in [2^n]} |z^{P,i}_k - z^{I}_k| < 1.01\beta ~\unorm{A}$   with high probability. This in turn implies that $|s_P| < 1.01 \beta$.

\end{proof}

\begin{lemma}
\label{lemma:multiplicative_err}
Suppose we have $P, R \in \mathcal{P} \setminus I$, such that $|s_P|, |s_R| > \alpha$. 
And we also know  $l,i \in [2^n],$  such that  $|A_{il} - z_l^{I}| >  \gamma$
. Then  we can estimate $\frac{s_P}{s_R}$ such that
\begin{equation}
   Pr( \left| \widehat{\frac{s_P}{s_R}} - \frac{s_P}{s_R} \right|  > \epsilon \left| \frac{s_P}{s_R} \right| ) \leq \delta \; ,
\end{equation}
using $N=O(2^n\frac{ cn + \log(1/\delta)}{\epsilon^2 \alpha^2 \gamma^2})$ randomized measurements.
\end{lemma}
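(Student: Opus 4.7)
The plan is to estimate the ratio directly via the formula
\[
\widehat{\frac{s_P}{s_R}} := \frac{\hat z_l^{P,i} - \hat z_l^{I}}{\hat z_l^{R,i} - \hat z_l^{I}},
\]
and then turn a simultaneous additive-error bound on the three estimates into a multiplicative-error bound on the quotient. The key structural input is the identity $z_k^{P,i} - z_k^{I} = s_P(A_{ik} - z_k^{I})$ (equation~\eqref{eq:zPQ_def} specialized to the computational basis), which gives
\[
|z_l^{P,i} - z_l^{I}| = |s_P|\,|A_{il} - z_l^{I}| > \alpha\gamma,
\]
and the analogous lower bound $|z_l^{R,i} - z_l^{I}| > \alpha\gamma$ for the denominator. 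Thus both quantities we are inverting are bounded away from zero, which is exactly what turns additive control into multiplicative control.

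First, I would apply Lemma~\ref{lemma:est_z_I} to obtain $\hat z_l^{I}$ and Lemma~\ref{lemma:est_Pi} (twice, once for $P$ and once for $R$) to obtain $\hat z_l^{P,i}$ and $\hat z_l^{R,i}$, all with additive accuracy $\epsilon' = c'\,\epsilon\alpha\gamma$ for a small absolute constant $c'$, say $c' = 1/8$. By a union bound over the three estimation tasks, the failure probability can be absorbed into $\delta$ at the cost of a constant factor inside the log, so the total measurement count is $O\!\left(2^n\,\frac{cn + \log(1/\delta)}{\epsilon'^2}\right) = O\!\left(2^n\,\frac{cn + \log(1/\delta)}{\epsilon^2\alpha^2\gamma^2}\right)$, matching the claimed bound.

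Second, I would propagate the additive errors through the quotient. Writing $N = z_l^{P,i} - z_l^{I}$, $D = z_l^{R,i} - z_l^{I}$ and $\hat N = N + e_N$, $\hat D = D + e_D$ with $|e_N|, |e_D| \le 2\epsilon'$ (the factor $2$ comes from reusing $\hat z_l^{I}$ in both numerator and denominator), a direct computation gives
\[
\frac{\hat N}{\hat D} - \frac{N}{D} \;=\; \frac{e_N\,D - N\,e_D}{D(D + e_D)}.
\]
Using $|N|, |D| \ge \alpha\gamma$ together with $|e_D| \le 2\epsilon' \le |D|/2$ (guaranteed by choosing $c'$ small enough), one obtains
\[
\left|\widehat{\tfrac{s_P}{s_R}} - \tfrac{s_P}{s_R}\right| \;\le\; \frac{2|e_N|}{|D|} + \frac{2|N|\,|e_D|}{|D|^2} \;\le\; \left(\frac{|e_N|}{|N|} + \frac{|e_D|}{|D|}\right)\cdot 2\left|\tfrac{N}{D}\right|,
\]
and the bracketed quantity is at most $4\epsilon'/(\alpha\gamma) \le \epsilon/2$ by our choice of $c'$, yielding the desired multiplicative accuracy $\epsilon\,|s_P/s_R|$.

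The only mildly delicate point is correctly bookkeeping the reuse of $\hat z_l^{I}$ across the numerator and denominator (so the errors $e_N$ and $e_D$ are not independent, but this is harmless for an absolute bound) and tightening constants so that the $|D+e_D| \ge |D|/2$ step holds deterministically on the good event; everything else is a routine union bound combined with the ratio-perturbation estimate above.
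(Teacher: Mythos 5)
Your proposal is correct and follows essentially the same route as the paper: estimate $\hat z_l^{P,i}-\hat z_l^{I}$ and $\hat z_l^{R,i}-\hat z_l^{I}$ to additive accuracy proportional to $\epsilon\alpha\gamma$, use $|z_l^{P,i}-z_l^{I}|\geq \alpha\gamma$ to bound both quantities away from zero, and convert additive error into multiplicative error of the quotient (the paper delegates this last step to its ratio-perturbation Lemma~\ref{lem:err_ratio}, which your inline computation reproves). Your explicit bookkeeping of the shared $\hat z_l^{I}$ term is a small point the paper glosses over, but it changes nothing substantive.
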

\begin{proof}
We know that for indices satisfying the assumptions in the lemma
\begin{equation}
\frac{s_P}{s_R}= \frac{z_l^{P,i} - z^I_l}{ z_l^{R,i} - z^I_l} \; ,
\end{equation}
So an estimate for the ratio $s_P/s_R$ can be computed using ratios of the appropriate estimates of $z$ values obtained from measurements. From \eqref{eq:tempa} we also know that
\begin{equation}
    |z_l^{P,i} - z^I_l| \geq \alpha \gamma \; .
\end{equation}
From the assumption in the lemma, we know an $l, i$ for which this condition holds.
Now using $ N = O(2^n\frac{ cn +  \log(1/\delta)}{\epsilon^2 \alpha^2 \gamma^2 })$ randomized measurements we can find estimates for $z_l^{P,i} - z^I_l$ such that
\begin{equation}
Pr(| (\hat{z}^{P,i}_l - \hat{z}^I_l)- (z^{P,i}_l - z^I_l )| >  \frac{\epsilon~\alpha \gamma}{2}) \leq  \delta \; .
\end{equation}
Using Lemma \ref{lem:err_ratio}, we have w.h.p,
\begin{equation}
   \left| \frac{ \hat{z}^{P,i}_l - \hat{z}^I_l}{ \hat{z}^{R,i}_l - \hat{z}^I_l} - \frac{s_P}{s_R} \right| \leq \epsilon \left| \frac{s_P}{s_R} \right| \; .
\end{equation}
\end{proof}

\begin{lemma}[Error in  ratios] \label{lem:err_ratio}
Let  $\hat{x} = x + \epsilon_x $ and $ \hat{y} = y + \epsilon_y$  such that  $|x|, |y| \geq \alpha$ and  $|\epsilon_x|, |\epsilon_y|\leq\epsilon \leq  \frac{\alpha}{2}$. 
Then
\ca 
\begin{align}
 \frac{x}{y}(1 - c_1\frac{\epsilon}{\alpha}) \leq \frac{\hat{x}}{\hat{y}} \leq \frac{x}{y}(1 + c_2 \frac{\epsilon}{\alpha}),  ~~ \text{if}~~~~\frac{x}{y} > 0 \\
 \frac{x}{y}(1 +  c_2\frac{\epsilon}{\alpha}) \leq \frac{\hat{x}}{\hat{y}} \leq \frac{x}{y}(1 - c_1 \frac{\epsilon}{\alpha}),  ~~ \text{if}~~~~\frac{x}{y} < 0
\end{align}
\cb
\begin{equation}
   \left| \frac{\hat{x}}{\hat{y}} - \frac{x}{y} \right| \leq O \left( \frac{\epsilon}{\alpha} \right) \left|\frac{x}{y} \right| \; .
\end{equation}
\end{lemma}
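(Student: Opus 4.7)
The plan is to control the ratio multiplicatively by factoring $x/y$ out of $\hat{x}/\hat{y}$. Specifically, I would write
\begin{equation*}
\frac{\hat{x}}{\hat{y}} \;=\; \frac{x+\epsilon_x}{y+\epsilon_y} \;=\; \frac{x}{y}\cdot\frac{1+\epsilon_x/x}{1+\epsilon_y/y},
\end{equation*}
which is well-defined because the assumption $\epsilon \leq \alpha/2$ together with $|y|\geq \alpha$ forces $|\epsilon_y/y|\leq 1/2$, and in particular $\hat{y}$ cannot vanish (nor change sign relative to $y$). The same bound $|\epsilon_x/x|\leq \epsilon/\alpha \leq 1/2$ holds on the numerator.

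Next I would estimate the multiplicative correction factor. Setting $a=\epsilon_x/x$ and $b=\epsilon_y/y$, a direct computation gives $\tfrac{1+a}{1+b}-1 = \tfrac{a-b}{1+b}$, so
\begin{equation*}
\left|\frac{1+a}{1+b}-1\right| \;\leq\; \frac{|a|+|b|}{1-|b|} \;\leq\; \frac{2\epsilon/\alpha}{1/2} \;=\; \frac{4\epsilon}{\alpha}.
\end{equation*}
Multiplying through by $|x/y|$ then yields
\begin{equation*}
\left|\frac{\hat{x}}{\hat{y}} - \frac{x}{y}\right| \;\leq\; \frac{4\epsilon}{\alpha}\left|\frac{x}{y}\right|,
\end{equation*}
which is the claimed $O(\epsilon/\alpha)\,|x/y|$ bound.

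There is no real conceptual obstacle here, only a small technical one: the naive first-order estimate $|\hat x/\hat y - x/y| \leq |\epsilon_x y - x\epsilon_y|/|\hat y y|$ gives $O(\epsilon/\alpha)(|x/y|+1)$, which is an \emph{additive} rather than a \emph{multiplicative} bound in $|x/y|$ and would not suffice in the regime $|x/y|\ll 1$. The factorization above avoids this by treating $x/y$ as an overall scale and bounding only the dimensionless correction factor, exploiting the fact that both $|x|$ and $|y|$ are at least $\alpha$ so that both relative errors $\epsilon_x/x$ and $\epsilon_y/y$ are simultaneously small. This is precisely what makes the resulting bound strong enough to feed back into Lemma~\ref{lemma:multiplicative_err} and ultimately into Step~3 of Theorem~\ref{thm:samples}.
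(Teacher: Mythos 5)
Your proof is correct and follows essentially the same route as the paper: both factor $\hat{x}/\hat{y} = (x/y)\,\frac{1+\epsilon_x/x}{1+\epsilon_y/y}$ and then bound the dimensionless correction factor using $|\epsilon_x/x|,|\epsilon_y/y|\leq \epsilon/\alpha \leq 1/2$. Your use of the identity $\frac{1+a}{1+b}-1=\frac{a-b}{1+b}$ gives the two-sided bound in one step (with the explicit constant $4$), whereas the paper derives separate upper and lower bounds on the correction factor, but this is a cosmetic difference; your remark about why the multiplicative form is needed downstream is also accurate.
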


\begin{proof}
\begin{align}
\frac{\hat{x}}{\hat{y}} = \frac{ x + \epsilon_x}{y  + \epsilon_y}  = \left(\frac{x}{y} \right) \frac{ 1 + \frac{\epsilon_x}{x}}{  1 + \frac{\epsilon_y}{y}} \; ,
\end{align}
so $  \frac{ 1 + \frac{\epsilon_x}{x}}{  1 + \frac{\epsilon_y}{y}}$ is the exact multiplicative error term that we have to bound. 

\noindent Since $| \epsilon_x/x | < \frac12$, we  have  $ \frac12 <  1 + \frac{\epsilon_x}{x} < \frac32$.
Also since  $| \epsilon_y/y | < \frac12$, we have, $ 1 - \epsilon_y/y \leq \frac{1}{ 1 + \epsilon_y/y} \leq 1 + 2 | \epsilon_y/y |.$

\paragraph{Lower bound on error}
\begin{align}
    \frac{ 1 + \frac{\epsilon_x}{x}}{  1 + \frac{\epsilon_y}{y}} &\geq     ( 1 + \frac{\epsilon_x}{x})(  1 - \frac{\epsilon_y}{y})  \\
     &=  1 + \frac{\epsilon_x}{x} - \frac{\epsilon_y}{y}( 1+   \frac{\epsilon_x }{x } ) \\
     &\geq 1 + \frac{\epsilon_x}{x} - c \frac{\epsilon_y}{y} \\
     & \geq 1 - \frac{\epsilon}{\alpha} - c \frac{\epsilon}{\alpha} \geq 1 - (1+c) \frac{\epsilon}{\alpha} \; .
\end{align}
Where  $c$ is either $\frac12$ or $\frac32$ depending on the sign of $\frac{\epsilon_y}{y}.$ 

\paragraph{Upper bound on error}
\begin{align}
    \frac{ 1 + \frac{\epsilon_x}{x}}{  1 + \frac{\epsilon_y}{y}} &\leq     ( 1 + \frac{\epsilon_x}{x})(  1 +  2|\frac{\epsilon_y}{y}|) \\
     &\leq  1 + \frac{\epsilon_x}{x} + 3 \left| \frac{\epsilon_y}{y} \right |  \\ 
     &\leq  1+ 4 \frac{\epsilon}{\alpha} \; .
\end{align}
Combining these two bounds on the multiplicative error gives us the claimed inequalities in the lemma.
  
\end{proof}

\section{Properties of measurement operators}
\label{app:m_op}

Recall that the overlap of the POVM on any traceless basis $\bc_L$ is defined as 
\begin{align}
    m_{kI} = \BK{M_{k}|\hat{I}}, \quad m_{kQ} = \BK{M_{k}|Q}, \ Q \in \bc_L.
\end{align}
Define the $D \times 4^n$ matrix $\Bm$ with elements given by $m_{kQ}$, and let $\Bm_{\setminus I}$ be the submatrix of $\Bm$ obtained by removing the column corresponding to $Q = I$. Then the following lemma holds.
\begin{lemma}   \label{lem:left_null_space}
If the POVM is linearly independent, then the matrix $\Bm$ has full row rank $D$. Furthermore, the matrix $\Bm_{\setminus I}$ has row rank $D-1$ and the only vector $v$ in the left null space that satisfies
\begin{align}
 v \Bm_{\setminus I} = 0,
\end{align}
is given by $v = \mathbbm{1}$ which is the vector of all ones.
\end{lemma}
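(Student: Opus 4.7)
The approach is to view the rows of $\Bm$ as the list of inner products of the POVM elements $M_k$ against the generating set $\{\hat{I}\} \cup \bc_L$. By the standing assumption that each $M_k$ lies in $\mathrm{span}(\{\hat{I}\} \cup \bc_L)$, every row encodes the complete coordinate information of the corresponding POVM element, and both claims will follow from this observation plus the orthogonality identity $\sum_k M_k = I$.

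For the full row rank claim, suppose $v \Bm = 0$. Unpacking the definition gives $\BK{\sum_k v_k M_k \mid Q} = 0$ for every $Q \in \{\hat{I}\} \cup \bc_L$. But $\sum_k v_k M_k$ itself lies in the span of that set, so it is orthogonal to itself and must vanish. Linear independence of the POVM (Condition~\ref{cond3}) then forces $v = 0$, so $\Bm$ has row rank $D$.

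For $\Bm_{\setminus I}$, I would first exhibit $\mathbbm{1}$ as an explicit null vector: using $\sum_k M_k = I$, one has $\sum_k m_{kQ} = \BK{I \mid Q} = 0$ for every traceless $Q \in \bc_L$, so $\mathbbm{1} \Bm_{\setminus I} = 0$. For the converse, take any $v$ with $v \Bm_{\setminus I} = 0$; then $\sum_k v_k M_k$ is orthogonal to every element of $\bc_L$ while still lying in $\mathrm{span}(\{\hat{I}\} \cup \bc_L)$. Decomposing $\sum_k v_k M_k = \alpha \hat{I} + \sum_{Q \in \bc_L} \beta_Q Q$, the orthogonality conditions become $G \beta = 0$ where $G$ is the Gram matrix of $\bc_L$. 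Since $\bc_L$ is a linearly independent set, $G$ is positive definite, so $\beta = 0$ and $\sum_k v_k M_k = \alpha I$. Substituting $I = \sum_k M_k$ and using linear independence of $\{M_k\}$ one more time gives $v_k = \alpha$ for all $k$, so the left null space of $\Bm_{\setminus I}$ is exactly $\mathrm{span}(\mathbbm{1})$, and the rank equals $D - 1$.

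The only potentially delicate step is the one showing that orthogonality to each element of $\bc_L$ annihilates the $\bc_L$-component of $\sum_k v_k M_k$. This relies on the nondegeneracy of the Gram matrix of $\bc_L$, which in turn follows from the assumption that $\bc_L$ is a basis; without that assumption the argument would break and additional null directions would appear, so this is where the independence hypothesis on $\bc_L$ is genuinely used.
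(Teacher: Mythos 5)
Your proof is correct, and it uses the same basic ingredients as the paper's (linear independence of the POVM, the normalization $\sum_k M_k = I$, and tracelessness of the elements of $\bc_L$), but the converse direction is argued differently. The paper gets one-dimensionality of the left null space of $\Bm_{\setminus I}$ essentially for free by a dimension count: once $\Bm$ has full row rank $D$, deleting the single column indexed by $I$ can drop the rank by at most one, so the left null space is at most one-dimensional, and exhibiting $\mathbbm{1}$ as a null vector finishes the argument. You instead characterize the null space directly: any $v$ with $v\Bm_{\setminus I}=0$ makes $\sum_k v_k M_k$ orthogonal to all of $\bc_L$, the nondegenerate Gram matrix of $\bc_L$ kills its traceless component, and then $\sum_k v_k M_k = \alpha I = \alpha \sum_k M_k$ forces $v = \alpha\mathbbm{1}$ by linear independence of the POVM. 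Your route is slightly longer but more self-contained, and it makes explicit where the hypothesis that $\bc_L$ is a linearly independent set enters; it also does not actually need $\Bm$ to carry a complete basis of the traceless space, only that each $M_k$ lies in $\mathrm{span}(\{\hat{I}\}\cup\bc_L)$, which is a mild generalization of the paper's statement. The paper's column-removal count is the quicker argument when, as in its appendix, $\Bm$ is taken to be the full $D\times 4^n$ coefficient matrix. One cosmetic point: since the $M_k$ and the basis elements are Hermitian, the entries $m_{kQ}$ are real and you may take $v$ real, so the step ``orthogonal to itself hence zero'' is safe; if you wanted to allow complex $v$ you would apply the argument to $\sum_k \bar{v}_k M_k$ because the inner product $\BK{\cdot|\cdot}$ is conjugate-linear in its first slot.
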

\begin{proof}
The fact that the matrix $\Bm$ has full row rank $D$ follows directly from linear independence of the POVM. Consequently, the matrix $\Bm_{\setminus I}$ has row rank $D-1$ and must have a one dimensional left null space. By definition of POVM, $\sum_{k \in [D]} M_k = I$. Therefore, for all $Q \in \bc_L$ we get
\begin{align}
    \sum_{k \in [D]} m_{kQ} = \tr(Q \sum_{k \in [D]} M_k ) = \tr(Q I) = 0,
\end{align}
since $Q \in \bc_L$ is traceless.
\end{proof}

\section{Induced linear operator space of unitary operators}
Recall that for any subset $S \subseteq \uc$ of unitary operators on $n$ qubits, the induced linear operators space representing hybrid quantum-classical operations is given by \eqref{eq:linear_operator_space}
\begin{align}   
    \lc(\mathcal{S}) = \left\{\sum_{l}c_l \Phi(U_l) \mid U_l \in \mathcal{S},  \ \sum_{l} c_l = 1 \right\}.
\end{align}
These induced operator spaces $\lc(S)$ have many natural properties as given below.
\begin{lemma}[Closedness under linear combination] \label{lem:closed_linear}
    Let $S \subseteq \uc$ and let $\Phi_1, \ldots, \Phi_m \in \lc(S)$. Then for any  $c_1, \ldots, c_m$ such that $\sum_{i=1}^{m} c_i = 1$, we have $\sum_{i=1}^{m} c_i \Phi_i \in \lc(S)$. 
\end{lemma}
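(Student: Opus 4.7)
The plan is to unfold the definition of $\lc(S)$ for each $\Phi_i$ and then collect terms. By definition, for each $i \in \{1, \ldots, m\}$, there exist unitaries $U^{(i)}_1, \ldots, U^{(i)}_{L_i} \in S$ and scalars $c^{(i)}_1, \ldots, c^{(i)}_{L_i}$ with $\sum_{l=1}^{L_i} c^{(i)}_l = 1$ such that
\begin{align}
\Phi_i = \sum_{l=1}^{L_i} c^{(i)}_l \Phi(U^{(i)}_l).
\end{align}

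Next I substitute this into the combination $\sum_{i=1}^m c_i \Phi_i$ and collect terms:
\begin{align}
\sum_{i=1}^m c_i \Phi_i = \sum_{i=1}^m \sum_{l=1}^{L_i} (c_i c^{(i)}_l) \Phi(U^{(i)}_l).
\end{align}
This expresses $\sum_i c_i \Phi_i$ as a linear combination of superoperators $\Phi(U)$ with $U \in S$, which is exactly the form required by \eqref{eq:linear_operator_space}.

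It only remains to verify the affine (trace-preservation) constraint that the combined coefficients sum to one. Using the two hypotheses $\sum_{l=1}^{L_i} c^{(i)}_l = 1$ for every $i$ and $\sum_{i=1}^m c_i = 1$, we compute
\begin{align}
\sum_{i=1}^m \sum_{l=1}^{L_i} c_i c^{(i)}_l = \sum_{i=1}^m c_i \left(\sum_{l=1}^{L_i} c^{(i)}_l\right) = \sum_{i=1}^m c_i = 1.
\end{align}
Thus $\sum_{i=1}^m c_i \Phi_i \in \lc(S)$, which completes the argument. There is no real obstacle here; the statement is essentially an affine-combination-of-affine-combinations identity, and the only subtle point is tracking the normalization $\sum_l c^{(i)}_l = 1$ for each $i$ so that the concatenated coefficients still sum to one.
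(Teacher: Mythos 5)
Your proof is correct and follows essentially the same route as the paper: expand each $\Phi_i$ as an affine combination of $\Phi(U_l^{(i)})$ with $U_l^{(i)} \in S$, collect terms, and check that the concatenated coefficients $c_i c_l^{(i)}$ still sum to one. Your version is in fact slightly more careful than the paper's, which leaves the normalization check implicit.
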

\begin{proof}
    Follows directly from the definition of $\lc(S).$ 
    $\Phi_i \in \lc(S)$ implies that  
    \begin{align}
    \Phi_i = \sum_{l=1}^{L_i} c_l^i \Phi(U_l^i), \quad \sum_{l=1}^{L_i} c_l^i = 1,\qquad  &i=1,2,
\end{align}
where all $U_{l}^i \in S$. Then $\sum_i  c_i \Phi_i = \sum_{i,l}  c_i c^i_l  \Phi(U^i_l) \in \lc(S) $
\end{proof}
\begin{lemma}[Closedness under composition] \label{lem:closed}
Let $S \subseteq \uc$ be such that for all $U_1, U_2 \in S$ we have that the unitary $U_1U_2 \in S$. Let $\Phi_1, \Phi_2 \in \lc(S)$. Then the composition of the these operators  $\Phi_1\Phi_2 \in \lc(S)$.
\end{lemma}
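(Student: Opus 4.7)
The plan is to expand both operators in their defining affine combinations, use the homomorphism property $\Phi(U_1)\Phi(U_2) = \Phi(U_1 U_2)$, and then check that the product decomposition lies in $\lc(S)$ by verifying its two defining properties: the constituent unitaries belong to $S$, and the coefficients sum to one.

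First I would use the definition \eqref{eq:linear_operator_space} to write
\begin{align}
\Phi_1 = \sum_{l=1}^{L_1} c_l^1 \Phi(U_l^1), \quad \Phi_2 = \sum_{m=1}^{L_2} c_m^2 \Phi(U_m^2),
\end{align}
with $U_l^1, U_m^2 \in S$ and $\sum_l c_l^1 = \sum_m c_m^2 = 1$. Then, using bilinearity of composition,
\begin{align}
\Phi_1 \Phi_2 = \sum_{l,m} c_l^1 c_m^2 \,\Phi(U_l^1)\Phi(U_m^2).
\end{align}
The key identity I would invoke is $\Phi(U_1)\Phi(U_2) = \Phi(U_1 U_2)$, which is immediate from the fact that $\Phi(U)$ represents conjugation, since $(U_1 U_2)\rho(U_1 U_2)^\dagger = U_1(U_2 \rho U_2^\dagger)U_1^\dagger$.

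Substituting this identity yields
\begin{align}
\Phi_1 \Phi_2 = \sum_{l,m} c_l^1 c_m^2 \,\Phi(U_l^1 U_m^2).
\end{align}
By the assumed closure of $S$ under multiplication, each $U_l^1 U_m^2 \in S$. The coefficients satisfy
\begin{align}
\sum_{l,m} c_l^1 c_m^2 = \Bigl(\sum_l c_l^1\Bigr)\Bigl(\sum_m c_m^2\Bigr) = 1 \cdot 1 = 1,
\end{align}
which is precisely the affine constraint defining $\lc(S)$. Hence $\Phi_1 \Phi_2 \in \lc(S)$.

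I do not anticipate any real obstacle here: the argument is essentially a formal rearrangement. The only subtlety is making sure the affine constraint $\sum c_l = 1$ is preserved under composition, which works precisely because the product of two affinely normalized sums is again affinely normalized. No analytic estimates, basis choices, or structural lemmas beyond the definition are needed.
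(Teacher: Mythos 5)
Your proof is correct and follows essentially the same route as the paper's: expand both operators as affine combinations, apply the homomorphism identity $\Phi(U_1)\Phi(U_2) = \Phi(U_1U_2)$, invoke closure of $S$ under products, and note that the product coefficients factor as $\bigl(\sum_l c_l^1\bigr)\bigl(\sum_m c_m^2\bigr)=1$. No gaps.
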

\begin{proof}[Proof of Lemma~\ref{lem:closed}]
By definition of $\lc(S)$ we have 
\begin{align}
    \Phi_i = \sum_{l=1}^{L_i} c_l^i \Phi(U_l^i), \quad \sum_{l=1}^{L_i} c_l^i = 1,\qquad  &i=1,2,
\end{align}
where all $U_{l}^i \in S$.
By using the composition of the two we get
\begin{align}
\Phi_1\Phi_2 &= \sum_{l_1,l_2} c_{l_1}^1c_{l_2}^2 \Phi(U_{l_1}^1)\Phi(U_{l_2}^2) \nonumber \\
&= \sum_{l_1,l_2} c_{l_1}^1c_{l_2}^2 \Phi(U_{l_1}^1U_{l_2}^2),
\end{align}
where we have used $\Phi(U_1)\Phi(U_2) = \Phi(U_1U_2)$ by definition of $\Phi(U)$ for a unitary $U$. The proof follows from the assertion that $U_{l_1}^1U_{l_2}^2 \in S$ and because the coefficients sum to one since $\sum_{l_1,l_2} c_{l_1}^1c_{l_2}^2 = \sum_{l_1} c_{l_1}^1 \sum_{l_2}c_{l_2}^2 = 1$.
\end{proof}
\begin{lemma}[Closedness under tensor product]   \label{lem:tensor_closed}
If $\Phi_1 \in \lc(S_1)$ and $\Phi_2 \in \lc(S_2)$ then $\Phi_1 \otimes \Phi_2 \in \lc(S_1 \otimes S_2)$.
\end{lemma}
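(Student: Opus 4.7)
The plan is to unpack both operators using the definition of the induced linear operator space, then reassemble the tensor product as a single expansion over tensor products of unitaries from $S_1 \otimes S_2 = \{U_1 \otimes U_2 \mid U_1 \in S_1, U_2 \in S_2\}$, verifying the affine constraint on the coefficients.

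First, since $\Phi_1 \in \lc(S_1)$ and $\Phi_2 \in \lc(S_2)$, by the definition in \eqref{eq:linear_operator_space} there exist expansions
\begin{align}
\Phi_1 = \sum_{l=1}^{L_1} c_l^1 \, \Phi(U_l^1), \quad \sum_{l=1}^{L_1} c_l^1 = 1, \qquad \Phi_2 = \sum_{m=1}^{L_2} c_m^2 \, \Phi(U_m^2), \quad \sum_{m=1}^{L_2} c_m^2 = 1,
\end{align}
with $U_l^1 \in S_1$ and $U_m^2 \in S_2$. Next, I would use the bilinearity of the tensor product together with the key identity $\Phi(U_1) \otimes \Phi(U_2) = \Phi(U_1 \otimes U_2)$, which follows from $(U_1 \otimes U_2)(V_1 \otimes V_2)(U_1 \otimes U_2)^\dagger = U_1 V_1 U_1^\dagger \otimes U_2 V_2 U_2^\dagger$ applied to any $V_1 \otimes V_2$ and extended linearly. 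This gives
\begin{align}
\Phi_1 \otimes \Phi_2 = \sum_{l,m} c_l^1 c_m^2 \, \Phi(U_l^1) \otimes \Phi(U_m^2) = \sum_{l,m} c_l^1 c_m^2 \, \Phi(U_l^1 \otimes U_m^2).
\end{align}

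Finally, I would check the two conditions required for membership in $\lc(S_1 \otimes S_2)$: the constituent unitaries $U_l^1 \otimes U_m^2$ lie in $S_1 \otimes S_2$ by definition, and the coefficients satisfy the affine constraint
\begin{align}
\sum_{l,m} c_l^1 c_m^2 = \Bigl(\sum_l c_l^1\Bigr)\Bigl(\sum_m c_m^2\Bigr) = 1 \cdot 1 = 1.
\end{align}
There is no real obstacle in this proof; the only mildly non-trivial ingredient is the multiplicativity identity $\Phi(U_1) \otimes \Phi(U_2) = \Phi(U_1 \otimes U_2)$, which is precisely the same fact already invoked in the proof of the eliminator construction in Appendix \ref{app:Elim_for_Pauli}. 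Everything else is bookkeeping of coefficients.
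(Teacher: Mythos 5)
Your proof is correct and follows essentially the same route as the paper's: expand both operators in their affine combinations, apply the identity $\Phi(U_1)\otimes\Phi(U_2)=\Phi(U_1\otimes U_2)$, and observe that the product coefficients still sum to one. If anything, your version is slightly more complete, since the paper leaves the affine-constraint check implicit by reference to the proof of the composition lemma.
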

\begin{proof}[Proof of Lemma~\ref{lem:tensor_closed}]
Similar to proof of Lemma ~\ref{lem:closed}. 
\begin{align}
\Phi_1 \otimes \Phi_2 &= \sum_{l_1,l_2} c_{l_1}^1c_{l_2}^2 \Phi(U_{l_1}^1) \otimes \Phi(U_{l_2}^2) \nonumber \\
&= \sum_{l_1,l_2} c_{l_1}^1c_{l_2}^2 \Phi(U_{l_1}^1 \otimes U_{l_2}^2).
\end{align}
The last implication holds because, by elementary linear algebra, the following relation holds 
\begin{equation}
    \Phi(A) \otimes \Phi(B) = \Phi(A \otimes B) \; .
\end{equation}
This is because for any operators $V$ and $W$ 
\begin{align*}
\Phi(A) \otimes \Phi(B) \Kket{V} \otimes \Kket{W} &= (AVA^\dagger) \otimes (BWB^\dagger) \\
&= (A \otimes B) (V \otimes W) (A \otimes B) ^\dagger \\ 
&=  \Phi(A \otimes B) \Kket{V} \otimes \Kket{W} \; . 
\end{align*}

\end{proof}

\section{Incorporating prior information}
\label{app:Priors}
\subsection{Block independent noise} \label{app:independent_noise}
\subsubsection{Proof of uniqueness}
We first show that under the assumption of block independence we can break the gauge degeneracy.
\begin{theorem}[Uniqueness for block-independent noise] \label{thm:independence_uniqueness}
\end{theorem}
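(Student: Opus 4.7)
The plan is to transcribe the gauge transformation \eqref{eq:gauge} into an algebraic identity in the tensor space of block operators, and then show that block independence is incompatible with any non-trivial gauge parameter $\alpha$.

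First I would exploit the tensor structure on both sides. Because the POVM factorizes as $M_{(k_1,k_2)} = M^1_{k_1} \otimes M^2_{k_2}$, the overlap with identity satisfies $m_{(k_1,k_2),I} = m^1_{k_1,I_1}\, m^2_{k_2,I_2}$, and combined with $A = A^1 \otimes A^2$, the correction term $\sum_j A_{k,j} m_{j,I}/2^{n/2}$ in \eqref{eq:gauge} splits into a product $g^1_{k_1} g^2_{k_2}/2^{n/2}$ with $g^i_{k_i} = \sum_{j_i} A^i_{k_i,j_i} m^i_{j_i, I_i}$. Requiring $A'(\alpha) = \tilde A^1 \otimes \tilde A^2$ to also be block-independent, the gauge relation would become the tensor identity
\begin{equation*}
  \tilde A^1 \otimes \tilde A^2 \;=\; \alpha\, A^1 \otimes A^2 \;+\; (1-\alpha)\, h^1 \otimes h^2 \;,
\end{equation*}
where $h^i$ is the rank-one matrix $h^i_{k_i,k'_i} = g^i_{k_i}/2^{n_i/2}$ whose $k'_i$-dependence is trivial.

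The key step would then be a short linear-algebra fact: whenever a sum of two non-zero simple tensors $u_1 \otimes v_1 + u_2 \otimes v_2$ equals a simple tensor, one must have $u_1 \parallel u_2$ or $v_1 \parallel v_2$; the proof is to contract with dual vectors separating $u_1$ and $u_2$ and observe that the resulting $v_i$ both collapse onto a multiple of the right-hand factor. Applying this to the identity above with $u_1 = \alpha A^1$ and $u_2 = (1-\alpha) h^1$, the regime $\alpha \notin \{0,1\}$ forces $A^1 \parallel h^1$ or $A^2 \parallel h^2$.

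To finish, I would unpack what proportionality means. Since $h^i$ is column-constant in $k'_i$, $A^i \parallel h^i$ implies $A^i_{k_i, k'_i}$ is independent of $k'_i$, and column-stochasticity fixes the proportionality constant so that $A^i$ is exactly the erasure channel on block $i$. Under the standing non-degeneracy assumption that neither $A^1$ nor $A^2$ is an erasure channel, this contradicts $\alpha \neq 1$, so $\alpha = 1$, after which the uniqueness of tensor factorization up to stochastic normalization forces $\tilde A^i = A^i$ and hence $\rho'(\alpha) = \rho$. The main obstacle I anticipate is not the algebraic manipulation but pinning down the correct non-degeneracy hypothesis: Condition~\ref{cond1} on the full $A$ is strictly weaker than per-block non-erasure (a tensor product of an erasure block with a non-erasure block is still globally non-erasure), so the theorem either needs to assume non-erasure for each individual $A^i$ or handle that corner case separately in order to rule out a spurious gauge that absorbs one block into erasure form while compensating in the other.
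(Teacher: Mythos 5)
Your proof is correct and reaches the same dichotomy as the paper, but through a different key step. Both arguments begin from the same tensor identity $\tilde A^1 \otimes \tilde A^2 = \alpha\, A^1\otimes A^2 + (1-\alpha)\, h^1\otimes h^2$, with the rank-one correction factorizing because $m_{kl,I} = m^1_{k,I}\,m^2_{l,I}$. From there the paper marginalizes: it multiplies on the left and right by $I\otimes\mathbf{1}^T$ and $I\otimes\mathbf{1}$ to extract the per-block gauge relations \eqref{eq:A1_degen}--\eqref{eq:A2_degen}, substitutes these back into the full relation, and factors the result into $\alpha(1-\alpha)\,(A^1-\mathrm{diag}(d^1)\mathbbm{1})\otimes(A^2-\mathrm{diag}(d^2)\mathbbm{1})=0$. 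You instead invoke the general fact that a sum of two simple tensors with non-parallel left (and non-parallel right) factors cannot itself be simple, which yields $A^1\parallel h^1$ or $A^2\parallel h^2$ directly, without ever having to identify the factors $\tilde A^i$. Your route is arguably cleaner and isolates the structural reason the argument works; the paper's route is more computational but produces the marginal relations as a byproduct, and the same marginalization maps $\mathrm{T}_1, \mathrm{T}_2$ are then reused in the gauge-fixing algorithm of Appendix~\ref{app:independent_noise}. Your concluding worry about the hypothesis is well placed but already resolved in the paper: Theorem~\ref{thm:independence_uniqueness} explicitly assumes that each of $A^1, A^2, A'^1, A'^2$ is individually non-erasure, not merely that the global $A$ satisfies Condition~\ref{cond1}, which is exactly what is needed to exclude the spurious gauge you describe. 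One last small point: the case $\alpha=0$ is excluded not by the erasure argument but by the definition of the gauge family \eqref{eq:gauge}, which takes $\alpha\in\mathbb{R}\setminus\{0\}$.
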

Suppose that the POVM is described by $M_{kl} = M^1_k \otimes M^2_l$ where $k \in [D_1]$ and $l \in [D_2]$ with $D_1D_2 = D$.  Also let the qubit numbers for these two systems be $n_1$ and $n_2$, with $n_1 + n_2 = n$. For example, when the POVM is the computational basis, and the outcome of observing each qubit is binary valued as in \eqref{eq:bl_pauli}, this refers to a partitioning of $n$ qubits into two parts. Suppose that the noise acts independently on the two parts such that  $A = A^1\otimes A^2$ where $A^1$ acts on $M^1$ and $A^2$ acts on $M^2$. Let $A' = A'^1 \otimes A'^2$ be another such noise matrix and assume that  $A^1,A^2,A'^1,A'^2$ are not erasure type matrices discussed with Condition \ref{cond1}.
% in Eq.~\eqref{eq:cond1}.
Then if $A$ and $A'$ are related by the gauge equivalence \eqref{eq:gauge} we must have $A = A'$.
\begin{proof}%{Proof of Theorem~\ref{thm:independence_uniqueness}}
Since $A = A^1 \otimes A^2$ and $A' = A'^1 \otimes A'^2$, we will explicitly use double indices to represent each dimension.
From \eqref{eq:gauge} we have
\begin{align}   \label{eq:proof_temp1}
    A' = \alpha A + (1-\alpha) \text{diag}(d) \mathbbm{1} \; ,
\end{align}
where 
\revision{
\begin{align}
    d_{kl} = \frac{\sum_{\kp ,\lp} A_{kl, \kp \lp} m_{\kp \lp, I}}{2^{n/2}} = \frac{\sum_{\kp, \lp} A'_{kl, \kp\lp} m_{\kp \lp, I}}{2^{n/2}} \; .
\end{align}
Since $M_{kl} = M^1_K \otimes M^2_l$ we have $m_{kl, I} = m^1_{K,I} m^2_{l,I}$. Then
\begin{align}
     d_{kl} &= \frac{\sum_{\kp ,\lp} A_{kl, \kp \lp} m_{\kp \lp, I}}{2^{n/2}} \\
     &= \frac{\sum_{\kp} A^1_{k \kp} m^1_{k,I}} {2^{n_1/2}} \frac{\sum_{\lp} A^2_{l \lp} m^2_{l,I}} {2^{n_2/2}} 
     = d^1_k d^2_l \; ,
\end{align}
where for $i=1,2$ we define $d^i_k = \frac{\sum_{\kp} A^i_{k\kp} m^1_{k,I}} {2^{n_i/2}}$.}
Using the independence of the noise in \eqref{eq:proof_temp1}, we get
\begin{align} \label{eq:proof_temp2}
    A'^1 \otimes A'^2 = \alpha A^1 \otimes A^2 + (1-\alpha) \text{diag}(d^1) \mathbbm{1} \otimes\text{diag}(d^2) \mathbbm{1} \; .
\end{align}
Multiplying \eqref{eq:proof_temp2} on the left by $I \otimes \mathbf{1}_{D_1}^T$ and on the right by $I \otimes \mathbf{1}_{D_2}$, we get 
\begin{align}   \label{eq:A1_degen}
    A'^1 = \alpha A^1 +  (1-\alpha) \text{diag}(d^1) \mathbbm{1} \; .
\end{align}
A similar computation yields
\begin{align}   \label{eq:A2_degen}
    A'^2 = \alpha A^2 +  (1-\alpha) \text{diag}(d^2) \mathbbm{1} \; .
\end{align}
Using $A = A^1\otimes A^2$ and substituting \eqref{eq:A1_degen} and \eqref{eq:A2_degen} into \eqref{eq:proof_temp1} yields
\begin{align*}
& \alpha A^1\otimes A^2 + (1-\alpha) (\text{diag}(d^1) \mathbbm{1}_{D_1})\otimes (\text{diag}(d^2) \mathbbm{1}_{D_2}) = \nonumber \\
& (\alpha A^1 + (1-\alpha)  \text{diag}(d^1) \mathbbm{1}_{D_1}) (\alpha A^2 +  (1-\alpha) \text{diag}(d^2) \mathbbm{1}_{D_2}) \; .
\end{align*}
Rearranging the above gives
\begin{align*}
   \alpha(1-\alpha) (A^1  - \text{diag}(d^1) \mathbbm{1}_{D_1})(A^2 - \text{diag}(d^2) \mathbbm{1}_{D_2}) = 0 \; .
\end{align*}
Since none of $A, A^1, A^2$ can be the erasure channel by assertion, we must have $\alpha = 1$ and hence $A = A'$.

\end{proof}

\subsubsection{Algorithm to fix the gauge}
By Theorem~\ref{thm:independence_uniqueness}, no two noise matrices can be both block independent and be related by the gauge relation \eqref{eq:gauge}. Recall that Algorithm~\ref{alg:general_decoding} returns a candidate noise matrix $A'$ that is related to the true noise matrix $A$ by the relation
\begin{equation}\label{eq: gauge inv 2}
A = \alpha A' + (1-\alpha) \text{diag}(d') \mathbbm{1} \; .
\end{equation} 
Our goal is to find $\alpha$ such that the matrix $A$ decomposes as $A = A^1 \otimes A^2$.
Define the operations 
\begin{align}
&\text{T}_{1}[\cdot] = (\mathbf{1}_{D_1}^T \otimes I_{D_1}) (\cdot)(\mathbf{1}_{D_1} \otimes I_{D_2}) \; ,\\
&\text{T}_{2}[\cdot] = (I_{D_1} \otimes \mathbf{1}_{D_2}^T) (\cdot)(I_{D_1} \otimes \mathbf{1}_{D_2}) \; .
\end{align}
Since $A = A^1 \otimes A^2$ we have
\begin{align}   \label{eq:partial_tracing}
&\text{T}_{2}[A] = A^1 = \alpha \text{T}_{2}[A'] + (1-\alpha) \text{T}_{2}[ \text{diag}(d') \mathbbm{1}] \; , \\
&\text{T}_{S_1}[A] = A^2 = \alpha \text{T}_{1}[A'] + (1-\alpha) \text{T}_{1}[\text{diag}(d') \mathbbm{1}] \; .
\end{align}
Using \eqref{eq:partial_tracing} in  \eqref{eq: gauge inv 2}  we get
\begin{align}
 &( \alpha \text{T}_{2}[A'] + (1-\alpha) \text{T}_{2}[ \text{diag}(d') \mathbbm{1}])\\
 & \quad \otimes (\alpha \text{T}_{1}[A'] + (1-\alpha) \text{T}_{1}[\text{diag}(d') \mathbbm{1}]) \nonumber \\
 &= \alpha \tilde{A} + (1-\alpha) \text{diag}(d) \mathbbm{1} \; .
\end{align}
Since $A$ is not the erasure channel, we can assume $\alpha \neq 0$ and get 
\begin{align} \label{eq:finding_gaugE_{I}ndependence}
&\alpha ( \text{T}_{2}[A']\otimes  \text{T}_{1}[A'] -  \text{T}_{2}[A']\otimes \text{T}_{1}[ \text{diag}(d') \mathbbm{1}] \nonumber \\
&\qquad - \text{T}_{2}[ \text{diag}(d') \mathbbm{1}]\otimes \text{T}_{1}[A'] +  \text{T}_{2}[A']\otimes\text{T}_{1}[ \text{diag}(d') \mathbbm{1}] \nonumber \\
& =  \text{T}_{2}[A']\otimes\text{T}_{1}[ \text{diag}(d') \mathbbm{1}] + \text{T}_{2}[ \text{diag}(d') \mathbbm{1}]\otimes \text{T}_{2}[A'] \nonumber \\ 
&\qquad -  \text{diag}(d) \mathbbm{1}- A' \; ,
\end{align}
 The equation above is a matrix equality of the type $\alpha M^1 = M^2$, so we just need to find a matrix element $M^1_{ij} \neq 0$ such that $\alpha = M^2_{ij}/M^1_{ij}$. \\

\subsection{Linearly representable prior information} \label{app:linear_information}
Let $b_{S}^i, \ i=N_S +1,\ldots 4^n-1$ and $b_A^i, \ i=N_A+1,\ldots, D$ be any set of vectors that span the space orthogonal to $b_S^i, \ i=1\ldots, N_S$ and $b_A^i, \ i=1, \ldots, N_A$ respectively. Then simultaneous tomography can be performed by constructing canonical linear operators $E^{ij}$ which we describe below. Let $\Bm$ be the matrix of coefficients of the POVM given by $[\Bm]_{k,Q} = m_{kQ}$. By independence of the POVM, the matrix $\Bm$ has full rank. Therefore, we can construct vectors $\tilde{b}^j$ such that 
\begin{align}   \label{eq:tilde_b_construction}
    \sum_{Q \in \bc_R \cup I} \tilde{b}^j_Q m_{kQ} = b^j_{A,k} \; , ~~ k \in [D] \; , ~j = N_A+1, \ldots, D \; .
\end{align}
Let $E^{ij}$ be the linear operator such that its matrix representation using the bases $\bc_L, \bc_R$ is given by 
\begin{align}
    E^{ij}_{PQ} = b_{S,P}^i \tilde{b}^j_Q \; , \quad \forall P \in \bc_R \; , Q \in \bc_L \; .
\end{align}
Then, similar to \eqref{eq:z_def} and \eqref{eq:zPQ_def}, we can compute the following quantities using linear combinations of observations 
\begin{align}
    z_k^{ij} &= z_k^{I} + \sum_{\substack{\kp \in [D], \\ P \in \bc_R, Q \in \bc_L} } s_P E_{PQ}^{ij} A_{k \kp} m_{\kp Q} \nonumber \\
    & = z_k^{I} + (\sum_{P \in \bc_R} b_{S,P}^i s_P) ( \sum_{\kp \in [D]} A_{k \kp}\sum_{Q \in \bc_L} \tilde{b}^j_Q m_{\kp Q}) \nonumber \\
     & = z_k^{I} + (\sum_{P \in \bc_R} b_{S,P}^i s_P) ( \sum_{\kp \in [D]} A_{k \kp} (b^j_{A, \kp } - \tilde{b}_I^j m_{\kp I})) \nonumber \\
     &= z_k^{I} + (\sum_{P \in \bc_R} b_{S,P}^i s_P) ( \sum_{\kp \in [D]} A_{k \kp} b^j_{A, \kp } - \tilde{b}^j_I z_k^{I}) \; .
\end{align}
We also construct the linear operators required to fix the gauge denoted by $E^j$ and defined as
\begin{align}
    E^{j}_{PQ} = b_{S,P}^1 \tilde{b}^j_Q \; , \quad \forall P \in \bc_R \; , Q \in \bc_L \; .
\end{align}
and the corresponding computable quantity
\begin{align}
    z_k^{j} &= z_k^{I} + (\sum_{P \in \bc_R} b_{S,P}^1 s_P) ( \sum_{\kp \in [D]} A_{k \kp} b^j_{A, \kp } - \tilde{b}^j_I z_k^{I}) \nonumber \\
    &= z_k^{I}  + d_S^1 ( \sum_{\kp \in [D]} A_{k \kp} b^j_{A, \kp } - \tilde{b}^j_I z_k^{I}) \label{eq:gauge_linear} \; .
\end{align}
We will need to exploit the fact that $A$ is not the erasure channel to perform simultaneous tomography. This is given in the lemma below.
\begin{lemma} \label{lem:non_zero_linear_information}
    Assume that $A$ is not the erasure channel. Then there exists $k \in [D]$ for which there is a $j \in [D]$ such that  
    \begin{align}
        \sum_{\kp \in [D]} A_{k \kp} b^j_{A, \kp } - \tilde{b}^j_I z_k^{I} \neq 0 \; .
    \end{align}
\end{lemma}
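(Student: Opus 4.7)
The plan is to argue by contraposition: assume that the displayed quantity vanishes for every pair $(k,j)$ and deduce that $A$ must be the erasure channel, contradicting the hypothesis.

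First, I would extend the construction of $\tilde{b}^j$ from $j\in\{N_A+1,\ldots,D\}$ to every $j\in[D]$. By linear independence of the POVM, combined with Lemma~\ref{lem:left_null_space}, the matrix $\Bm$ has full row rank $D$, so for each target vector $b^j_A\in\mathbb{R}^D$ the linear system $\sum_{Q} \tilde{b}^j_Q \, m_{kQ} = b^j_{A,k}$ admits a solution $\tilde{b}^j$; I fix one such solution for every $j$. The collection $\{b_A^j\}_{j\in[D]}$ is, by construction, a basis of $\mathbb{R}^D$.

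Next, I would translate the vanishing hypothesis into a statement about the image of $A$. Using the identity $z^I_k = \sum_{\kp} A_{k\kp}\, m_{\kp I}/2^{n/2} = (A\, m_I)_k/2^{n/2}$, where $m_I=(m_{kI})_k$, the equations $\sum_{\kp} A_{k\kp} b^j_{A,\kp} = \tilde{b}^j_I \, z^I_k$ holding for every $k$ are equivalent to the vector identity
\begin{align*}
A \, b_A^j \;=\; \frac{\tilde{b}^j_I}{2^{n/2}}\, A \, m_I, \qquad \forall\, j \in [D].
\end{align*}
Since the right-hand side lies in the one-dimensional subspace $\mathrm{span}\{A\, m_I\}$ for every $j$, and since $\{b_A^j\}_{j}$ spans $\mathbb{R}^D$, this forces $\mathrm{Im}(A)=\mathrm{span}\{A\, b_A^j : j\in[D]\}\subseteq \mathrm{span}\{A\, m_I\}$, so $\mathrm{rank}(A)\le 1$.

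Finally, I would observe that a nonzero rank-one column-stochastic matrix must take the form $A = v\,\mathbbm{1}^T$ for some probability vector $v$: writing $A=u w^T$, the column-stochasticity $\mathbbm{1}^T A = \mathbbm{1}^T$ forces $w\propto \mathbbm{1}$, and nonnegativity together with normalization then pin down $v$. Such an $A$ satisfies $A_{ki}=A_{kj}$ for all $k,i,j$ and is precisely the erasure channel, contradicting the hypothesis of the lemma. The step that I expect will require the most care is the unambiguous interpretation of $\tilde{b}^j_I$ when $\tilde{b}^j$ is non-unique in the overcomplete setting; once a particular solution is fixed for each $j$, the remaining argument is the straightforward dimension count sketched above.
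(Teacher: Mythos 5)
Your proof is correct, but it reaches the erasure-channel contradiction by a different route than the paper. The paper's proof fixes a single output index $k$, uses the summation identity $\tilde{b}^j_I \propto \sum_{k} b^j_{A,k}$ (obtained by summing the defining relation for $\tilde{b}^j$ over $k$ and using $\sum_k m_{kQ}=0$ for traceless $Q$) to rewrite the vanishing condition as the statement that the linear functional $b \mapsto \sum_{\kp} A_{k\kp} b_{\kp} - \bigl(\sum_{\kp} b_{\kp}\bigr) z_k^I$ annihilates the basis $\{b_A^j\}$, and then inverts on that basis to conclude $A_{k\kp} = z_k^I$ for every $\kp$ --- i.e.\ each row of $A$ is constant, row by row, with the constant explicitly identified. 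You instead package the vanishing over all $(k,j)$ into the vector identities $A b_A^j \in \mathrm{span}\{A m_I\}$, deduce $\mathrm{rank}(A)\le 1$, and then invoke the structure of rank-one column-stochastic matrices to land on $A = v\mathbbm{1}^T$. Both arguments are contrapositive and both ultimately rest on $\{b_A^j\}_{j\in[D]}$ being a basis; yours trades the paper's explicit identification of the constant row value for a cleaner global dimension count, at the cost of an extra (easy) lemma about rank-one stochastic matrices. On the point you flag as delicate: the non-uniqueness of $\tilde{b}^j$ is genuinely harmless for your argument once a solution is fixed, but it is worth noting that the component $\tilde{b}^j_I$ is in fact canonically determined by $b_A^j$ --- summing the construction over $k$ kills all traceless components and pins $\tilde{b}^j_I$ in terms of $\sum_k b^j_{A,k}$ --- which is exactly the observation the paper leads with.
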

\begin{proof}
Summing \eqref{eq:tilde_b_construction} for $k \in [D]$ we get that 
\begin{align}
    \tilde{b}^j_I = \sum_{k \in [D]} b^j_{A,k} \; .
\end{align}
Suppose that for a given $k \in [D]$ we have
\begin{align}
    \sum_{\kp} A_{k \kp} b_{A,\kp}^j = \tilde{b}^j_I z_{k}^{I} = (\sum_{\kp \in [D]} b_{A,\kp}^j) z_{k}^{I} \; , ~~~~ \forall j \in [D] \nonumber.
\end{align}
Since by construction the vectors $b_{A}^j$ form a complete basis of the $D$ dimensional space, we can invert the above relation to get $A_{k \kp} = z_{k}^{I}, ~~ \forall \kp \in [D]$, implying that $A$ is the erasure channel. Then proof follows from using the fact that $A$ is not the erasure channel.

\end{proof}

\begin{algorithm}[!ht]
Compute $z_k^{I}~~ \forall ~k \in [D]$ using \eqref{eq:z_def} \\

 \tcp{\textcolor{purple}{Step 1.}}
 \For{$j \in [M_A]$} {
 \For{$k \in  [D]$}{
 Compute $\sum_{\kp \in [D]} A_{k \kp} b^j_{A, \kp } - \tilde{b}^j_I z_k^{I} = \frac{z_k^{j} - z_k^{I}}{d_S^1}$.
 }}
 \tcp{\textcolor{purple}{Step 2. Find noise matrix $A$}}
\For{$k \in  [D]$}{
Solve the system of equations to obtain $A_{k \kp } ~~ \forall \kp \in [D]$: \\
$\sum_{\kp \in [D]} A_{k \kp} b^j_{A, \kp } - \tilde{b}^j_I z_k^{I} = \frac{z_k^{j} - z_k^{I}}{d_S^1}, ~~~~ j \in M_A$, \\
$\sum_{\kp \in [D]} A_{k \kp} c^j_{A, \kp } = d^j_A, ~~~~ j \in [N_A]$.
}
\tcp{\textcolor{purple}{Step 3. Find state $\rho$}}
Choose $k \in [D]$ and $j \in [D]$ as per Lemma~\ref{lem:non_zero_linear_information} such that $\sum_{\kp \in [D]}A_{k \kp} b^j_{A, \kp } - \tilde{b}^j_I z_k^{I} \neq 0$. \\
\For{$i \in [M_S]$}{
Compute $\sum_{P \in \bc_R} s_P b_{S,P}^i = \frac{z_k^{ij} - z_k^{I}}{\sum_{\kp \in [D]}A_{k \kp} b^j_{A, \kp } - \tilde{b}^j_I z_k^{I}}$
}
\textbf{Return} $\{s_P \mid P \in \bc_R,~~A\} $.

\caption{Simultaneous tomography with linear prior information}
\label{alg:linear_information}
\end{algorithm}

\subsection{Independent ancilla qubits} \label{app:ancilla}
By running Step~$1$ of Algorithm~\ref{alg:general_decoding} on the ancilla qubits, we can identify $P \in \bc_R^a$ and $i \in [D^a]$ such that 
\begin{align}
  z_k^{P i} - z_k^{I} =  s_{P}  \sum_{\kp} A_{k \kp}^a C_{\kp i}^a \neq 0 \; .
\end{align}
Construct the set of operators $\{E_Q \mid Q \in \bc_R^r\}$ given by
\begin{align}   \label{eq:eliminators_ancilla}
    &E_{Q} \Kket{P  \otimes Q} - \Bar{M}_i^{a} \otimes I \in (\bc_L^{a} \otimes I)^{\perp} \; , \nonumber \\
    &E_{Q } \Kket{P' \otimes Q'} \in (\bc_L^{a} \otimes I)^{\perp} ~~~~ \text{if} ~ P' \neq P ~\text{or} ~ Q' \neq Q \; .
\end{align}
The above operators can be constructed using linear combinations in \eqref{eq:linear_operators} since $P \otimes Q$ and $\Bar{M}_i^{anc} \otimes I$  are both traceless. Measuring the ancilla qubits independently is equivalent to tracing out the other measurements and effectively using the measurement operators $\{M_i^a \otimes I \mid i \in [D^{anc}]\}$. Using the operators in \eqref{eq:eliminators_ancilla}, we can compute the quantities for all $Q \in \bc_R^r$
\begin{align}
    z_k^{PQi} - z_k^{I} = s_P s_Q  \sum_{\kp} A_{k \kp}^a C_{\kp i}^a  \; .
\end{align}
We can then recover the state coefficients $\{s_Q \mid Q \in \bc_R^r\}$ as
\begin{align}
    s_Q = \frac{z_k^{PQi} - z_k^{I}}{ z_k^{P i} - z_k^{I} } \; .
\end{align}
We now compare to the setting in \cite{Lin_SPAM_2021} where they consider the hierarchical setting described in Sec.~\ref{sec: hierarchical dec}, with one ancillary qubit and orthogonal POVMs. In this case, the basis for the ancilla can be chosen to be $\{I, M_1^a-M_2^a\}$ where $M_1^a-M_2^a$ is traceless by definition of POVMs. Since the POVMs are orthogonal, any unitary operator $U_i$ such that 
\begin{align}
    U_i ((M_1^a - M_2^a) \otimes M_i^r) U_i^{\dagger} =  (M_1^a - M_2^a) \otimes I \; ,
\end{align}
will satisfy the conditions of the operator described in \eqref{eq:eliminators_ancilla}. The construction of this operator when the POVM is the computational $Z$-basis can be found in \cite{Lin_SPAM_2021}.

\subsection{Denoising the binary symmetric channel} \label{app:BSC}
For each $Q \in \pc^n$ the vector of measurement operator coefficients $\Bm_Q$ is an eigenvector of $A$. Since $\Bm_I = \mathbbm{1}$ we have $Am_I = \Bm_I$. For every other $Q \in \pc_Z^n$ the coefficients are given by $\Bm_Q = \otimes_{i=1}^n [1, \pm 1]$. Let $S_Q \subset [n]$ be the set of indices for which the component of $\Bm_Q$ is $[1,-1]$. Then we have $A \Bm_Q = \lambda_{S_Q} \Bm_Q$, where $\lambda_S = \prod_{i \in S_Q} (1-2p_i)$. Thus for any operator $\bphi$ the corresponding measurements are given by 
\begin{align}
    \by = \sum_{P,Q} s_P \bphi_{PQ} A \Bm_{Q}  = \sum_{P,Q} s_P \bphi_{PQ} \lambda_{S_Q} \Bm_{Q} \; .
\end{align}
This allows for the less expensive denoising in Algorithm~\ref{alg:binary_symmetric}. The generator gate set is given by
\begin{align}
    \gc_{BSC}  = \{I, SWAP(i,j), CNOT(i,j) \mid i,j \in [n]\}.
\end{align}

\begin{algorithm}
\SetAlgoLined
\tcp{Step 1:~Identifying non-zero state coefficients}
Initialize $\mathcal{S}_{nz} = \emptyset$\;
\For{$P \in \pc_Z^n \setminus I^n$}{
Compute $s_P \lambda_P  = [1\pm 1]^T \by$ \;
If $[1\pm 1]^T \by \neq 0$, update $\mathcal{S}_{nz} \leftarrow \mathcal{S}_{nz} \cup \{P\}$ \;
}
\tcp{Step 2:~Decode non-zero state coefficients}
\For{$P  = \bigotimes_{i=1}^n P_i \in \mathcal{S}_{nz}$}{
Identify $S = i \in [n] \mid P_i = Z$. \;
\uIf{$|S|=1$}
{Let $S = \{i\}$ and pick any $j \neq i$ \;
Obtain measurements $m_{P,G}$ for each gate $G \in \{I, CNOT(i,j), SWAP(i,j)\}$ \;
Decode $s_P \gets \frac{m_{P,I}m_{P,SWAP(i,j)}}{m_{P,CNOT(i,j)}}$ \;
} 
\Else{
Let $S = \{i_1 < i_2, \ldots < i_{|S|}\}$ \;
Obtain measurements $m_{P,G}$ for each gate $G \in \{I, CNOT(i_1,i_2), \ldots, CNOT(i_{|S|},1)\}$ \;
Compute $(1-2p_{j}) = \frac{m_{P,I}}{m_{P,CNOT(i,j)}}$\;
Decode $s_P \gets \frac{m_{P,I}}{\prod_{i \in S} (1-2p_i)}$\;
}
{\bf return} $\{s_P \mid P \in \pc_Z^n\}$\;
}
\caption{Denoising for binary symmetric output noise.}
\label{alg:binary_symmetric}
\end{algorithm}

\end{document}